\newif\iflong
\newtheorem{theorem}{Theorem}[section]
\newtheorem{corollary}[theorem]{Corollary}
\newtheorem{lemma}[theorem]{Lemma}
\theoremstyle{definition}
\newtheorem{definition}{Definition}
\theoremstyle{remark}
\def\FF  {{\cal F}}
\def\GG  {{\cal G}}
\def\de    {\delta}
\def\al    {\alpha}
\def\la    {\lambda}
\def\empt {\emptyset}
\def\sem  {\setminus}
\def\subs {\subseteq}
\def\SN      {{\sc SND}}                         
\def\RTF    {{\sc SD-$3$-RSND}}       
\def\SEC    {{\sc $4$-Subset $3$-EC}}   
\def\SFC     {{\sc Set Function Cover}}    
\title{Improved Approximations for Relative Survivable Network Design}
\author{Michael Dinitz\thanks{Supported in part by NSF awards CCF-1909111 and CCF-2228995.} \\ Johns Hopkins University\\ \texttt{mdinitz@cs.jhu.edu}\and
Ama Koranteng\thanks{Supported in part by an NSF Graduate Research Fellowship and NSF award CCF-1909111}  \\ Johns Hopkins University\\ \texttt{akorant1@jhu.edu} \and Guy Kortsarz\\Rutgers University, Camden\\ \texttt{guyk@camden.rutgers.edu}
\and Zeev Nutov \\The Open  University, Israel  \\\texttt{nutov@openu.ac.il}}
\begin{document} 

\begin{titlepage}
\maketitle
\thispagestyle{empty}

\begin{abstract}
One of the most important and well-studied settings for network design is edge-connectivity requirements. This encompasses uniform demands such as the Minimum $k$-Edge-Connected Spanning Subgraph problem, as well as nonuniform demands such as the Survivable Network Design problem (SND). In a recent paper by [Dinitz, Koranteng, Kortsarz APPROX '22], the authors observed that a weakness of these formulations is that we cannot consider fault-tolerance in graphs that have small cuts but where \emph{some} large fault sets can still be accommodated. To remedy this, they introduced new variants of these problems under the notion \emph{relative} fault-tolerance. Informally, this requires not that two nodes are connected if there are a bounded number of faults (as in the classical setting), but that two nodes are connected if there are a bounded number of faults \emph{and the two nodes are connected in the underlying graph post-faults}.  

This seemingly minor change in definition makes the problem dramatically more complex, and so the results in [Dinitz, Koranteng, Kortsarz APPROX '22] are quite limited. For the Relative Survivable Network Design problem (RSND) with non-uniform demands, they are only able to give a nontrivial result when there is a single demand with connectivity requirement $3$---a non-optimal $27/4$-approximation. We strengthen this result in two significant ways: We give a $2$-approximation for RSND where \emph{all requirements are at most $3$}, and a $2^{O(k^2)}$-approximation for RSND with a single demand of \emph{arbitrary value} $k$. To achieve these results, we first use the ``cactus representation'' of minimum cuts to give a lossless reduction to normal SND. Second, we extend the techniques of [Dinitz, Koranteng, Kortsarz APPROX '22] to prove a generalized and more complex version of their structure theorem, which we then use to design a recursive approximation algorithm.
\end{abstract}

\end{titlepage}

\setcounter{page}{1}

\section{Introduction}
Fault-tolerance has been a central object of study in approximation algorithms, particularly for network design problems where the graphs we study represent physical objects which might fail (communication links, transportation links, etc.). In these settings it is natural to ask for whatever object we build to be fault-tolerant.  The precise definition of ``fault-tolerance'' varies in different settings, but a common formulation is edge fault-tolerance, which typically takes the form of edge connectivity. Informally, these look like guarantees of the form ``if up to $k$ edges fail, then the nodes I want to be connected are still connected.'' For example, consider the following classical fault-tolerance problem. 

\begin{definition}
In the Survivable Network Design problem (SND, sometimes referred to as Generalized Steiner Network) we are given an edge-weighted graph $G$ and demands $\{(s_i, t_i, k_i)\}_{i \in [\ell]}$, and we are supposed to find the minimum-weight subgraph $H$ of $G$ so that there are at least $k_i$ edge-disjoint paths between $s_i$ and $t_i$ for every $i \in [\ell]$.  In other words, for every $i \in [\ell]$, if fewer than $k_i$ edges fail then $s_i$ and $t_i$ will still be connected in $H$ even after failures.
\end{definition}

The Survivable Network Design problem is well-studied (see~\cite{WGMV95,Jain01,CT00,GGTW09} for a sample); notably, Jain gives a 2-approximation algorithm for the problem in a seminal paper~\cite{Jain01}. Beyond SND, edge fault-tolerance has been studied in many related network design contexts, with the $k$-Edge Connected Spanning Subgraph, Fault-Tolerant Group Steiner Tree, Fault-Tolerant Spanner, and Fault-Tolerant Shortest Paths problems being just a few examples (see~\cite{CT00, KKN12, DR20, BGLP16} for examples). These and other classical fault-tolerance problems, including the Survivable Network Design problem, are \emph{absolute} fault-tolerance problems---if up to \emph{k} objects fail, the remaining graph should function as desired. This differs from the stronger notion of fault-tolerance introduced in~\cite{ap22}, called \emph{relative} fault-tolerance. Relative fault-tolerance makes guarantees that rather than being absolute (``if at most $k$ edges fail the network still functions'') are relative to an underlying graph or system (``if at most $k$ edges fail, the subgraph functions just as well as the original graph post-failures''). 

Relative fault-tolerance is therefore a natural generalization of absolute fault-tolerance: If the input graph has the desired connectivity, then the relative fault-tolerance and absolute fault-tolerance definitions are equivalent.  However, if the input graph does not have the requested connectivity, then relative fault-tolerance allows us to return a solution with interesting and nontrivial guarantees while absolute fault-tolerance forces us to return nothing. In this way, relative fault-tolerance overcomes a significant weakness of absolute fault-tolerance. 

This relative fault-tolerance definition was inspired by a recent line of work on relative notions of fault-tolerance for graph spanners and emulators~\cite{CLPR10,DK11,BDPW18,BP19,DR20,BDR21,BDR22,BDN22,BDN23}. In these settings, the goal is generally to find existential bounds and algorithms to achieve them, rather than to do optimization. In \cite{ap22}, by contrast, their approach takes the point of view of optimization and approximation algorithms. With this notion of fault-tolerance in network design, the authors of \cite{ap22} define the relative version of the Survivable Network Design problem.

\begin{definition} \label{def:RSND}
In the Relative Survivable Network Design problem (RSND), we are given a graph $G = (V, E)$ with edge weights $w : E \rightarrow \mathbb{R}_{\geq 0}$ and demands $\{(s_i, t_i, k_i)\}_{i \in [\ell]}$.  A feasible solution is a subgraph $H$ of $G$ where for all $i \in [\ell]$ and $F \subseteq E$ with $|F| < k_i$, if there is a path in $G \setminus F$ from $s_i$ to $t_i$ then there is also a path in $H \setminus F$ from $s_i$ to $t_i$.  Our goal is to find the minimum weight feasible solution.
\end{definition}

Note that if $s_i$ and $t_i$ are $k_i$-connected in $G$ for every $i \in [\ell]$, then RSND is exactly the same as SND. If in $G$ there exists some $i \in [\ell]$ such that $s_i$ and $t_i$ are \emph{not} $k_i$-connected, then although there is no solution for SND, there is a meaningful RSND solution. 
We note that the fault-tolerance we achieve is really ``one less'' than the given number (there are strict inequalities in the definitions).  This is ``off-by-one'' from the related relative fault-tolerance literature (see the definitions in~\cite{CLPR10}), but makes the connection to the traditional SND formulation cleaner.

There has been recent work on a related network design model introduced by Adjiashvili \cite{ad20flex,ad20ft,boyd22,chekuri22augmenting,bansal22}. In this model, $E$ is partitioned into ``safe'' and ``unsafe'' edges. Informally, in the Flex-SNDP problem we are given a graph $G = (V,E)$ with edge costs and with functions $p,q: V \times V \rightarrow \mathbb{Z}^+$. We must return a min cost subgraph such that for each vertex pair $u,v$, they are $p(u,v)$-connected after deleting any subset of up to $q(u,v)$ unsafe edges. Like RSND, Flex-SNDP is a natural generalization of SND. However, it is an absolute fault-tolerance problem since it does not consider the underlying connectivity of the input. No polynomial-time approximation algorithms are known for general Flex-SNDP, though there has been recent work on several special cases \cite{boyd22,chekuri22augmenting,bansal22, chekuri22racke}.

\subparagraph*{The results of~\cite{ap22}.}
Although relative fault-tolerance is a natural and promising generalization of fault-tolerance, the results given in \cite{ap22} for the RSND problem are quite limited. Outside of a $2$-approximation algorithm for the special case in which all demands are identical, \cite{ap22} is only able to give algorithms for some of the simplest RSND special cases. First, they give an extremely simple $2$-approximation for the RSND special case where all demands are in $\{0,1,2\}$ (also known as the $2$-RSND problem). The algorithm falls out of the observation that there is only a difference between a relative demand of $2$ and an absolute demand of $2$ when there is a cut of size one separating the vertex demand pair. Cuts of size one are very easy to handle, allowing for a simple and straightforward reduction to SND.

Cuts of size two or larger are significantly more difficult to reason about, and so the 2-RSND algorithm does not extend to larger demands. As a result of this more complex cut structure, \cite{ap22} is only able to handle demands of value 3 
(and reason about the size two cuts between them) when there is only \emph{a single demand}, with value 3 (also known as the SD-3-RSND problem). Despite this being an extremely restricted special case of RSND, the algorithm and analysis given by \cite{ap22} are quite complex, depending on a careful graph decomposition involving ``important separators'' (a concept from fixed-parameter tractability~\cite{M11}).  Moreover, this algorithm only achieved a $27/4$-approximation for the problem, far from the $2$-approximation (or even exact algorithm) that one might hope for.

The limited results of \cite{ap22} show that while relative fault-tolerance is an attractive notion, applying it to the Survivable Network Design problem significantly changes the structure of the problem and makes it difficult to reason about and develop algorithms for. For example, while \cite{ap22} only gives a $27/4$-approximation for SD-3-RSND, there is an exact polynomial-time algorithm for the SND equivalent (by a simple reduction to the Min-Cost Flow problem). So one might worry that relative fault-tolerance is simply too difficult of a definition, and the results of~\cite{ap22} are limited precisely because nothing is possible for even slightly more general settings.

\subsection{Our Results and Techniques}
In this paper, we seek to alleviate this worry by providing improved bounds for generalizations of the settings considered in~\cite{ap22}.  In particular, we study two natural generalizations of the SD-3-RSND problem (which~\cite{ap22} provided a $27/4$-approximation for). First, rather than only a single demand with value at most 3, can we handle an \emph{arbitrary} number of demands that are at most 3? Secondly, in the single demand setting, instead of only handling a demand of at most 3, can we generalize to \emph{arbitrary} values?

\subsubsection{3-RSND}
We begin with the setting where all demands are at most $3$, but there can be an arbitrary number of such demands.  We call this the 3-RSND problem.  Note that, as discussed, there are no previous results for this setting, and the most related result is a $27/4$-approximation if there is only a single such demand~\cite{ap22}.  We prove the following theorem.  
\begin{theorem}
\label{t:3SND}
There is a polynomial-time 2-approximation for the 3-RSND problem.
\end{theorem}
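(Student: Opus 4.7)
My approach is to losslessly reduce 3-RSND to standard SND and then apply Jain's 2-approximation for SND~\cite{Jain01}. The starting observation is a simple reformulation of Definition~\ref{def:RSND}: a subgraph $H$ is feasible for a 3-RSND demand $(s_i,t_i,k_i)$ if and only if every $s_i$-$t_i$ cut in $H$ of size strictly less than $k_i$ is also an $s_i$-$t_i$ cut in $G$. (If $H$ contained a cut $F$ separating $s_i$ from $t_i$ with $|F|<k_i$ that is not an $s_i$-$t_i$ cut in $G$, then $G\setminus F$ would connect $s_i$ to $t_i$ while $H\setminus F$ would not, violating RSND.) So the task reduces to expressing this ``no new small cut'' constraint as a family of ordinary SND demands without losing any optimum.

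Demands with $k_i \le 2$ are handled via the simple reduction from~\cite{ap22}: $k_i=1$ becomes a Steiner-forest demand and $k_i=2$ is reduced to SND by forcing all $s_i$-$t_i$ bridges of $G$ into $H$ and placing an SND demand $(s_i,t_i,2)$ inside the appropriate 2-edge-connected component. The interesting case is $k_i=3$. For each such demand I compute the $s_i$-$t_i$ min-cut value $\lambda_i$ in $G$; if $\lambda_i\ge 3$, I just add a standard SND demand $(s_i,t_i,3)$. If $\lambda_i\in\{1,2\}$, I invoke the cactus representation of minimum cuts (Dinitz--Karzanov--Lomonosov) to enumerate the min cuts that separate $s_i$ from $t_i$. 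These lie on a ``path'' in the cactus; each such separating cut $\{e,e'\}$ (or bridge $\{e\}$) must appear entirely in $H$, so I force these edges in via high-value SND demands on their endpoints. Between two consecutive separating min cuts along the cactus path one sits inside a single 3-edge-connected piece of $G$, and within each such piece I add a standard SND demand of value $3$ between the local ``entry'' and ``exit'' vertices. Doing this for every demand produces a single SND instance on $G$.

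The central technical claim is that this SND instance is equivalent to the original 3-RSND instance: the set of feasible subgraphs is preserved in both directions, so the two optima coincide. Combined with Jain's 2-approximation for SND this yields Theorem~\ref{t:3SND}. I expect the main obstacle to be proving this equivalence. The ``3-RSND feasible $\Rightarrow$ SND feasible'' direction needs a structural claim that every $s_i$-$t_i$ cut of size at most $2$ in any RSND-feasible $H$ must already be a min cut of $G$---this is exactly what the completeness of the cactus enumerates, so one must argue that forcing every separating cactus edge into $H$ is in fact necessary. The converse direction---that forcing all separating cactus edges plus 3-edge-connectivity inside each cactus piece prevents any small ``phantom'' $s_i$-$t_i$ cut from appearing in $H$---is where cut-pasting arguments along the cactus path are required. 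Making the entry/exit vertices well-defined when $s_i$ and $t_i$ share a cactus cycle, and avoiding inconsistencies when multiple demands traverse overlapping cactus paths, is where the bookkeeping becomes delicate, but the cactus structure is rigid enough that the pieces compose cleanly.
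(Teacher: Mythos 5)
Your proposal takes essentially the same route as the paper: represent the small $s_i$-$t_i$ cuts via the cactus, force the cut edges, replace each relative $3$-demand by ordinary $3$-demands inside the $3$-edge-connected pieces along the cactus chain, and then run Jain's $2$-approximation on the resulting single SND instance. The one place your sketch is too coarse---which you yourself flag as the delicate part---is that a single ``entry/exit'' demand per piece is not enough: the pieces lying at the junction of two cycles in the chain, or those containing $s_i$ or $t_i$, can have up to four attachment vertices, and the paper shows you need ordinary $3$-demands forming a clique (equivalently a spanning tree) on \emph{all} attachment vertices of each such piece. The paper establishes this via its Lemma~\ref{l:iff} and the reduction to the $4$-Subset $3$-EC instance, proving that feasibility inside a central piece is exactly $3$-edge-connectivity of its attachment set; once you make that correction your plan matches the paper's argument.
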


To obtain this theorem, we use entirely different techniques from those used in~\cite{ap22}.  Most notably, we use the \emph{cactus representation} of the global minimum cuts (which in this case are 2-cuts) of the input graph. The cactus representation of global minimum cuts is well studied and has been leveraged in a number of settings (see~\cite{cactus, DN95,P2000,DW98,H1997} for a sample). While it can be defined and constructed for more general connectivity values, for our setting we can construct the cactus representation by contracting components with certain connectivity properties.  This results in a so-called \emph{cactus graph}, which at a high level is a ``tree of cycles'': every pair of cycles intersects on at most one component in the construction.  This cactus graph now has a simple enough structure that it allows us to reduce the original problem to a simpler problem in each of the contracted components.  That is, we are able to show that certain parts of the cactus are essentially ``forced'', while other parts are not necessary, so the only question that remains is what to do ``inside'' of each cactus vertex, i.e., each component.  This reduction makes the connectivity demands inside each component more complicated, but fortunately we are guaranteed 3-connectivity between special vertices inside the component.  Hence we can use Jain's $2$-approximation for SND~\cite{Jain01} without worrying about the relative nature of the demands.  

\subsubsection{SD-\emph{k}-RSND}
Our second improvement is orthogonal: rather than allowing for more demands of at most $3$, we still restrict ourselves to a single demand but allow it to be a general constant $k$ rather than $3$. We call this the SD-$k$-RSND problem. As with the $3$-RSND problem discussed earlier, there are no known results for this problem.
We prove the following theorem: 
\begin{theorem}
\label{t:SD_RSND}
There is a polynomial-time $2^{O(k^2)}$-approximation for the SD-$k$-RSND problem.
\end{theorem}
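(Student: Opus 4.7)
The plan is to extend the approach of~\cite{ap22} (which handled only $k=3$) to arbitrary $k$ via a generalized structure theorem on small $s$--$t$ cuts plus a recursive algorithm driven by that theorem. As a preliminary reformulation, a subgraph $H$ is feasible for SD-$k$-RSND if and only if, for every $s$--$t$ cut $C$ of $G$, $H$ contains $\min(|C|,k)$ edges of $C$. All cuts of size at least $k$ can be covered via Jain's $2$-approximation for SND~\cite{Jain01}, so the real difficulty lies in the family of ``small'' cuts of size strictly less than $k$, and this is what the new machinery must control.

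First, I would prove a generalized structure theorem describing how the small $s$--$t$ cuts sit inside $G$. The version in~\cite{ap22} was specifically tailored to $k=3$ and used $2$-cuts together with important separators in the sense of~\cite{M11}; for general $k$, I would show that the small $s$--$t$ cuts can be stratified by size into an ``onion-peeling'' with at most $k-1$ layers, and that within each stratum the inclusion-minimal cuts form a tractable family via submodularity (in the spirit of the Picard--Queyranne lattice of minimum cuts). The theorem should also bound the number of canonical small cuts the algorithm ever needs to branch on, to the tune of $2^{O(k)}$ per recursive step.

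Second, I would use this structure theorem to design a recursive algorithm. Compute a minimum $s$--$t$ cut of size $c$; if $c \geq k$ then the instance is already an SND instance and Jain's algorithm gives a $2$-approximation directly. Otherwise every feasible solution must include all $c$ edges of that cut, so we commit them and split $G$ into a side $S \ni s$ and a side $T \ni t$. On each side we recurse on a residual RSND-like instance with effective demand $k-c$: the $c$ committed edges act as $c$ parallel edges from $s$ to the opposite-side supernode, so the remaining deficit on every relevant cut drops by exactly $c$. The recursion depth is at most $k$, at each level the structure theorem exposes at most $2^{O(k)}$ candidate cuts to branch on, and the base case contributes Jain's factor of $2$; compounding these costs over the recursion yields the claimed $2^{O(k^2)}$ approximation.

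The main obstacle will be the compositionality of the structure theorem and the residual subproblem. One must argue both that the two sides of a committed cut again give SD-RSND-style instances whose OPTs sum to a quantity bounded by (a small multiple of) the original OPT, so that the per-level approximation factors compose multiplicatively rather than additively, and that the small-cut structure inherited on each side still satisfies the hypotheses of the induction with a strictly smaller effective $k$. Formulating the residual instance so that both properties hold simultaneously, and exhibiting the right tree-like small-cut decomposition that the recursion traverses, is the technical crux of the argument.
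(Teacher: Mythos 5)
Your high-level framework---a generalized structure theorem on small $s$-$t$ cuts plus a recursion of depth roughly $k$ with $2^{O(k)}$ branching per level---matches the paper's plan, and your observation that cuts of size less than $k$ must be fully included in any feasible solution is correct. However, there is a genuine and central gap in the way you set up the residual subproblems, and it is precisely the gap that the paper's Structure Theorem is designed to close.

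You claim that after committing a minimum $s$-$t$ cut $C$ of size $c$ and contracting the far side to a supernode, the residual instance on each side is ``SD-$(k-c)$-RSND'' because ``the remaining deficit on every relevant cut drops by exactly $c$.'' This is false. Consider an $st$-set $A$ with $s \in A \subsetneq S$ (the $s$-side). The cut $\delta_G(A)$ uses only some subset $C' \subseteq C$ of the committed edges, say $|C'| = c'$, and $c'$ can be anywhere from $1$ to $c$. After committing $C$, the uncovered requirement on this cut is $\min(k, |\delta_G(A)|) - c'$, not $\min(k, |\delta_G(A)|) - c$; moreover the edges of $C$ do not become $c$ parallel edges emanating from $s$ but rather attach to potentially distinct vertices in $S$. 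So a single residual demand value $k-c$ does not correctly encode the constraints; the residual demand is a function of which endpoints of the separator the cut touches. This is exactly why the paper's structure theorem (Theorem~\ref{thm:genk_characterize}) enumerates demands over all pairs of subsets $X \subseteq V^h_{(i,\ell)}$, $Y \subseteq V^h_{(i,r)}$ with the formula $d = \max(0,\, k + |S_X| + |S_Y| - |S^h_{i-1}| - |S^h_i|)$, and the $2^{O(k)}$ branching you gesture at is precisely this enumeration of subset pairs; your proposal never derives or even states this formula.

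A second, related issue is the shape of the decomposition. You propose a binary split on a single minimum cut and recurse on each side, but the paper instead builds a linear \emph{chain} of components, at each level using \emph{important} separators (not arbitrary minimum cuts), and then nests chains at higher connectivity levels inside each component. Important separators are essential: the ``only-if'' direction of the structure theorem (in particular Lemmas~\ref{lem:s_connectivity_G} and~\ref{lem:t_connectivity_G}, which guarantee that every separator vertex is actually reachable from both ends, so that bad fault sets can be extended to global ones) relies on the minimality and extremality of important separators, and a generic minimum $s$-$t$ cut need not have these properties. A binary split also does not yield the chain structure that the paper's inductive ``fault subchain'' argument (Lemmas~\ref{lem:r_induction} and~\ref{lem:the_end}) is built around; one would have to re-derive analogous compositionality lemmas from scratch and it is not clear they hold. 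In short: your plan's skeleton is right, but the key technical content of the paper---the subset-indexed demand formula that correctly captures partial separator usage, the hierarchical chain decomposition built from important separators, and the fault-subchain induction that proves sufficiency---is absent, and the simplification you propose (``demand drops uniformly by $c$'') is incorrect.
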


To prove this, we extend the technique used by \cite{ap22} for the $k=3$ case. They construct a ``chain'' of 2-separators (cuts of size 2 that are also important separators) so that in each component in the chain, there are no 2-cuts between the incoming separator and the outgoing separator. They are then able to use this structure to characterize the connectivity requirement of any feasible solution restricted to that component. To extend this technique, we use important separators of size up to $k-1$ to carefully construct a \emph{hierarchy} of chains. The hierarchy has $k-1$ levels of nested components, so that for each component in the $i$th level of the hierarchy, there are no cuts of size at most $i$ between the incoming and outgoing separators.  There are multiple ways of constructing such a hierarchy, but we prove that a particular construction yields a hierarchy with a number of useful but delicate properties within a single level and between different levels of the hierarchy. With these properties, we can characterize the complex connectivity requirement of any feasible solution when restricted to a component in the hierarchy. Once we have this structure theorem, we approximate the optimal solution in each component of the hierarchy via a recursive algorithm.

\subsubsection{Simplification of \emph{k}-EFTS}
The $k$-Edge Fault Tolerant Subgraph problem ($k$-EFTS) is the special case of RSND where all demands are identical: every two nodes have a demand of exactly $k$. A $2$-approximation for $k$-EFTS was recently given in~\cite{ap22} via a somewhat complex proof; in particular, they defined a new property called \emph{local weak supermodularity} and used it to show that Jain's iterative rounding still gave the same bounds in the relative setting.  In Appendix~\ref{app:efts} we give a simplification of this proof.  It turns out that local weak supermodularity is not actually needed, and a more classical notion of $\mathcal F$-supermodularity suffices.  This allows us to reduce to previous work in a more black-box manner.  

\section{Preliminaries}
We will consider the following special cases of RSND (Definition~\ref{def:RSND}): 
\begin{itemize}
    \item The {\sc $k$-Relative Survivable Network Design} problem ({\sc $k$-RSND}) is the special case of RSND where $r(s,t) \leq k$ for all $s,t \in V$. In this paper we consider the case $k = 3$, namely, the 3-RSND problem.
    \item The {\sc Single Demand $k$-Relative Survivable Network Design} problem ({\sc SD-$k$-RSND}) is the special case of RSND where $r(s,t) = k$ for exactly one vertex pair $s,t \in V$ and there is no demand for any other vertex pairs (equivalently, all other demands are 0). We consider the full SD-$k$-RSND problem for arbitrary $k$.
    \item The {\sc $k$-Edge-Fault-Tolerant-Subgraph} problem ({\sc $k$-EFTS}) is the special case of RSND where $r(s,t) = k$ for all $s,t \in V$.  
\end{itemize}

For each of the listed RSND problem variants, we will use the following notation and definitions throughout.
Let $G=(V,E)$ be a 
(multi-)graph and $H$ a spanning subgraph (or an edge subset) of $G$. For $A \subs V$, let $\de_H(A)$ denote the set of edges in $H$ with exactly one endpoint in $A$, and let $d_H(A)=|\de_H(A)|$ be their number. Additionally, let $G[A]$ denote the subgraph of $G$ induced by the vertex set $A$. Let $s,t \in V$. We say that $A$ is an {\bf $st$-set} if $s \in A$ and $t \notin A$,
and that $\de_G(A)$ (or $\de_E(A)$) is an {\bf $st$-cut} of $G$ (induced by $A$). An $st$-cut $\de_G(A)$ (or an $st$-set $A$) is {\bf $G$-minimal} if $\de_G(A)$ contains no other $st$-cut of $G$.
Assuming $G$ is connected, it is easy to see that 
$\de_G(A)$ is $G$-minimal if and only if both $G[A]$ and $G[V \sem A]$ are connected. One can also see that if an $st$-cut $X \subseteq E$ is $G$-minimal, then $X = \de_G(A)$ for some $A \subseteq V$. Finally, let $\lambda_G(s,t)$ denote the size of a min $st$-cut in $G$. 

By Theorem 17 of \cite{ap22}, we may assume without loss of generality that the input graph $G$ of any RSND instance is 2-edge-connected (which we will call ``$2$-connected").

\section{2-Approximation for 3-RSND (and SD-3-RSND)}
\label{sec:3RSND}
Given an RSND instance, we say that a vertex pair $s,t$ is a {\bf $k$-demand} if $r(s,t)=k$. We call a $k$-demand {\bf relative} if the minimum $st$-cut has size less than $k$; that is, if $\la_G(s,t) < k$. A $k$-demand is then {\bf ordinary} if $\la_G(s,t) \geq k$. Recall that {\RTF}  has only one demand $st$, and that it is a $3$-demand.   
The edges of any size 2 $st$-cut, or $2$-$st$-cut, belong to any feasible solution so we call them {\bf forced edges}. As a result, we can assume without loss of generality that they have cost $0$. 

\subsection{Overview}
We first give an overview of the theorems and proofs in this section. In order to prove Theorem~\ref{t:3SND}, we will show that we can replace a single {\em relative} $3$-demand by an equivalent set of {\em ordinary} $3$-demands. More formally, we will prove the following.

\begin{theorem} \label{t:D}
Given an {\RTF} instance, 
there exists a polynomially computable set of ordinary $3$-demands, $D$,  
such that for any $H \subs E$ that contains all forced edges, $H$ is a feasible {\RTF} solution if and only if $H$ satisfies all demands in $D$. 
\end{theorem}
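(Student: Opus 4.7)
The plan is to use the cactus representation of the minimum cuts of $G$ (which are $2$-cuts in the nontrivial case) to transform the single relative $3$-demand into a collection of ordinary $3$-demands. If $\la_G(s,t) \geq 3$ the demand is already ordinary, so $D = \{(s,t,3)\}$ suffices; the interesting case is $\la_G(s,t) = 2$. In that case, I would partition $V$ into its $3$-edge-connected components---maximal $U \subs V$ with $\la_G(u,v) \geq 3$ for all $u, v \in U$. Using the basic property that every $2$-cut of $G$ has each $3$-edge-connected component entirely on one side, contracting the components yields a cactus graph $\tilde G$, i.e., a tree of cycles. Since $T(s) \neq T(t)$, there is a unique block path in $\tilde G$ from $T(s)$ to $T(t)$ traversing cycles $Y_1, \ldots, Y_m$ and super-vertices $U_0 = T(s), U_1, \ldots, U_m = T(t)$, where $U_{i-1}, U_i \in Y_i$.

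For each super-vertex $U_i$ on the block path I would define the \emph{port set} $P_i \subs U_i$ as the set of endpoints in $U_i$ of the $G$-edges corresponding to the cactus edges of $Y_i$ and $Y_{i+1}$ that are incident to $U_i$, also including $s$ in $P_0$ and $t$ in $P_m$. Then set
\[
D \;=\; \{(u,v,3) : u \neq v,\ u, v \in P_i \text{ for some } 0 \leq i \leq m\}.
\]
Because $P_i \subs U_i$ and $U_i$ is a $3$-edge-connected component, every $u, v \in P_i$ satisfies $\la_G(u,v) \geq 3$, so each demand in $D$ is ordinary.

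To prove the equivalence, the backward direction (forced edges $\subs H$ and $H$ satisfies $D$ implies $H$ feasible) proceeds by a path-threading argument: for any $F \subs E$ with $|F| \leq 2$ such that $G \sem F$ retains an $s$-$t$ path, at least one arc of each path cycle $Y_j$ survives in $H \sem F$ (otherwise $F$ would contain a $2$-$st$-cut of $G$), and within each super-vertex $U_i$ the pairwise $3$-connectivity among its ports (guaranteed by $D$) provides a path from an intact entry port to an intact exit port in $H \sem F$. Concatenating these routes yields an $s$-$t$ path in $H \sem F$. The forward direction asserts that feasibility of $H$ implies each ordinary demand in $D$: if $\la_H(u,v) < 3$ for some $(u,v,3) \in D$ with $u, v \in P_i$, one produces from a minimum $uv$-cut of $H$ a related fault set $F$ of size $\leq 2$ that separates $s$ from $t$ in $H$ but not in $G$ (the latter because $\la_G(u,v) \geq 3$ rules out a $2$-$st$-cut of $G$ supported on the internal edges of $U_i$), contradicting feasibility of $H$.

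The main obstacle will be the forward direction: showing that a small $uv$-cut within the $3$-edge-connected component $U_i$ must project onto an $s$-$t$ separation of $H$. The key structural fact is that since all cactus edges of $Y_1, \ldots, Y_m$ are forced into $H$, vertex $s$ reaches every entry port of $U_i$ along the surviving arcs of the cactus and symmetrically $t$ is reached from every exit port; once $u$ and $v$ are separated in $H$ by a $\leq 2$-edge cut, the entry ports and exit ports are forced onto opposite sides of the cut, yielding the desired $s$-$t$ separation inside $H$ without disturbing the connectivity inside $G$.
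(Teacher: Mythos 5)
Your overall strategy matches the paper's: form the cactus by contracting $3$-edge-connected components, take the block path of cycles from $T(s)$ to $T(t)$, and place a clique of ordinary $3$-demands on the ``ports'' (attachment nodes) of the components along the way. However, there is a genuine gap: you only place demands on the ports of the articulation super-vertices $U_0, \dots, U_m$ --- the paper's \emph{central} $3$-classes --- and omit the \emph{non-central} $3$-classes, i.e.\ the other super-vertices that sit on a path cycle $Y_j$ without being a cut vertex between consecutive cycles. The paper's $D$ also contains, for every such non-central $3$-class $W$ with attachment nodes $u,v$, the demand $(u,v,3)$, and this is not redundant.

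To see why your $D$ is too weak, consider a single cycle $Y_1$ with three super-vertices $U_0 = T(s)$, $W$, $U_1 = T(t)$ in cyclic order, with cycle edges $e_1 = (U_0,W)$, $e_2 = (W,U_1)$, $e_3 = (U_1,U_0)$, and let $u,v$ be the endpoints in $W$ of $e_1,e_2$. Your $D$ consists of cliques on the ports of $U_0$ and $U_1$ only; it says nothing about $u,v$. Now take any $H$ that contains all forced edges and satisfies your $D$, but in which $H[W]$ has only a single $uv$-edge cut $C$ of size $1$. Let $F = C \cup \{e_3\}$, so $|F| = 2$. Since $u,v$ lie in a $3$-edge-connected component, $\la_{G[W]}(u,v) \ge 2$, so in $G \sem F$ a $uv$-path inside $W$ survives and hence $s,t$ remain connected. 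But in $H \sem F$ both arcs of $Y_1$ fail --- the $e_3$ arc because $e_3 \in F$, and the $W$-arc because $u,v$ are separated in $H[W]\sem C$ --- so $s$ and $t$ are disconnected. Thus $H$ is infeasible despite satisfying your $D$. Your backward-direction sketch silently assumes that the ``surviving arc'' of each $Y_j$ is actually traversable in $H\sem F$, which is exactly what fails when the arc passes through a non-central $W$ that your demands don't constrain.

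The fix is straightforward and is what the paper does: for each non-central $st$-relevant $3$-class $W$ with its two attachment nodes $u,v$, add $(u,v,3)$ to $D$. This demand is ordinary (since $u,v$ lie in a common $3$-edge-connected component), and together with the forced cycle edges it is exactly equivalent to requiring two edge-disjoint $uv$-paths inside $H[W]$. With these demands added, your threading argument in the backward direction goes through, and the forward direction follows the same contradiction pattern you sketched for the $U_i$'s.
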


This theorem reduces {\RTF} to the ordinary $3$-{\SN} problem (that is, the special case of SND where all demands are at most 3). In fact, this also gives us a lossless reduction from {\sc 3-RSND} to $3$-{\SN}: Given a {\sc 3-RSND} instance, we include the forced edges of all $3$-demands into our solution, replace each relative $3$-demand by an equivalent set of ordinary demands, and obtain an equivalent ordinary $3$-{\SN} instance. Since {\SN} admits approximation ratio $2$, this reduction from {\sc 3-RSND} to $3$-{\SN} implies Theorem~\ref{t:3SND}.

We will also show that {\RTF} is approximation equivalent to certain instances of a special case of $3$-{\SN}. Before we define this special case, we must give a definition. A vertex subset $R$ is a {\bf $k$-edge-connected subset} in a graph $H$ if 
$\la_H(u,v) \geq k$ for all vertex pairs $u,v \in R$. 
Since the relation $\{(u,v) \in V \times V: \mbox{no } (k-1)\mbox{-cut separates } u,v\}$ is transitive,  
this is equivalent to requiring that $\la_H(u,v) \geq k$ for pairs $u,v$ that form a tree on $R$. We will prove that {\RTF} is approximation equivalent to special instances of the following problem: 

\begin{center} \fbox{\begin{minipage}{0.965\textwidth} \noindent
\underline{\SEC}  \\
{\em Input:}  \  \ A graph $J=(V',E')$ with edge costs, and a set $R \subs V'$ of at most $4$ terminals. \\
{\em Output:}   A min-cost subgraph $H$ of $J$, such that $R$ is $3$-edge-connected in $H$. 
\end{minipage}}\end{center}

More specifically, we will prove the following. 

\begin{theorem} \label{t:reduction}
Let $s$ and $t$ be vertices in $J=(V',E')$, where $J$ is the input graph to an instance of {\SEC}. {\RTF} admits approximation ratio $\rho$ if and only if 
{\SEC} 
with the following properties {\em (A,B)} 
admits approximation ratio $\rho$:
\begin{enumerate}[(A)]
\item
$d_J(s)=d_J(t)= 2$ and $R$ is the set of neighbors of $s,t$.
\item
If $d_{J}(A)=2$ for some $st$-set $A$, then $A=\{s\}$ or $A=V' \sem \{t\}$. Namely, if $F$ is a set of $2$ edges of $J$ such that $J \sem F$ has no $st$-path, then $F=\de_{J}(s)$ or $F=\de_{J}(t)$.  
\end{enumerate}
\end{theorem}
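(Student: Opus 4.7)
The plan is to prove the two directions separately via approximation-preserving reductions. The easier direction shows that a $\SEC$ instance satisfying (A) and (B) can be viewed essentially as a $\RTF$ instance with identical optimum; the harder direction decomposes an arbitrary $\RTF$ instance into a collection of $\SEC$(A,B) subinstances along with polynomial-time-solvable endpoint subproblems. The central ingredient for both is a tight correspondence between the feasibility notions of the two problems after zeroing forced-edge costs.

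\textbf{Direction 1 ($\RTF$ $\rho$-approx $\Rightarrow$ $\SEC$(A,B) $\rho$-approx).} Given $(J,s,t,R)$ satisfying (A) and (B), first I observe that $\lambda_J(s,t)\leq d_J(s)=2$ by (A), and combined with $2$-connectedness of $J$ (which we may assume by Theorem 17 of~\cite{ap22}) and (B), $\lambda_J(s,t)=2$, so $(J,s,t)$ is a valid $\RTF$ instance with a relative 3-demand. I would zero the costs of the forced edges $\delta_J(s)\cup\delta_J(t)$ and prove the feasibility equivalence: a subgraph $H\supseteq\delta_J(s)\cup\delta_J(t)$ is $\RTF$-feasible if and only if $R$ is 3-edge-connected in $H$. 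For the nontrivial direction, if $\delta_H(B)$ is a $\leq 2$-cut of $H$ separating two vertices of $R$, then tracing the positions of $s,t$ through their forced edges to $R$ shows $\delta_H(B)$ corresponds to a 2-edge fault set $F$ with $s,t$ on opposite sides in $H \setminus F$ yet $s,t$ connected in $J\setminus F$ (using that $|\delta_J(B)|$ must exceed 2 by (B), since $B \notin \{\{s\}, V'\setminus\{t\}\}$), contradicting $\RTF$-feasibility. The reverse direction follows by case analysis on how $F$ with $|F|\leq 2$ meets the four-edge bottleneck $\delta_J(s)\cup\delta_J(t)$. Running a $\RTF$ $\rho$-approximation on $(J,s,t)$ then yields a $\SEC$-feasible subgraph of cost at most $\rho\cdot\mathrm{OPT}$.

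\textbf{Direction 2 ($\SEC$(A,B) $\rho$-approx $\Rightarrow$ $\RTF$ $\rho$-approx).} Given a $\RTF$ instance $(G,s,t)$ with $\lambda_G(s,t)=2$, I compute and uncross the minimum $st$-cuts to obtain a maximal chain of 2-$st$-sets $\emptyset=A_0\subsetneq A_1\subsetneq\cdots\subsetneq A_m\subsetneq A_{m+1}=V$, and decompose $V$ into slabs $S_i:=A_i\setminus A_{i-1}$. The boundary edges $\bigcup_i\delta_G(A_i)$ are forced and I zero their costs. For each intermediate slab $S_i$ with $1<i\leq m$ I build a $\SEC$ instance $J_i$ by contracting $A_{i-1}$ to a supernode $s_i$ and $V\setminus A_i$ to a supernode $t_i$, letting $R_i$ be the (at most four) neighbors of $\{s_i,t_i\}$; property (A) is immediate from the boundary being two 2-cuts, and property (B) holds because any additional 2-$st$-cut in $J_i$ would lift to a 2-$st$-set of $G$ strictly between $A_{i-1}$ and $A_i$, contradicting maximality of the chain. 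For each terminal slab, the same maximality argument gives $\lambda_{G[A_1]}(s,\{u_1,u_2\})\geq 3$ (and analogously for $t$), so after contracting the two boundary endpoints into a sink the subproblem becomes a min-cost $3$-edge-disjoint-paths instance, solvable exactly in polynomial time via min-cost flow. I invoke the $\SEC$(A,B) $\rho$-approximation on each $J_i$, solve the terminal slabs exactly, and output the union with the forced edges.

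\textbf{Main obstacle.} The crux is the slab-by-slab feasibility decomposition: proving that $H$ is $\RTF$-feasible if and only if it contains all forced edges and its restriction to each slab satisfies the corresponding local condition. The delicate cases are fault sets $F$ with $|F|=2$ whose edges span two different slabs or include boundary edges; here the fact that all boundary edges are automatically in $H$ restricts the ``damage'' inside any single slab to at most one edge, so local 3-edge-connectivity (in middle slabs) and $\lambda\geq 3$ connectivity to the endpoints (in terminal slabs) suffice to re-route an $st$-path through $H\setminus F$ whenever $G\setminus F$ permits one. Once feasibility decomposes, cost additivity across slabs (with zero-cost forced edges) delivers the claimed approximation ratio in both directions.
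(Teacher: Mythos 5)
Your overall plan mirrors the paper's: reduce in both directions via the common 2-edge-cut structure, zero out forced edges, and use a feasibility equivalence between ``$R$ is $3$-edge-connected in $H$'' and relative $3$-demand feasibility. Direction~1 (a $\rho$-approximation for {\RTF} yields one for {\SEC} with (A,B)) is essentially the paper's Lemmas~\ref{l:char}, \ref{l:3f}, \ref{l:3f'}, and your sketch of it is sound. For Direction~2, the paper decomposes $G$ via the cactus of $2$-cuts into $st$-relevant $3$-classes, and your maximal chain of $2$-$st$-sets produces exactly the same slabs (each slab is one $3$-class on the $st$-chain), so the decomposition itself is equivalent. The flaw is in how you handle the two terminal slabs.

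You claim the terminal slab containing $s$ reduces to a min-cost set of three edge-disjoint $s$-to-$\{u_1,u_2\}$ paths (contracting $u_1,u_2$ into a single sink), solvable exactly by min-cost flow. This is strictly weaker than what feasibility requires. The correct requirement, which is what the paper extracts, is $3$-edge-connectivity of the whole attachment set $R=\{s,u_1,u_2\}$ in $H$ --- in particular $\lambda_H(u_1,u_2)\geq 3$, not just three disjoint paths from $s$ into the merged sink. Concretely, let $A_1=\{s,u_1,u_2,v\}$ with internal edges $\{s,u_1\},\{s,u_2\},\{s,v\},\{u_1,v\},\{u_2,v\}$, boundary (forced) edges $\{u_1,w_1\},\{u_2,w_2\}$, and let the rest of $G$ be a $K_4$ on $\{w_1,w_2,w_3,t\}$; all non-forced edges have cost $1$. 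Here $\lambda_{G[A_1]}(s,\{u_1,u_2\})=3$, and every min-cost flow solution of value $3$ picks exactly four edges, namely $\{s,u_1\},\{s,u_2\},\{s,v\}$ and one of $\{v,u_1\},\{v,u_2\}$. Say it picks $\{v,u_1\}$. Then $d_H(u_2)=2$, and the fault set $F=\{\{s,u_2\},\{u_1,w_1\}\}$ (size $2$) leaves $s$ and $t$ connected in $G\setminus F$ (via $s$-$v$-$u_2$-$w_2$-$t$) but disconnects them in $H\setminus F$, since $v$ has no edge to $u_2$ in $H$. So the resulting $H$ is not {\RTF}-feasible, even though it is a legal output of your terminal-slab subroutine. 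In contrast, the paper modifies the central $3$-class containing $s$ by introducing a fresh degree-$2$ vertex $s$ attached to the old $s'=s$, obtaining a {\SEC}(A,B) instance whose terminal set $R$ contains $s'$ together with the two boundary endpoints; a $\rho$-approximation for {\SEC}(A,B) is then invoked on it, which forces the pairwise $3$-edge-connectivity that your flow instance misses.

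A secondary, smaller issue: your non-central intermediate slabs $J_i$ have $s_i$ and $t_i$ adjacent (the second arc of the cactus cycle survives contraction), so ``$R$ is the set of neighbors of $s,t$'' needs to be read as excluding $s_i,t_i$ themselves, giving $|R_i|=2$. With that reading the instance is equivalent to the paper's treatment of non-central classes (two edge-disjoint internal paths between the two attachment nodes), so this is benign, but it should be stated, since otherwise the {\SEC} instance has $s_i\in R_i$ with $d_{J_i}(s_i)=2$ and is vacuously infeasible. Your property-(B) argument for the intermediate slabs via chain maximality is correct.

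To repair Direction~2, treat the two terminal slabs the same way the paper does: form the central component $G_C$, rename the original $s$ (resp.\ $t$) and attach a fresh degree-$2$ copy, and solve the resulting {\SEC}(A,B) instance with the assumed $\rho$-approximation rather than with min-cost flow. The cost accounting then goes through unchanged because forced edges and non-central slab subproblems are solved exactly.
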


The general {\SEC} problem admits approximation ratio $2$, since it is a special case of {\SN}. However, it is not actually known whether {\SEC} is in P or is NP-hard. To the best of our knowledge, the complexity status of even {\sc $3$-Subset $2$-EC} is open. Both the {\SEC} and {\sc $3$-Subset $2$-EC} problems are related to the special case of {\SN} where the sum of the demands is bounded by a constant. This special case has been studied; for example, both~\cite{CLNV} and~\cite{FML} studied this and related questions. 
Nevertheless, the best known approximation algorithm for the {\SEC} problem gives a 2-approximation. As such, our $2$-approximation for $3$-RSND is the best we can hope for.  In the rest of this section, we prove Theorems \ref{t:reduction}, ~\ref{t:D}, and \ref{t:3SND}. \iflong \else All missing proofs can be found in Appendix~\ref{app:3RSND}. \fi

\subsection{Cactus Representation and Definitions} 
We first give some definitions and describe the cactus representation. The relation $\{(u,v) \in V \times V: \mbox{no } (k-1) \mbox{-cut separates } u,v\}$ is an equivalence, and we will call its equivalence classes {\bf $k$-classes}. We construct a {\bf cactus} $\GG$ by shrinking every nontrivial $3$-class (that is, every $3$-class with at least 2 nodes) of the input graph $G$. Note that since $G$ is $2$-connected, $\GG$ is a connected graph in which every two cycles have at most one node in common; see Fig.~\ref{f:sp}(a) for an example. Going forward, we will identify every $3$-class with the corresponding node of $\GG$.
The edge pairs that belong to the same cycle of $\GG$ are the $2$-cuts of $G$.

\begin{figure}
\centering 
\includegraphics[scale=0.98]{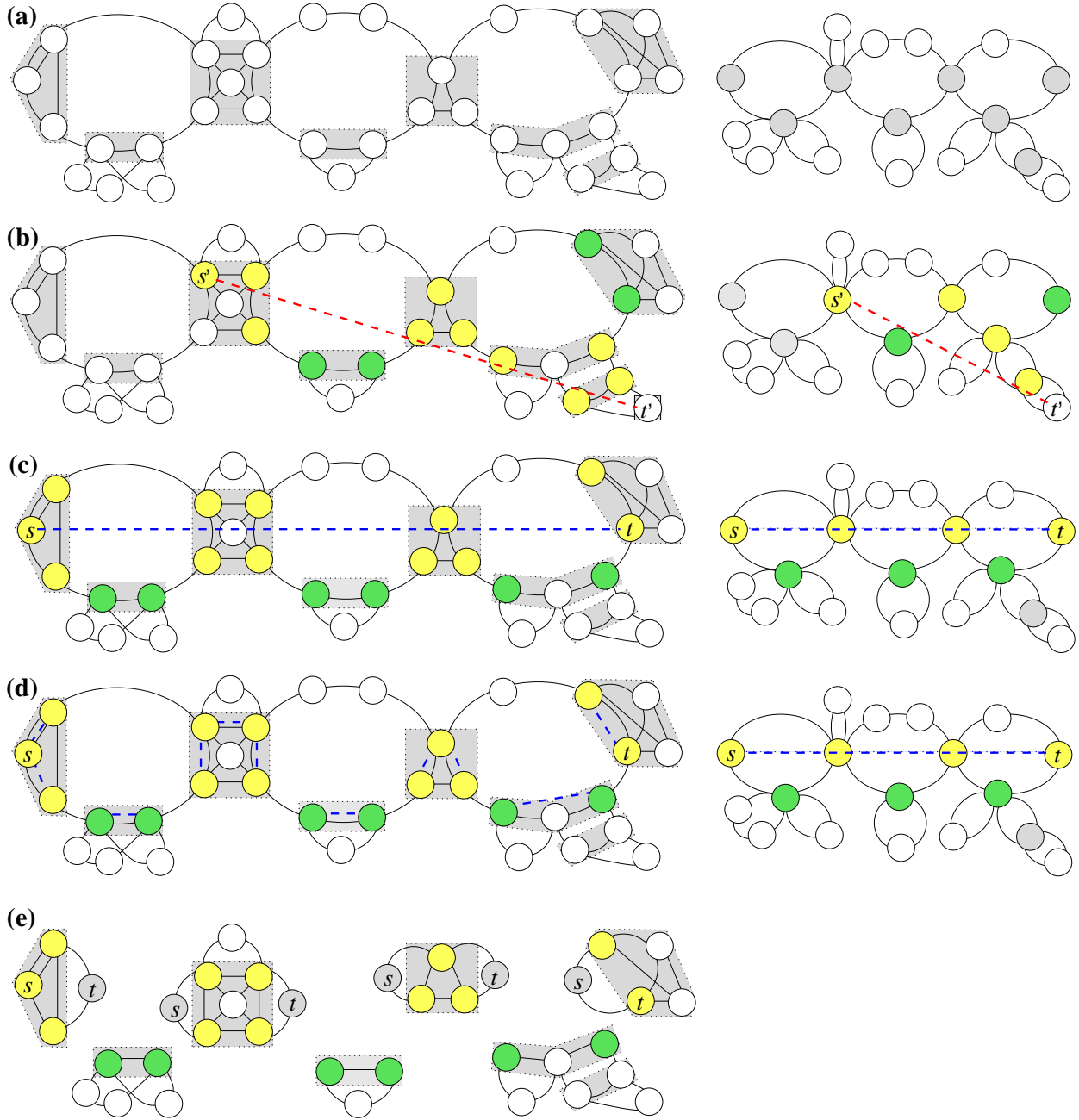}
\caption{
(a)~$3$-classes (shown by gray areas) and the cactus $\GG$ of $G$;
(b) $3$-demands $s't'$ (red dashed line) and the attachment  nodes of $s't'$-relevant $3$-classes, where 
attachment nodes of central classes are yellow and of non-central classes are green;
(c)~$3$-demands $st$ (blue dashed line) and the attachment nodes of $st$-relevant $3$-classes;
(d)~a set of ordinary $3$-demands that is equivalent to the relative $3$-demand $st$.
(e)~components w.r.t. the $3$-demand $st$.}
\label{f:sp}
\end{figure}

We will assume that vertex pair $st$ is a relative $3$-demand (otherwise we have an instance of SD-$3$-{\SN} which can be solved exactly in polynomial time).
We say that the {\bf $st$-chain of cycles} of $\GG$ consists of all the cycles (and their nodes) in $\GG$ that contain a $2$-$st$-cut. We say that the nodes, or $3$-classes, on these cycles are {\bf $st$-relevant}. Note that the set of edges in $\GG$ that are in the $st$-chain of cycles are the forced edges. We also say that an $st$-relevant $3$-class is {\bf central} if it contains $s$ or $t$, or belongs to two cycles of the $st$-chain. Additionally, the {\bf attachment nodes} of an $st$-relevant $3$-class are nodes in the $3$-class  that are either $s$ or $t$, or are the ends of the edges (the {\bf attachment edges}) that belong to some cycle in the $st$-chain of cycles. Since $G$ is $2$-connected, the number of attachment nodes in a non-central $3$-class is exactly $2$,
while the number of attachment nodes in a central $3$-class is between $2$ and $4$; see Fig.~\ref{f:sp}(b,c).

\subsection{Proof of Theorems~\ref{t:reduction}, \ref{t:D}, and~\ref{t:3SND}}
For the proof of Theorems~\ref{t:reduction} and \ref{t:D}, we associate with each $st$-relevant $3$-class, $C$, 
a certain graph $G_C$ which we call the {\bf component of $C$}, obtained as follows (see Fig.~\ref{f:sp}(e)):
\begin{itemize}
\item If $C$ is a non-central $3$-class then, in the graph obtained from $G$ by removing the two attachment edges of $C$, $G_C$ is the connected component that contains $C$.

\item If $C$ is a central $3$-class, then removing the attachment edges of $C$ results in at least one and at most two 
connected components that do not contain $C$ -- one contains $s$ and the other contains $t$, if any. 
We obtain $G_C$ from $G$ by contracting the connected component that contains $s$ into node $s$, 
and contracting the connected component that contains $t$ into node $t$. 
\end{itemize}

We now modify the central components $G_C$ to satisfy properties (A,B) from Theorem~\ref{t:reduction}. Consider some central $3$-class $C$, and consider its component $J=G_C$. 
If $J$ does not contain one of the original nodes $s$ or $t$, 
then it has properties (A,B) and no modification is needed.
If $J$ contains the original node $s$, then we rename $s$ to $s'$, add a new node $s$,
and connect new $s$ by two zero cost edges to $s'$. 
The obtained $J$ now has properties (A,B). 
A similar transformation applies if $J$ contains the original node $t$. 

The following lemma is about both the non-central components and these \textit{modified} central components; in the lemma, we show that for $H$ to be a feasible {\RTF} solution, it is necessary and sufficient to satisfy certain connectivity properties within each component. Note that a subgraph $H$ is a feasible solution to SD-$3$-RSND if for any $F \subs E$ with $|F| \leq 2$, the following holds: if there is an $st$-path in $G \setminus F$, then there is an $st$-path in $H \setminus F$.  

\begin{lemma} \label{l:iff}
Let $H $ be a subgraph of $G$, and suppose that $H$ contains all forced edges. Subgraph $H$ is a feasible {\RTF} solution if and only if for every component $J$, the following holds (see Fig,~\ref{f:sp}(e)).
\begin{itemize}
\item[{\em (i)}]
If $J$ is a non-central component, then $H[J]$  contains two edge-disjoint $uv$-paths,
where $u$ and $v$ are the two attachment nodes of $J$. 
\item[{\em (ii)}]
If $J$ is a central component, then $H[J]$ is a feasible solution to the {\RTF} instance in $J$ (with demand $r(s,t)=3$).
\end{itemize}
\end{lemma}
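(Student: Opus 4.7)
The plan is to prove the two directions separately, using the characterization that a subgraph $H$ containing all forced edges is a feasible {\RTF} solution if and only if every $st$-cut of $H$ of size at most $2$ is also an $st$-cut of $G$ (this uses that $G$ is $2$-connected and $\la_G(s,t) = 2$). The cactus of $2$-cuts localizes the analysis: the $2$-$st$-cuts of $G$ are precisely the separating pairs of chain-cycle edges on a common chain cycle. A key structural observation I will use repeatedly is that no edge internal to a non-central $G[J]$ lies on any chain cycle---$C$ sits on a unique chain cycle whose two edges at $C$ are removed in forming $J$, and since hangings of $C$ are non-$st$-relevant $3$-classes, no other chain cycle touches $V(J)$ by the cactus property that distinct cycles share at most one node.

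For the forward direction ($\Rightarrow$), I focus on (i) by contradiction; (ii) follows by a direct reduction. Fix a non-central $J$ with attachment nodes $u, v \in V(C)$ and attachment edges $e_1, e_2$. Since $\de_G(V(J)) = \{e_1, e_2\}$ and $\la_G(u,v) \geq 3$, I obtain $\la_{G[J]}(u,v) \geq 2$ (at most one of three edge-disjoint $uv$-paths can exit $V(J)$). Suppose for contradiction $\la_{H[J]}(u,v) \leq 1$, witnessed by $A \subs V(J)$ with $u \in A$, $v \notin A$, $|\de_{H[J]}(A)| \leq 1$. Since $e_1$ is forced, it lies in some $2$-$st$-cut $\{e_1, f^*\}$ for a chain-cycle edge $f^*$ on the non-$C$ arc of $C$'s chain cycle, with corresponding $st$-set $A^* \supseteq V(J)$. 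Setting $A' = A \cup (V \sem A^*)$ and carefully classifying the four types of boundary edges gives $\de_H(A') = \de_{H[J]}(A) \cup \{f^*\}$ of size at most $2$, whereas $\de_G(A') = \de_{G[J]}(A) \cup \{f^*\}$ has size at least $3$. Since $\de_H(A')$ contains at most one chain-cycle edge and internal $G[J]$ edges never lie on chain cycles, $\de_H(A')$ cannot coincide with any $2$-$st$-cut of $G$, contradicting feasibility. For (ii), any $F_J \subs E(J)$ with $|F_J| \leq 2$ for which $J \sem F_J$ retains an $st$-path maps, by discarding zero-cost edges, to $F \subs E(G)$ of size at most $2$ for which $G \sem F$ retains an $st$-path; feasibility of $H$ then yields an $st$-path in $H \sem F$ which restricts and re-attaches through the surviving zero-cost edges to an $st$-path in $H[J] \sem F_J$.

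For the backward direction ($\Leftarrow$), I take any $F$ with $|F| \leq 2$ for which $G \sem F$ retains an $st$-path and construct an $st$-path in $H \sem F$ by traversing the chain cycle by cycle. All chain edges lie in $H$; within each chain cycle, if $F$ avoids the interior of a non-central $J$ on the cycle, condition (i) provides two edge-disjoint $uv$-paths through $H[J]$, at least one of which survives the $\leq 2$ edges of $F$ elsewhere. If $F$ has edges inside $E(G[J])$, I route around $J$ via the other arc of its chain cycle, using the forced chain edges and invoking (i) at the adjacent non-central components and (ii) at central components. A symmetric routing handles chain edges of $F$, relying on the hypothesis that $F$ is not a $2$-$st$-cut of $G$ to guarantee that some chain-cycle arc remains usable.

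The main technical obstacle is the detailed edge accounting: for the forward direction, the four boundary types of $A'$ between $\{A, V \sem A^*\}$ and $\{V(J) \sem A, A^* \sem V(J)\}$ must be exhaustively classified to pinpoint $\de_H(A')$; for the backward direction, the routing argument must handle every configuration of $F$ spread across chain edges and multiple components, and in each case must exhibit a concrete chain-cycle arc that is intact in $H \sem F$. For central components the argument additionally requires treating the artificial zero-cost edges as transparent bridges between the fake endpoints and the real graph structure, and verifying that $st$-paths and $st$-cuts in $J$'s modified instance translate faithfully to $st$-paths and $st$-cuts in the original $G$.
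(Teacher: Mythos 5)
Your proposal takes essentially the same approach as the paper's proof: both directions are argued by identifying the $2$-$st$-cuts of $G$ with pairs of chain-cycle edges via the cactus, the ``only if'' direction proceeds by contradiction (extending a deficient cut inside a component to a deficient cut of the whole graph), and the ``if'' direction proceeds by arguing that no small fault set can separate $s$ from $t$ in $H$ without also doing so in $G$.

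A few small differences and omissions are worth noting. For (i) forward, the paper extends $A_J$ toward $s$ by absorbing all cactus components on $us$-paths that avoid $v$, whereas you extend $A$ by absorbing $V \setminus A^*$ where $A^*$ is the $s$-side of a chosen $2$-$st$-cut $\{e_1, f^*\}$. Both constructions work, but yours requires an implicit WLOG that $u$ is the attachment node closer to $t$ (so that $A^* \supseteq V(J)$ actually holds); with $u$ as the $s$-side attachment node, no $f^*$ on the opposite arc yields $A^* \supseteq V(J)$, and the construction breaks. The paper's WLOG (``$u$ before $v$ on any $st$-path'') handles this explicitly; you should be equally explicit. Also, you should pick $A$ to be $J$-minimal, otherwise it is not clear that $G[A']$ and $G[V \setminus A']$ are connected, which is needed to conclude that $G \setminus \de_H(A')$ retains an $st$-path. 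For (ii) forward you use a reduction (translate $F_J$ to $F$ by discarding artificial zero-cost edges, use feasibility of $H$ in $G$, translate the resulting $st$-path back); the paper instead lifts a $J$-minimal $st$-cut $\de_{H'}(A_J)$ with $d_{H'}(A_J) < \min\{3, d_J(A_J)\}$ (via Lemma~\ref{l:cover}) to a deficient cut $A$ in $G$ by taking the union with all preceding cycle components. The paper's route is cleaner since it sidesteps the path-translation subtleties you flag, and it is uniform with the (i) argument; your reduction is not wrong, but the claim that ``$J \setminus F_J$ has an $st$-path implies $G \setminus F$ has one'' and its converse deserve careful case analysis around the artificial endpoints. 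For the ``if'' direction, both your routing sketch and the paper's one-line argument are terse; the underlying point (any minimal $st$-cut of $H$ of size $\leq 2$ must, by (i) and (ii), consist of two chain-cycle edges on a common cycle and hence already cuts $G$) is the same.

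Overall the proposal is on target; the required fixes are local and of the same granularity as what the paper itself elides.
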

\iflong
\begin{proof}
First suppose that $H$ is a feasible {\RTF} solution. We will show that both (i) and (ii) are satisfied.
Suppose to the contrary that (i) does not hold for some non-central component $J$. Let $H' = H[J]$. Let $u$ and $v$ be the attachment nodes of $J$, and assume without loss of generality that in $G$, any (simple) $st$-path has $u$ before $v$, as in Fig,~\ref{f:sp}(a).
Then, there exists a $J$-minimal $uv$-cut, $\de_{H'}(A_J)$, such that $d_{H'}(A_J) \leq 1$ (recall that $H'$ does not include any of the attachment edges of its associated $3$-class). 
Let $A$ be the union of $A_J$ and all components that lie on some (simple) $us$-path of the cycle chain that do not contain $v$. It can be seen that
$d_H(A)=d_{H'}(A_J)+1 \leq 2$. Since $u$ and $v$ are 3-connected in $G$ (they are in the same $3$-class) and $A$ is a $uv$-set, $d_G(A) \geq 3$. It is also not hard to see that $G[A]$ and $G[V \sem A]$ are both connected, and thus $\de_G(A)$ is a $G$-minimal $st$-cut. This all means that $F = \de_{H}(A)$ is a fault set of size at most 2 such that there is an $st$ path in $G \setminus F$, but no $st$ path in $H \setminus F$.
This contradicts the feasibility of $H$.

Now suppose that (ii) does not hold for some central component $J$. Then, there must be a $J$-minimal $st$-cut, $\de_{H'}(A_J)$, such that $d_{H'}(A_J) < \min\{3,d_J(A_J)\}$ (by Lemma~\ref{l:cover}).
Let $A$ be the union of $A_J$ and all components of cycles that precede $J$ in the cactus chain.
Note that $d_H(A)=d_{H'}(A_J)$ and $d_G(A)=d_J(A_J)$. This gives $d_H(A) < \min\{3,d_G(A)\}$, contradicting the feasibility of $H$.

Now suppose that $H$ satisfies (i,ii). We will show that $H$ is a feasible {\RTF} solution.
Let $F \subs E$ such that $G \sem F$ has an $st$-path and $H \sem F$ has no $st$-path.
One can see that since $F$ cannot contain two edges of the same cactus chain cycle, we have that $F$ must have size at least 3; thus, if there is an $st$ path in $G \setminus F$ for $|F| \leq 2$, then there is an $st$ path in $H \setminus F$ as well.
\end{proof}
\fi

Suppose that for the special {\RTF} instances that arise in the central components 
we can achieve approximation ratio $\al$. 
Then, we can achieve ratio $\al$ for general {\RTF} by picking into our solution $H$ three types of edge sets.
\begin{enumerate}
\item
The forced edges.
\item
A min-cost set of $2$ edge-disjoint paths between the attachment nodes of each $st$-relevant non-central component. 
\item
An $\al$-approximate solution in each $st$-relevant central component.
\end{enumerate}

Note that edges picked in steps 1,2 do not invoke any cost in the approximation ratio, 
since by Lemma~\ref{l:iff} we actually pick parts of an optimal solution.
Thus we get that the approximability of {\RTF} is equivalent to the approximability 
of the very special instances that arise in the central components. We will now show that these special instances from the central components are in fact instances of {\SEC} with properties (A,B) from Theorem~\ref{t:reduction}, thus proving Theorem~\ref{t:reduction}.
We will consider only central components with $4$ attachment nodes;
other cases with $3$ or $2$ attachment nodes are similar. 

In what follows, let ${\cal I}$ be an {\RTF} instance on input graph $J$ with properties (A,B) (just as in our central components). 
Let $R=\{x,y,z,w\}$ where $x,y$ are the neighbors of $s$ and $z,w$ are the neighbors of $t$ \iflong(see Fig.~\ref{f:3f})\fi and let $H$ be a subgraph of $J$ that includes the four forced edges $sx,sy,zt$, and $wt$. \iflong We will need the following equivalent characterization of feasible solutions for our restricted case. \else Now we prove that $H$ is a feasible solution for ${\cal I}$ if and only if $R$ is 3-connected in $H$. Lemma~\ref{l:3f} gives the ``only-if'' direction, while Lemma~\ref{l:3f'} gives the ``if'' direction.

\iflong
\begin{lemma} \label{l:char}
Subgraph $H$ is a feasible solution for instance ${\cal I}$ if and only if for any $st$-set $A$ such that $d_H(A)=2$, $\de_H(A)=\de_J(s)$ or $\de_H(A)=\de_J(t)$.
\end{lemma}
\begin{proof}
Let $H$ be a feasible solution for ${\cal I}$ and let $A$ be an $st$-set with $d_H(A)=2$. We will show that either $\de_H(A)=\de_J(s)$ or $\de_H(A)=\de_J(t)$.
Let $F=\de_H(A)$. Then, $H \sem F$ has no $st$-path. Since $H$ is a feasible solution for ${\cal I}$, $J \sem F$ must also have no $st$-path. Component $J$ is $2$-connected, so $\de_H(A)=\de_J(A)$. By property (B) of Theorem~\ref{t:reduction}, we then have that $\de_J(A)=\de_J(s)$ or $\de_J(A)=\de_J(t)$.

Now suppose that for any $st$-set $A$ with $d_H(A)=2$, we have that $\de_H(A)=\de_J(s)$ or  $\de_H(A)=\de_J(t)$. We will show that $H$ is a feasible solution for ${\cal I}$. For any $st$-set $A$ such that $\de_H(A) \neq \de_J(s)$ and $\de_H(A) \neq \de_J(t)$, we have that $d_H(A) \geq 3$. Since $H$ includes the forced edges $sx,sy,zt,$ and $wt$, we also know that $\de_H(s)=\de_J(s)$ and $\de_H(t)=\de_J(t)$. Thus, we have that $d_H(A) \geq \min\{3,d_J(A)\} $ for all $A$ such that $A$ is an $st$-set. By Lemma~\ref{l:cover}, $H$ is a feasible solution for ${\cal I}$.
\end{proof}
\fi

\iflong
\begin{figure}
\centering 
\includegraphics{fv.pdf}
\caption{Illustration to the proof of Lemmas \ref{l:3f}, \ref{l:3f'}. Nodes in $A$ are yellow and in $V \sem A$ are green.}
\label{f:3f}
\end{figure}
\fi

\iflong Now, we will use Lemma~\ref{l:char} to prove that $H$ is a feasible solution for ${\cal I}$ if and only if $R$ is 3-connected in $H$. Lemma~\ref{l:3f} gives the ``only-if'' direction, while Lemma~\ref{l:3f'} gives the ``if'' direction. \fi

\begin{lemma} \label{l:3f}
If subgraph $H$ is a feasible solution for instance ${\cal I}$, then $R=\{x,y,z,w\}$ is a 3-edge-connected subset in $H$.
\end{lemma}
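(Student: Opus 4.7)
The plan is to argue by contradiction, leveraging Lemma~\ref{l:char}: for feasible $H$, every $st$-set $A$ with $d_H(A)\le 2$ is either $\{s\}$ or $V'\sem\{t\}$. Since $H$ contains the forced edges $sx,sy,zt,wt$, we have $d_H(s)=d_H(t)=2$, and feasibility of $H$ together with $\la_J(s,t)=2$ immediately yields $\la_H(s,t)\ge 2$ (no single edge removal disconnects $s,t$ in $J$, hence not in $H$ either). Suppose for contradiction some pair $u,v\in R$ has $\la_H(u,v)\le 2$; fix $A$ with $u\in A$, $v\notin A$, and $d_H(A)\le 2$. The goal is to produce from $A$ an $st$-set $A'$ with $d_H(A')\le 2$ that is neither $\{s\}$ nor $V'\sem\{t\}$, contradicting Lemma~\ref{l:char}.

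The argument splits on where $s,t$ lie relative to $A$. If $A$ (resp.\ $A^c$) is already an $st$-set, Lemma~\ref{l:char} forces $A\in\{\{s\},V'\sem\{t\}\}$; both are ruled out because $u\in R\cap A$ and $v\in R\cap A^c$ while $R\cap\{s,t\}=\emptyset$. Otherwise $s,t$ lie on the same side, WLOG both in $A$. Here I shift $s$ out, setting $A'=A\sem\{s\}$, so that $(A')^c=A^c\cup\{s\}$ is an $st$-set. Using $\de_H(s)=\{sx,sy\}$, a direct count gives the cut-shift identity
\[
 d_H(A')\;=\;d_H(A)+2a-2,\qquad a:=|\{x,y\}\cap A|.
\]
If $a=0$ then $d_H(A')\le 0$, so $H$ is disconnected between $s$ and $t$, contradicting $\la_H(s,t)\ge 2$. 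If $a=1$ then $d_H(A')\le 2$, so Lemma~\ref{l:char} applies to $(A')^c$; the option $\{s\}$ forces $A^c=\emptyset$ (contradicting $v\in A^c$), and the option $V'\sem\{t\}$ forces $A=\{s,t\}$ (contradicting $u\in R\cap A$, since $R\cap\{s,t\}=\emptyset$).

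The remaining subcase is $a=2$, i.e.\ $x,y\in A$. I then pivot and shift $t$, setting $A''=A\sem\{t\}$ and $c:=|\{z,w\}\cap A|$; the analogous identity gives $d_H(A'')=d_H(A)+2c-2$. The crucial observation is that $v\in R\cap A^c$ together with $x,y\in A$ forces $v\in\{z,w\}\cap A^c$, so $c\le 1$; the two subcases $c=0$ and $c=1$ are handled verbatim as $a=0$ and $a=1$ above, yielding the same two contradictions. The remaining case $s,t\notin A$ is symmetric via $A^c$. The main hurdle is the cut-shift identity together with the pivot observation: when shifting $s$ fails (at $a=2$), one can always shift $t$ because the $R$-member $v\in A^c$ is forced to be $z$ or $w$. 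Beyond that, the proof is careful bookkeeping enabled by the restrictive conclusion of Lemma~\ref{l:char} and by $R\cap\{s,t\}=\emptyset$.
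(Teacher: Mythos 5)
Your approach is genuinely different from the paper's: the paper does a direct case analysis on $|R\cap A|$, whereas you organize the argument around a cut-shift identity $d_H(A')=d_H(A)+2a-2$ and pivot to shifting $t$ when shifting $s$ fails ($a=2$), which is a neat way to systematize the casework. However, there is a real gap in how you invoke Lemma~\ref{l:char}. You paraphrase it as ``every $st$-set $A$ with $d_H(A)\le 2$ is either $\{s\}$ or $V'\setminus\{t\}$,'' but Lemma~\ref{l:char} is a statement about the \emph{cut}, not the \emph{set}: its conclusion is $\delta_H(A)=\delta_J(s)$ or $\delta_H(A)=\delta_J(t)$. These are not equivalent, because a feasible $H$ need not be connected. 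If $H$ leaves some vertex $p$ (with $p\notin\{x,y,z,w,s,t\}$) isolated, then $A=\{s,p\}$ is an $st$-set with $\delta_H(A)=\delta_J(s)$ but $A\neq\{s\}$. Consequently, in your $a=1$ and $c=1$ subcases the inferences ``the option $\{s\}$ forces $A^c=\emptyset$'' and ``the option $V'\setminus\{t\}$ forces $A=\{s,t\}$'' do not actually follow from Lemma~\ref{l:char}.

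The argument is salvageable without changing its skeleton, but you must reason at the cut level and use the forced edges together with the location of $u,v\in R$. For example, in the $a=1$ subcase, $(A')^c$ is an $st$-set with $d_H((A')^c)=2$, so Lemma~\ref{l:char} gives $\delta_H((A')^c)\in\bigl\{\{sx,sy\},\{zt,wt\}\bigr\}$. If it equals $\{sx,sy\}$, then $x,y\in A'$, so the $R$-vertex $v\in A^c\subseteq(A')^c$ must be $z$ or $w$; then the forced edge $vt$ (with $t\in A'$) crosses $(A')^c$ yet is not $sx$ or $sy$, a contradiction. If it equals $\{zt,wt\}$, then $sx,sy$ do not cross and $s\in(A')^c$ forces $x,y\in(A')^c$; also $t\in A'$ forces $z,w\in(A')^c$; hence $R\subseteq(A')^c$, contradicting $u\in R\cap A'$. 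A parallel repair handles $c=1$. The $a=0$ and $c=0$ subcases are fine as written. So the route works, but as submitted the proof has a genuine gap in the $a=1$ and $c=1$ subcases.
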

\iflong
\begin{proof}
Suppose that $H$ is a feasible solution for ${\cal I}$, but suppose to the contrary that $R$ is not a $3$-connected subset in $H$. 
Then there is a vertex set $A$ such that $d_H(A) = 2$ and $R \cap A, R \sem A \neq \empt$.
By Lemma~\ref{l:char}, $A$ cannot be an $st$-set.
Hence, without loss of generality, assume that $s,t \in A$. Now we consider several cases, see Fig.~\ref{f:3f}(a,b,c,d).
\begin{enumerate}[(a)]
\item
$|R \cap A| =1$, say $R \cap A=\{x\}$;  see Fig.~\ref{f:3f}(a).
Then $\{sy,tz, tw\} \subs \de_H(A)$, contradicting that $d_H(A) = 2$.
\item
$|R \cap A|=3$, say $R \sem A = \{w\}$; see Fig.~\ref{f:3f}(b). Vertex set $A \sem \{t\}$ is an $st$-set with $d_H(A \sem \{t\})=d_H(A) = 2$, contradicting (by Lemma~\ref{l:char}) that $H$ is a feasible solution.
\item
$|R \cap A|=2$ and $|R \cap \{x,y\}|=|R \cap \{z,w\}|=1$, say $R \cap A=\{x,z\}$; see Fig.~\ref{f:3f}(c).
Then $d_H(A \sem \{t\})=d_H(A) = 2$, contradicting (by Lemma~\ref{l:char}) that $J$ is a feasible solution.
\item
$R \cap A=\{x,y\}$ or $R \cap A=\{z,w\}$, say $R \cap A=\{x,y\}$; see Fig.~\ref{f:3f}(d).
Then $A \sem \{t\}$ is an $st$-set with $d_H(A \sem \{t\}) = d_H(A)-2 = 0$, contradicting that $H$ is a feasible solution.
\end{enumerate}
In all cases we have a contradiction, thus the lemma holds. 
\end{proof}
\fi

\begin{lemma} \label{l:3f'}
If $R=\{x,y,z,w\}$ is a 3-edge-connected subset in subgraph $H$, then $H$ is a feasible solution for instance ${\cal I}$.
\end{lemma}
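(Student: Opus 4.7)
The plan is to invoke Lemma~\ref{l:char} and show that every $st$-set $A$ of $J$ with $d_H(A)=2$ satisfies $\de_H(A)=\de_J(s)$ or $\de_H(A)=\de_J(t)$. The key observation is that, by Menger's theorem, $3$-edge-connectivity of $R=\{x,y,z,w\}$ in $H$ means no cut of $H$ of size at most $2$ can split $R$. Therefore, for the $st$-set $A$ in question, we must have either $R\subs A$ or $R\cap A=\empt$.

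From here I would carry out a short two-case analysis, using property~(A) -- which gives $\de_J(s)=\{sx,sy\}$ and $\de_J(t)=\{zt,wt\}$ -- together with the fact that $H$ contains all four forced edges. If $R\cap A=\empt$, then $s\in A$ but $x,y\notin A$, so $\{sx,sy\}\subs \de_H(A)$; since $d_H(A)=2$, this inclusion is actually equality and $\de_H(A)=\de_J(s)$. If instead $R\subs A$, then $z,w\in A$ but $t\notin A$, and the symmetric forced-edge argument gives $\de_H(A)=\{zt,wt\}=\de_J(t)$. Applying Lemma~\ref{l:char} then yields that $H$ is a feasible solution for $\mathcal{I}$.

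There is little real obstacle here: the only substantive move is translating the hypothesis ``$R$ is $3$-edge-connected in $H$'' into the structural statement ``no $\leq 2$-cut of $H$ separates $R$'', after which the forced edges together with property~(A) leave essentially no room for anything other than the two cases above. The heavy lifting has already been done by Lemma~\ref{l:char}, which reduces feasibility to a purely local statement about $2$-cuts of $H$; this is also why the argument mirrors the case analysis in Lemma~\ref{l:3f} in reverse.
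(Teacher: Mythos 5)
Your proof is correct and takes essentially the same approach as the paper: both reduce to Lemma~\ref{l:char}, observe that $3$-edge-connectivity of $R$ forces $R\subseteq A$ or $R\cap A=\emptyset$, and then use the forced edges plus property~(A) to pin down $\de_H(A)$. The only cosmetic difference is that you argue directly through both cases while the paper argues by contradiction and dispatches the second case by interchanging the roles of $s$ and $t$.
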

\iflong
\begin{proof}
Suppose that $R$ is a 3-connected subset in $H$, but suppose to the contrary that $H$ is not a feasible solution.
Then by Lemma~\ref{l:char} there is an $st$-set $A$ with $d_H(A)= 2$ such that $\de_H(A) \neq \de_J(s)$ and $\de_H(A) \neq \de_J(t)$. 
Since $R$ is a $3$-edge-connected subset in $H$, we must have $R \cap A=\empt$ or $R \subset A$. 
Without loss of generality, assume that $R \cap A=\empt$ (see Fig.~\ref{f:3f}(e)); otherwise we interchange the roles of $s$ and $t$. 
But then, $sx,sy \in \de_H(A)$, so we get the contradiction that $\de_H(A)=\{sx,sy\}=\de_J(s)$.
\end{proof}
\fi

By Lemmas \ref{l:3f} and \ref{l:3f'}, $H$ is a feasible solution for ${\cal I}$ if and only if $H$ includes all forced edges and $R$ is a 3-edge-connected subset---that is, $R$ forms a feasible solution to {\SEC}---in $H$. 
This, along with Lemma~\ref{l:iff}, implies that the approximability of {\RTF} is equivalent 
to that of {\SEC} with properties (A,B), concluding the proof of Theorem~\ref{t:reduction}.

\subparagraph*{Proof of Theorem \ref{t:D}} We will prove that a single relative $3$-demand $st$ can be replaced by an equivalent forest of ordinary $3$-demands in polynomial time, where the trees in this forest span the sets of attachment nodes of the $st$-relevant $3$-classes; see Fig.~\ref{f:sp}(d). 
Recall that by Lemma~\ref{l:iff} and Lemmas \ref{l:3f},\ref{l:3f'}, subgraph $H$ is a feasible {\RTF} solution for 
$3$-demand $st$ if and only if  the following holds for every $st$-relevant $3$-class $C$:

\begin{enumerate}[(i)]
\item
If $C$ is central, then the set $R_C$ of attachment nodes of $C$ is a $3$-connected subset in $H$. 
\item
If $C$ is non-central, then $H[C]$ contains $2$ edge-disjoint $uv$-paths,
where $u$ and $v$ are the attachment nodes of $C$. 
\end{enumerate}
The first condition is equivalent to satisfying a clique of $3$-demands on $R_C$.\footnote{
Recall that since the relation $\{(u,v) \in V \times V: \mbox{no } 2\mbox{-cut separates } u,v\}$ is transitive,
this is equivalent to having a tree of $3$-demands on $R_C$.}
For the second condition, consider a non-central $st$-relevant $3$-class $C$ with attachment nodes $u,v$. 
One can see that if $H$ contains all forced edges and satisfies (i,ii)  
then the number of edge-disjoint $uv$-paths in $H$ is larger by exactly $1$ than their 
number in $H[C]$---the additional path (that exists in $H$ but not in $H[C]$) 
goes along the cycle of the cactus that contains $C$, and there is exactly one such path. Thus, the demand $r(u,v) = 3$ is equivalent to requiring two edge-disjoint paths from $u$ to $v$ in $R_C$ (in addition to including all forced edges).

Thus we obtain an equivalent $3$-{\SN} instance by replacing the single relative $3$-demand $st$ by a set $D$ 
of $3$-demands  that form a clique (or, which is equivalent, a tree) on the set $R_C$ of attachment nodes of every $st$-relevant $3$-class $C$.
These new demands can be computed in polynomial time, and they are ordinary $3$-demands, 
since each $R_C$ is a $3$-edge-connected subset in $G$. 
This concludes the proof of Theorem~\ref{t:D}.

\subparagraph*{Proof of Theorem~\ref{t:3SND}} Finally, we give a $2$-approximation for 3-RSND in the following theorem. We treat each demand in the 3-RSND instance as its own instance of SD-$3$-RSND, solve each SD-$3$-RSND instance, and include the edges of each solution in our output.

\section{SD-\emph{k}-RSND} 
We give a recursive $2^{O(k^2)}$-approximation algorithm for SD-$k$-RSND for arbitrary constant $k$ (Theorem~\ref{t:SD_RSND}). The algorithm is a generalization of the SD-$3$-RSND algorithm from \cite{ap22}. At a high level, the main idea is to partition the input graph using a hierarchy of important separators, prove a structure theorem that characterizes the required connectivity guarantees within each component of the hierarchy, and then achieve these guarantees using a variety of subroutines: a Weighted $st$ Shortest-Path algorithm, the recursive SD-\emph{k}-RSND approximation algorithm, and a Min-Cost Flow algorithm. 

\subsection{Hierarchical Chain Decomposition}
\label{sec:chain_decomp} 
In this section we define important separators and describe how to use them to construct a hierarchical $k$-chain decomposition of $G$. 

\begin{definition}
Let $X$ and $Y$ be vertex sets of a graph $G$. An \textit{$(X,Y)$-separator} of $G$ is a set of edges $S$ such that there is no path between any vertex $x \in X$ and any vertex $y \in Y$ in $G \setminus S$. An $(X,Y)$-separator $S$ is \textit{minimal} if no subset $S' \subset S$ is also an $(X,Y)$-separator.  If $X = \{x\}$ and $Y = \{y\}$, we say that $S$ is an $(x,y)$-separator.
\end{definition}

In \cite{ap22}, the authors provide the following definition.

\begin{definition}[Definition 20 of \cite{ap22}]
Let $S$ be an $(X,Y)$-separator of graph $G$, and let $R$ be the vertices reachable from $X$ in $G \setminus S$. Then $S$ is an \textit{important} $(X,Y)$-separator if $S$ is minimal and there is no $(X,Y)$-separator $S'$ such that $|S'| \leq |S|$ and $R' \subset R$, where $R'$ is the set of vertices reachable from $X$ in $G \setminus S'$.
\end{definition}

In Section 4.1 of \cite{ap22}, the authors describe how to construct the ``$s-t$ 2-chain'' of a graph $G$ \footnote{Note that all separator-based chain definitions given in this section are unrelated to the cactus-based chain definitions in Section \ref{sec:3RSND}.}. Here, we define the \textit{$(X,Y)$ $h$-chain} of $G$ similarly, where $X$ and $Y$ are vertex sets and $h > 0$ is an integer. Instead of using size two important separators to partition the graph, we will use size $h$ important separators. See Figure \ref{fig:chain} for an example.

First, if there are no important $(X,Y)$-separators of size $h$ in $G$, then the $(X,Y)$ $h$-chain of $G$ is simply $G$ and we're done (the chain is a single component, $G$, with no separators). If such an important separator exists, then we first find an important $(X,Y)$-separator $S^h_0$ of size $h$ in $G$, and we let $R^h_0$ be the set of vertices reachable from any vertex $x \in X$ in $G \setminus S^h_0$.  We let $V^h_{(0,r)}$ be the vertices in $R^h_0$ incident on $S^h_0$, and let $V^h_{(1,\ell)}$ be the nodes in $V \setminus R^h_0$ incident on $S^h_0$.  We then proceed inductively.  Given $V^h_{(i,\ell)}$, if there is no important $(V^h_{(i,\ell)}, Y)$-separator of size $h$ in $G \setminus (\cup_{j=0}^{i-1} R^h_j)$ then the chain is finished. Otherwise, let $S^h_i$ be such a separator, let $R^h_i$ be the nodes reachable from $V^h_{(i,\ell)}$ in $(G \setminus (\cup_{j=0}^{i-1} R^h_j)) \setminus S^h_i$, let $V^h_{(i, r)}$ be the nodes in $R^h_i$ incident on $S^h_i$, and let $V^h_{(i+1, \ell)}$ be the nodes in $V \setminus (\cup_{j=0}^i R^h_j)$ incident on $S^h_i$. After this process completes we have our $(X,Y)$ $h$-chain, consisting of components $R^h_0, \dots, R^h_p$ along with important separators $S^h_0, \dots, S^h_{p-1}$ between the components. See Figure \ref{fig:chain}.

\begin{figure}
    \centering
    \includegraphics[scale=0.78]{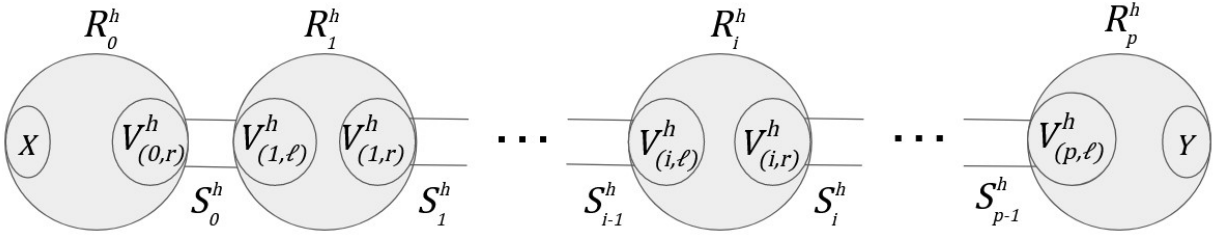}
    \caption{An $(X,Y)$ $h$-chain of a graph $G$, where $h=2$.}
    \label{fig:chain}
\end{figure}

Next we note that by Lemma 17 of \cite{ap22}, we can find an important $(X,Y)$-separator of size $h$ in polynomial time as long as $h$ is a constant.
\begin{lemma}[Lemma 17 of \cite{ap22}] 
\label{lem:imp_sep}
Let $d \geq 0$. An important $(X,Y)$-separator of size $d$ can be found in time $4^d \cdot n^{O(1)}$ (if one exists), where $n = |V|$.
\end{lemma}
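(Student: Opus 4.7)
The plan is to design a branching algorithm with a carefully chosen potential function, mirroring the standard FPT approach for important separators (Marx; Chen--Liu--Lu). First, compute the minimum $(X,Y)$-cut size $\lambda$ by max-flow; if $\lambda > d$ there is no $(X,Y)$-separator of size at most $d$ and we return failure. Otherwise, compute the inclusion-maximal source side $R^{\star}$ of a minimum $(X,Y)$-cut (equivalently, the union of the source sides of all min $(X,Y)$-cuts, which is the complement in $V$ of the set of vertices that can reach $Y$ in the residual graph of a max flow). This $R^{\star}$ exists, is unique, and is computable in polynomial time.

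Next, pick any edge $e = (u,v)$ with $u \in R^{\star}$ and $v \notin R^{\star}$, and branch on whether $e$ lies in the important separator $S$ being sought. In the first branch we assume $e \in S$, delete $e$, and recurse with budget $d-1$. In the second branch we assume $e \notin S$, in which case $v$ must be reachable from $X$ in $G \setminus S$, so we add $v$ to $X$ and recurse with the same budget $d$. The second branch is justified by the key structural claim that every important $(X,Y)$-separator $S$ of size at most $d$ satisfies $R^{\star} \subseteq R_S$: if this fails, a submodular uncrossing argument shows that $\delta_G(R_S \cup R^{\star})$ is a separator of size at most $|S|$ whose reachable set strictly contains $R_S$, contradicting the importance of $S$.

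For the running time I would use the measure $\mu := 2d - \lambda$. In the first branch, $d$ drops by $1$ and $\lambda$ drops by exactly $1$ (a min-cut edge was deleted), so $\mu$ drops by $1$. In the second branch, $d$ is unchanged; since $v \notin R^{\star}$, no min $(X,Y)$-cut has $v$ on its source side, so every $(X \cup \{v\}, Y)$-cut has size at least $\lambda + 1$, and $\mu$ again drops by at least $1$. Starting from $\mu \le 2d$, the branching tree has at most $2^{2d} = 4^d$ leaves, and each internal node does $n^{O(1)}$ work (dominated by a max-flow), giving total time $4^d \cdot n^{O(1)}$. At a leaf we return $\delta_G(R^{\star})$ as a candidate separator, and a final pass filters candidates to keep one that is actually important (removing non-minimal candidates, and any dominated by another candidate with a strictly larger reachable set).

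The main obstacle is the structural claim used to justify the ``exclude $e$'' branch. Establishing $R^{\star} \subseteq R_S$ rigorously requires the submodular uncrossing argument applied to $R_S$ and $R^{\star}$, together with verifying that $R_S \cap R^{\star}$ is still an $(X,Y)$-cut of size at least $\lambda$ (so that $R_S \cup R^{\star}$ ends up with cut value at most $|S|$), and that the reachable set of the uncrossed separator strictly contains $R_S$. Both checks are standard but must be carried out with care, since the whole correctness of the ``don't delete $e$'' reduction rests on them; once in hand, the algorithm and analysis above proceed directly.
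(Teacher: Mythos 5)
Your proposal is correct and matches the standard FPT algorithm for enumerating important separators (Marx; Chen--Liu--Lu), which is what Lemma 17 of~\cite{ap22} relies on: compute the unique inclusion-maximal source side $R^{\star}$ of a minimum $(X,Y)$-cut, establish via submodular uncrossing that every important separator $S$ has $R^{\star} \subseteq R_S$, branch on an edge of $\delta(R^{\star})$ (include it and decrement the budget, or exclude it and augment $X$), and bound the tree by the measure $2d - \lambda$, which drops by at least one in each branch. Note that the present paper does not prove this lemma itself---it is imported as a black box from~\cite{ap22}---so there is no internal proof to compare against; your argument is the standard one and the subtlety you flag (verifying that the uncrossed set $R_S \cup R^{\star}$ yields a separator of size at most $|S|$ with strictly larger reachable source side, which is what contradicts importance) is exactly the point that needs care and is handled by the usual submodularity calculation $d(R_S \cap R^{\star}) + d(R_S \cup R^{\star}) \le d(R_S) + d(R^{\star})$ together with $d(R_S \cap R^{\star}) \ge \lambda$.
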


\subparagraph*{Constructing the Hierarchical $k$-chain Decomposition.} Now we describe how to construct the \textit{hierarchical $k$-chain decomposition} of $G$. We start by creating the $(s,t)$ 2-chain of $G$. We say that each component of the $(s,t)$ 2-chain is a \textit{2-component} of $G$ in the hierarchical chain decomposition, and that $G$ is the \textit{1-component} of the decomposition. We also say that $G$ is the \textit{parent component} of each 2-component. 

We then proceed inductively. Let $R^{h}_i$ be an $h$-component of the hierarchical $k$-chain decomposition. If $h = k-1$, then the decomposition is finished. Otherwise, build the $(V^{h}_{(i, \ell)}, V^{h}_{(i, r)})$ $(h+1)$-chain of $R^{h}_i$. The $(h+1)$-chain consists of $(h+1)$-components. Component $R^{h}_i$ is the parent of these $(h+1)$-components (and these $(h+1)$-components are the children of $R^{h}_i$). Note that the vertex set $V^{h}_{(i, \ell)}$ is also in the first $(h+1)$-component in $R^{h}_i$ and that $V^{h}_{(i, r)}$ is in the last $(h+1)$-component in $R^{h}_i$. After this process completes we have our hierarchical $k$-chain decomposition of $G$. 

The set of all $h$-components can be ordered as follows: The $h$-component that contains $s$ is the first component while the $h$-component that contains $t$ is last. All other $h$-components are adjacent via a left important separator and a right important separator to a left neighbor $h$-component and a right neighbor $h$-component, respectively. Observe that in this hierarchical decomposition, an $h$-component $R^h_i$ has no $(V^{h}_{(i, \ell)}, V^{h}_{(i, r)})$-separator of size $h$ or less.

\subsection{Structure Theorem}
\label{sec:genk_structure}
\subparagraph*{Preliminaries.} We say a subgraph $H$ satisfies the RSND demand $(X,Y,d)$ on input graph $G$ if the following is true: If there is a path from at least one vertex in $X$ to at least one vertex in $Y$ in $G \setminus F$, where $F$ is a set of at most $d-1$ edges, then there is a path from at least one vertex in $X$ to at least one vertex in $Y$ in $H \setminus F$. 
Going forward, if $V^{h}_{(i, \ell)} = \{s\}$, then we consider $S^h_{i-1}$ to be the empty set. Similarly, if $V^{h}_{(i, r)} = \{t\}$, then $S^h_i$ is the empty set.

Fix an $h$-component $R^h_i$ and let $X$ be a vertex set such that $X \subseteq V^{h}_{(i, \ell)}$. We say that $S_X$ is the set of edges in $S^h_{i-1}$ incident on vertices in $X$. Similarly, if $Y$ is a vertex set such that $Y \subseteq V^{h}_{(i, r)}$, we say that $S_Y$ is the set of edges in $S^h_i$ incident on vertices in $Y$. We will also use $S$ to denote the set of all edges in an important separator in the hierarchical chain decomposition. Let $H$ be a subgraph of $G$. We will also say that $G^h_i = G[R^h_i]$ is the subgraph of $G$ induced by the vertex set $R^h_i$, and that $H^h_i = H[R^h_i]$ is the subgraph of $H$ induced by $R^h_i$. 

We can now use the hierarchical chain construction to give a structure lemma that characterizes feasible solutions. The lemma states that a subgraph $H$ of $G$ is a feasible solution to SD-$k$-RSND if and only if in the hierarchical $k$-chain decomposition of $G$, all edges in $S$ are in $H$, and certain connectivity requirements between groups of vertices in $V^{h}_{(i, \ell)}$ and in $V^{h}_{(i, r)}$ are met in $H^h_i$ for each component $R^h_i$ in the decomposition. 

\begin{theorem}[Structure Theorem]
\label{thm:genk_characterize}
Subgraph $H$ is a feasible solution to SD-$k$-RSND if and only if all edges in $S$ are included in $H$, and for each $h$-component $R^h_i$ in the hierarchical $k$-chain decomposition of input graph $G$, subgraph $H^h_i$ satisfies the following: 
\begin{enumerate}
    \item $H^h_i$ is a feasible solution to RSND on subgraph $G^h_i$ with demands 
    \begin{gather*}
        \Bigl\{ (X,Y,d) \; : \; X \subseteq V^h_{(i, \ell)}, \; Y \subseteq V^h_{(i, r)}, \; (X,Y) \neq \big(V^h_{(i, \ell)}, V^h_{(i, r)}\big), \\
        \qquad \qquad \qquad d = \max(0,  k  + |S_X| + |S_Y| - |S^h_{i-1}| - |S^h_i|)  \Bigr\}.
    \end{gather*}

    \item $H^h_i$ is a feasible solution to RSND on subgraph $G^h_i$ with demand 
    \begin{gather*}
        \left( V^h_{(i, \ell)}, V^h_{(i, r)}, h+1 \right).
    \end{gather*}

    \item $H^h_i$ is a feasible solution to RSND on subgraph $G^h_i$ with demand
    \begin{gather*}
        \left( V^h_{(i, \ell)}, V^h_{(i, r)}, k-1 \right).
    \end{gather*}   
\end{enumerate}
\end{theorem}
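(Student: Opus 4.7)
The plan is to prove the two directions separately. For necessity, I would start by forcing every edge of every separator $S^h_i$ into $H$: by the chain construction, each $S^h_i$ is a minimal $(s,t)$-separator in $G$ of size $h \leq k-1$, because every $st$-path in $G$ must cross each separator in the chain, and minimality of important separators implies that removing any single edge $e \in S^h_i$ leaves an $st$-path in $G$ through $e$. Hence the fault set $S^h_i \setminus \{e\}$ has size at most $k-2 < k$, and feasibility of $H$ forces $e \in H$. Once every separator edge is known to lie in $H$, conditions 2 and 3 follow by a straightforward restriction: for any internal fault $F' \subseteq E(G^h_i)$ with $|F'| \leq h$ (for condition 2) or $|F'| \leq k-2$ (for condition 3), any $V^h_{(i,\ell)}$-$V^h_{(i,r)}$-path in $G^h_i \setminus F'$ extends to an $st$-path in $G \setminus F'$ (since $F'$ touches nothing outside $R^h_i$), and the $st$-path in $H \setminus F'$ guaranteed by feasibility restricts to the required path in $H^h_i \setminus F'$.

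Condition 1 is the most intricate part of necessity. Given a pair $(X,Y) \neq (V^h_{(i,\ell)}, V^h_{(i,r)})$ and an internal fault $F' \subseteq E(G^h_i)$ with $|F'| < d$ admitting an $X$-$Y$-path in $G^h_i \setminus F'$, I would lift $F'$ to the global fault $F = F' \cup (S^h_{i-1} \setminus S_X) \cup (S^h_i \setminus S_Y)$. Its size satisfies
\[
|F| \leq |F'| + (|S^h_{i-1}| - |S_X|) + (|S^h_i| - |S_Y|) < d + (|S^h_{i-1}| - |S_X|) + (|S^h_i| - |S_Y|) = k,
\]
by the very definition of $d$. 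In $G \setminus F$, everything outside $R^h_i$ is untouched, the edges of $S_X$ and $S_Y$ survive, and the internal $X$-$Y$-path is preserved, yielding an $st$-path. Feasibility of $H$ then produces an $st$-path in $H \setminus F$, which is forced to enter $R^h_i$ through $S_X$ and exit through $S_Y$; its restriction to $R^h_i$ is the desired $X$-$Y$-path in $H^h_i \setminus F'$.

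For sufficiency, I would assume that every separator edge lies in $H$ and that the three conditions hold for every $h$-component, then take an arbitrary $F \subseteq E(G)$ with $|F| < k$ admitting an $st$-path in $G \setminus F$ and construct an $st$-path in $H \setminus F$ by recursing down through the hierarchy. At each $h$-component $R^h_i$ encountered, define $X_i \subseteq V^h_{(i,\ell)}$ to be the endpoints of the surviving edges of $S^h_{i-1} \setminus F$, and similarly $Y_i$, and write $F_i = F \cap E(G^h_i)$. Since the edges of $S^h_{i-1}$ not incident on any vertex of $X_i$ must all lie in $F$, the inequality $|F_i| + (|S^h_{i-1}| - |S_{X_i}|) + (|S^h_i| - |S_{Y_i}|) \leq |F| < k$ rearranges exactly to $|F_i| < d$ for the condition~1 demand $d$ of $(X_i, Y_i)$. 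If $(X_i, Y_i)$ is a proper sub-pair, condition~1 directly supplies an $X_i$-$Y_i$-path in $H^h_i \setminus F_i$; if instead $(X_i, Y_i)$ is the full pair $(V^h_{(i,\ell)}, V^h_{(i,r)})$, condition~2 (demand $h+1$) handles $|F_i| \leq h$, condition~3 (demand $k-1$) handles $|F_i| \leq k-2$, and the remaining case $|F_i| = k-1$ at an intermediate level $h < k-1$ is resolved by recursively constructing the desired $V^h_{(i,\ell)}$-$V^h_{(i,r)}$-path from paths in the level-$(h+1)$ children of $R^h_i$ stitched together via the level-$(h+1)$ separator edges, all of which lie in $H$. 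The recursion terminates at the leaf level $h = k-1$, where condition~2's demand equals $k$ and handles every $|F_i| \leq k-1$. Concatenating the resulting per-$2$-component paths via the level-$2$ separator edges yields the required $st$-path in $H \setminus F$.

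The most delicate aspect, and the main obstacle, is the fault-budget bookkeeping in the reverse direction. The definition of $(X_i, Y_i)$ as the endpoints of surviving separator edges is what makes the budget identity line up exactly with the condition~1 demand $d$; any other choice would leave slack and prevent condition~1 from applying. When the full pair arises at an intermediate level $h < k-1$ with $|F_i| = k-1$, neither condition~2 (demand $h+1 < k$) nor condition~3 (demand $k-1$, which requires $|F_i| \leq k-2$) directly delivers the path, so one must descend into the level-$(h+1)$ children and recursively repeat the same argument there. Verifying that this recursion genuinely bottoms out at the leaf level where condition~2 finally reaches demand $k$, and that the paths stitched together across levels share compatible endpoints on each separator, requires careful tracking of which separator vertices survive at each level.
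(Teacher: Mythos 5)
Your necessity argument matches the paper's in substance: the same lift $F = F' \cup (S^h_{i-1}\setminus S_X) \cup (S^h_i \setminus S_Y)$, the same budget calculation, and the same use of the separator structure to force edges. The small differences (direct vs.\ contrapositive phrasing, and the need to extract a single $X$--$Y$ segment from the $st$-path in $H\setminus F$) are routine.

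The sufficiency direction, however, has a genuine gap, and it is precisely the ``main obstacle'' you flag at the end but do not resolve. You define $X_i$ as the endpoints of \emph{surviving} edges of $S^h_{i-1}$. This choice is what makes your budget identity
\[
|F_i| + \bigl(|S^h_{i-1}| - |S_{X_i}|\bigr) + \bigl(|S^h_i| - |S_{Y_i}|\bigr) \;\leq\; |F| \;<\; k
\]
line up with the demand $d$ --- but it is the wrong choice for stitching. Condition~1 only guarantees a path from \emph{some} vertex of $X_i$ to \emph{some} vertex of $Y_i$ in $H^h_i \setminus F_i$; it does not let you choose the endpoints. When you try to concatenate per-component paths via separator edges, the path produced in $R^h_i$ might terminate at a vertex $y \in Y_i$ whose adjacent surviving separator edges lead into $R^h_{i+1}$ at a vertex from which condition~1's path in $R^h_{i+1}$ does not originate. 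More fundamentally, a surviving separator edge is not the same as a \emph{reachable} one: its left endpoint may sit in a portion of the previous component that $H\setminus F$ has disconnected from $s$. With the naive $X_i$, the condition~1 path you invoke may start at an $X_i$-vertex that is simply unreachable from $s$ in $H\setminus F$, making it useless.

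The paper resolves this by tracking reachability, not survival: it defines $S^H_{i}$ (resp.\ $S^G_i$) to be the edges of $S_i$ whose $V_{(i,r)}$-endpoint is reachable from $V_{(0,\ell)}$ in the prefix of the fault subchain inside $H\setminus F$ (resp.\ $G\setminus F$), and proves by induction (Lemma~\ref{lem:r_induction}) the dichotomy that either $S^H_i = S^G_i$ or $|S^H_i| \geq r_i + 1$, where $r_i$ is the remaining fault budget. That dichotomy is what guarantees that enough reachable separator edges survive to hand off into the next component. Establishing it requires the auxiliary counting lemmas (\ref{lem:G_F_reachable}, \ref{lem:AFT}) and, for the starting and ending components of the subchain, the connectivity Lemmas~\ref{lem:s_connectivity_H} and \ref{lem:t_connectivity_H}. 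None of this machinery appears in your proposal, and without it the stitching you describe does not go through: replacing ``surviving'' by ``reachable'' in your definition of $X_i$ would break the clean budget identity, and repairing that is exactly the nontrivial content of the paper's proof. Your observation that conditions~2 and~3 leave the case $|F_i| = k-1$ at an intermediate level open, forcing a descent into children, is correct and corresponds to the paper's Case~1/Case~2 split, but the descent alone does not supply the reachability invariant needed to make the per-level arguments compose.
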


\iflong \else
The proof of this structure theorem is long and involved; it can be found in Appendix~\ref{app:structure}. We describe the proof here at a very high level. To prove the ``only if'' direction, we first assume that we are given some feasible solution $H$.  Then for each of the properties in Theorem~\ref{thm:genk_characterize}, we assume it is false and derive a contradiction by finding a fault set $F \subseteq E$ with $|F| < k$ where there is a path from $s$ to $t$ in $G \setminus F$, but not in $H \setminus F$. The exact construction of such an $F$ depends on which of the properties of Theorem~\ref{thm:genk_characterize} we are analyzing.

To prove the `if'' direction, we assume that $H$ satisfies the conditions of Theorem~\ref{thm:genk_characterize} and consider a fault set $F \subseteq E$ with $|F| < k$ where $s$ and $t$ are connected in $G \setminus F$. We want to show that $s$ and $t$ are connected in $H \setminus F$. We analyze a special ``fault subchain'' of the hierarchical chain decomposition that has specific properties, including that it contains all the edges in $F$. We prove that at least one vertex, $v$, at the end of that subchain is reachable from $s$ in $H \setminus F$ (we also show that $t$ is reachable from $v$ in $H \setminus F$ to complete the proof). 
Specifically, we consider two cases. In Case 1, all fault edges in $F$ are in the same $(k-1)$-level component in the hierarchical chain decomposition; this component is the fault subchain in this case. In Case 2 there is some level, $h$, such that $F$ is not entirely contained in a single $h$-level component, but at level $h-1$, $F$ is completely contained in a single $(h-1)$-level component, $C$. In this case, the fault subchain is the set of all $h$-level components in $C$.

For Case 1, we show that because all components in the hierarchy satisfy Property 2 of Theorem~\ref{thm:genk_characterize}, there must be an $st$ path in $H \setminus F$. The structure of $F$ relative to the hierarchy is more complex in Case 2. As a result we must prove several key characteristics of the components in the fault subchain. With these characteristics, along with Properties 1, 3 of Theorem~\ref{thm:genk_characterize}, we show via a highly technical inductive proof that at least one vertex at the end of the fault subchain is reachable from $s$ in $H \setminus F$, and thus $t$ is also reachable from $s$ in $H \setminus F$.
\fi

\subsection{Algorithm and Analysis} 
\subsubsection{Algorithm}
\label{sec:SD_RSND_algorithm}
We can now use Theorem~\ref{thm:genk_characterize} (Structure Theorem) to give a $2^{O(k^2)}$-approximation algorithm for SD-$k$-RSND. 
Given a graph $G = (V,E)$ with edge weights $w : E \rightarrow \mathbb{R}_{\geq 0}$ and a single demand $\{(s,t,k)\}$, we first create the hierarchical $k$-chain decomposition of $G$ in polynomial time, as described in Section~\ref{sec:chain_decomp}. Within each component we run a set of algorithms to satisfy the RSND demands stated in Theorem~\ref{thm:genk_characterize}. 
Our solution, $H$, includes the outputs of each of these algorithms along with $S$, the set of all edges in the separators of the hierarchical $k$-chain decomposition. We now describe the set of algorithms run on each component in the hierarchical $k$-chain decomposition. Fix a component $R^h_i$ of the decomposition and let $X \subseteq V^h_{(i,\ell)}$, $Y \subseteq V^h_{(i,r)}$, and $d = \max(0, k + |S_X| + |S_Y| - |S_{i-1}| - |S_{i}|)$:
\begin{itemize}
    \item \textbf{Base Case (Shortest $st$ Path).} For each $X,Y$ pair such that $d = 1$, contract the vertices in $X$ and contract the vertices in $Y$ to create super nodes $x$ and $y$, respectively. We first check in polynomial time if $x$ and $y$ are connected in $G^h_i = G[R^h_i]$. If they are connected, then we create an instance of the Weighted $st$ Shortest-Path problem on $G^h_i$ (in polynomial time), using $x$ and $y$ as our source and destination nodes. For each edge $e \in E(R^h_i)$, set the weight of $e$ to $w(e)$. Run a polynomial-time Weighted $st$ Shortest-Path algorithm on this instance (e.g. Dijkstra's algorithm), and add to $H$ all edges in the output of the algorithm.
    
    \item \textbf{Recursive Step.} For each $X,Y$ pair such that $1 < d < k$, we create an instance of SD-$d$-RSND on $G^h_i$. Contract the vertices in $X$ and contract the vertices in $Y$ to create super nodes $x$ and $y$, respectively. For each edge $e \in E(R^h_i)$, set the cost of $e$ to $w(e)$. The set of RSND demands is just $\{(x,y,d)\}$. Run the recursive polynomial-time SD-$d$-RSND algorithm on this instance, where $d < k$. Add to $H$ all edges in the output of the algorithm.
    
    \item \textbf{Final Recursive Step.} We now create an SD-$(k-1)$-RSND instance on $G^h_i$. Contract the vertices in $V^h_{(i,\ell)}$ and contract the vertices in $V^h_{(i,r)}$ to create super nodes $v_\ell$ and $v_r$, respectively. For each edge $e \in E(R^h_i)$, set the cost of $e$ to $w(e)$. The set of RSND demands is just $\{(v_\ell, v_r, k-1)\}$. Run the recursive SD-$(k-1)$-RSND algorithm on this instance. Add to $H$ all edges in the output of the algorithm.

    \item \textbf{Min-Cost Flow.} Finally, we create an instance of the Min-Cost Flow problem on $G^h_i$. Contract the vertices in $V^h_{(i,\ell)}$ and contract the vertices in $V^h_{(i,r)}$ to create super nodes $v_\ell$ and $v_r$, respectively. Let $v_\ell$ be the source node and $v_r$ be the sink node. For each edge $e \in E(R^h_i)$, set the capacity of $e$ to 1 and set the cost of $e$ to $w(e)$. Require a minimum flow of $h+1$, and run a polynomial-time Min-Cost Flow algorithm on this instance. Since all capacities are integer the algorithm will return an integral flow, so we add to $H$ all edges with non-zero flow.
\end{itemize}

\subsubsection{Analysis} \label{sec:SD_cost}
The following lemma is essentially directly from Theorem~\ref{thm:genk_characterize} (Structure Theorem) and the description of the algorithm. The proof of this lemma can be found in Appendix~\ref{app:cost}.

\begin{lemma}
\label{lem:SD_feasible}
Let $H$ be the output of the algorithm given in Section \ref{sec:SD_RSND_algorithm}. Subgraph $H$ is a feasible solution to the SD-$k$-RSND problem.
\end{lemma}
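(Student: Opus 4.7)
The plan is to show that $H$ meets every condition in Theorem~\ref{thm:genk_characterize} (the Structure Theorem), which immediately yields feasibility. The argument is an induction on $k$; the base case $k=2$ is handled by the 2-approximation for 2-RSND (and SD-$1$-RSND reduces to the shortest-path problem). For the inductive step, I would first observe that the algorithm explicitly adds the set $S$ of all separator edges of the hierarchical $k$-chain decomposition to $H$, satisfying the condition that ``all edges in $S$ are in $H$''. It then suffices to verify, for each $h$-component $R^h_i$, that $H^h_i$ satisfies Properties 1, 2, and 3 of the theorem.

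For Property~1, fix $X \subseteq V^h_{(i,\ell)}$, $Y \subseteq V^h_{(i,r)}$ and let $d = \max(0, k + |S_X| + |S_Y| - |S^h_{i-1}| - |S^h_i|)$. If $d=0$ the demand is vacuous. If $d=1$, the Base Case step contracts $X$ and $Y$ into super nodes $x,y$ and adds a shortest $xy$-path whenever one exists in $G^h_i$; after uncontracting, this exactly satisfies the $(X,Y,1)$ demand. If $1 < d < k$, the Recursive Step creates an SD-$d$-RSND instance on the graph obtained by contracting $X,Y$ into $x,y$; by the inductive hypothesis on $k$, the recursive call returns a feasible solution, and after uncontraction this solution satisfies the relative $(X,Y,d)$ demand on $G^h_i$. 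The key point is that an RSND demand on the vertex sets $X,Y$ depends only on the set of edges connecting ``the $X$-side'' to ``the $Y$-side'' after removing a fault set, and is thus unchanged under the contractions $X \to x$, $Y \to y$.

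For Property~2, the construction of the $(V^h_{(i,\ell)}, V^h_{(i,r)})$ $(h{+}1)$-chain guarantees that $R^h_i$ contains no $(V^h_{(i,\ell)}, V^h_{(i,r)})$-separator of size at most $h$, so the minimum $v_\ell v_r$-cut in the contracted unit-capacity network is at least $h+1$. By the max-flow/min-cut theorem, an integral flow of value $h+1$ is feasible, so the Min-Cost Flow step returns $h+1$ edge-disjoint $v_\ell v_r$-paths. These paths provide \emph{absolute} $(h{+}1)$-edge-connectivity between $V^h_{(i,\ell)}$ and $V^h_{(i,r)}$ in $H^h_i$, which in particular implies the relative demand $(V^h_{(i,\ell)}, V^h_{(i,r)}, h+1)$. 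For Property~3, the Final Recursive Step runs SD-$(k{-}1)$-RSND on the graph obtained by contracting $V^h_{(i,\ell)}$ and $V^h_{(i,r)}$; by the inductive hypothesis on $k$, it returns a feasible solution, which yields the $(V^h_{(i,\ell)}, V^h_{(i,r)}, k-1)$ demand on $G^h_i$.

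The main obstacle is not any single step but rather carefully justifying the two ``glueing'' facts that underlie all the cases: first, that contracting $X$ and $Y$ into super nodes preserves the relevant RSND semantics, so that a recursive SD-$d$-RSND (or SD-$(k{-}1)$-RSND) solution on the contracted graph really does satisfy the original $(X,Y,d)$ (respectively $(V^h_{(i,\ell)}, V^h_{(i,r)}, k-1)$) demand once lifted back to $G^h_i$; and second, that combining the outputs of the various sub-algorithms across all components into a single subgraph $H$ does not invalidate feasibility established by any one of them, which holds because adding edges can only improve connectivity and the demands are monotone. Once these two observations are stated cleanly, Theorem~\ref{thm:genk_characterize} closes the argument immediately.
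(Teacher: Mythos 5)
Your proposal is correct and follows essentially the same route as the paper: both verify that $H$ satisfies the three properties of Theorem~\ref{thm:genk_characterize} component-by-component (shortest path for $d=1$, recursion for $1<d<k$ and for the $(k-1)$-demand, min-cost flow for the $(h+1)$-demand, and the separator edges $S$ by construction) and then invoke the structure theorem. The only difference is one of explicitness: you make the induction on $k$ and the contraction-preserves-semantics observation explicit, whereas the paper asserts these implicitly, so your version is a slightly more careful rendering of the same argument.
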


In the next lemma, we give an approximation ratio for the SD-$k$-RSND algorithm. Let $H^*$ denote the optimal solution, and for any set of edges $A \subseteq E$, let $w(A) = \sum_{e \in A} w(e)$. The next lemma follows from combining the approximation ratios of each of the subroutines used in the algorithm and solving the recurrence.

\begin{lemma}
\label{lem:SD_cost}
$w(H) \leq 2^{O(k^2)}*w(H^*)$.
\end{lemma}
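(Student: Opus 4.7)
The plan is to establish the bound via induction on $k$, setting up a recurrence $\alpha_k \le c_k \cdot \alpha_{k-1}$ where $\alpha_k$ denotes the approximation ratio of the SD-$k$-RSND algorithm and $c_k = 2^{O(k)}$. Let $H^*$ be an optimal SD-$k$-RSND solution and, for each component $R^h_i$ of the hierarchical $k$-chain decomposition, write $H^*_{h,i} = H^*[R^h_i]$. The first step is to show that for every component $R^h_i$ the total cost of edges added by the four subroutines on $R^h_i$ is at most $C_k \cdot w(H^*_{h,i})$, where $C_k := O(4^k)\cdot \alpha_{k-1} + O(1)$. This uses the Structure Theorem in the following way: since $H^*$ is feasible, Theorem~\ref{thm:genk_characterize} implies that $H^*_{h,i}$ is feasible for every RSND subproblem we solve inside $R^h_i$. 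Concretely, for each base-case pair $(X,Y)$ with $d=1$, the shortest $xy$-path after contraction has cost at most $w(H^*_{h,i})$; for the min-cost flow step, $H^*_{h,i}$ supports an integral $(h+1)$-flow between $v_\ell$ and $v_r$, so the exact min-cost flow has cost at most $w(H^*_{h,i})$; for each recursive $(X,Y,d)$ call with $1<d<k$, the inductive hypothesis gives cost at most $\alpha_{d}\cdot w(H^*_{h,i}) \le \alpha_{k-1}\cdot w(H^*_{h,i})$; and the final recursive call with demand $k-1$ gives cost at most $\alpha_{k-1}\cdot w(H^*_{h,i})$. Finally, the number of $(X,Y)$ pairs considered is at most $(2^{|V^h_{(i,\ell)}|}-1)(2^{|V^h_{(i,r)}|}-1) \le 4^{k-1}$, since $|V^h_{(i,\ell)}|, |V^h_{(i,r)}| \le h \le k-1$, yielding the claimed bound on $C_k$.

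The second step is to sum across all components. The key observation is that for a fixed level $h$ the components $R^h_0,\dots,R^h_p$ have disjoint vertex sets and their induced edge sets are pairwise disjoint, so $\sum_{i} w(H^*_{h,i}) \le w(H^*)$. Since the hierarchy has only $k-1$ levels, summing over all components in the hierarchy gives
\begin{equation*}
\sum_{h=1}^{k-1}\sum_{i} w(H^*_{h,i}) \;\le\; (k-1)\cdot w(H^*).
\end{equation*}
Combining this with the per-component bound and adding the cost $w(S)$ of the separator edges (which belong to every feasible solution by Theorem~\ref{thm:genk_characterize}, so $w(S) \le w(H^*)$), we obtain
\begin{equation*}
w(H) \;\le\; w(S) + C_k\cdot \sum_{h,i} w(H^*_{h,i}) \;\le\; \bigl(1 + (k-1)\cdot C_k\bigr)\cdot w(H^*),
\end{equation*}
which translates into the recurrence $\alpha_k \le 1 + (k-1)\bigl(O(4^k)\cdot\alpha_{k-1} + O(1)\bigr) = 2^{O(k)}\cdot \alpha_{k-1}$.

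The third step is to solve this recurrence. With the base case handled by the exact polynomial-time algorithm for the degenerate ``relative $2$'' case (giving $\alpha_2 = O(1)$), unrolling yields
\begin{equation*}
\alpha_k \;\le\; \prod_{j=3}^{k} 2^{O(j)} \;=\; 2^{O(k^2)},
\end{equation*}
as required. I expect the main subtlety to be the second step: one must verify carefully that at a fixed level $h$ the induced subgraphs $H^*[R^h_i]$ are edge-disjoint (edges inside important separators at level $\le h$ are excluded from every $G[R^h_i]$ at level $h$) so that no edge of $H^*$ is double-charged within a single level, and then note that the same edge can be charged at most $k-1$ times across levels. Once this accounting is in place, combined with the purely combinatorial fact that the per-component subroutines produce only $2^{O(k)}$ subproblems, the recurrence is straightforward.
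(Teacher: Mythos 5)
Your proof is correct and follows essentially the same structure as the paper's: use the Structure Theorem to argue $H^*$ restricted to each component is feasible for every subroutine's instance, bound the number of $(X,Y)$ subsets by $4^{k-1} = 2^{O(k)}$, sum across components, and unroll the resulting $\alpha_k \le 2^{O(k)}\alpha_{k-1}$ recurrence to get $2^{O(k^2)}$. All the key per-subroutine charges (shortest path and min-cost flow against $w(H^*_{h,i})$ exactly, recursive calls against $\alpha_{k-1}\,w(H^*_{h,i})$, separators against $w(H^*)$) are present and correct.

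One point where your accounting is actually slightly more careful than the paper's: you explicitly note that $\sum_{h,i} w(H^*[R^h_i])$ can exceed $w(H^*)$ by a factor of $k-1$ because components at different levels nest (an edge inside a $(k-1)$-component is also inside its ancestor components at every level). You bound each fixed level by $w(H^*)$ and then multiply by the number of levels. The paper's final chain of equalities passes from $\sum_{h,i} w(H^{h*}_i) + w(S^*)$ directly to $w(H^*)$, which silently omits this multiplicity; since the extra $O(k)$ factor is absorbed by the $2^{O(k)}$ in the recurrence anyway, the bound is unaffected, but your version makes the bookkeeping explicit. The rest of the argument matches the paper, modulo using $\alpha_d \le \alpha_{k-1}$ to simplify the recurrence (the paper keeps the individual $T(d)$ terms but then upper-bounds by $T(k-1)$ in the same way).
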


\begin{proof}
Fix level $h$ of the hierarchical $k$-chain decomposition of $G$. Let $H^h_i = H[R^h_i]$, $G^h_i = G[R^h_i]$, and let $H^{h*}_i = H^*[R^h_i]$ be the subgraph of the optimal solution induced by $R^h_i$. Let $d = \max(0,  k  + |S_X| + |S_Y| - |S^h_{i-1}| - |S^h_i|)$.  We also let $W^h_{i,X,Y}$ denote the subgraph of $H^h_i$ returned by the Weighted $st$ Shortest-Path algorithm run on $G^h_i$ for contracted vertex subsets $X \subseteq V^h_{(i, \ell)}$ and $Y \subseteq V^h_{(i, r)}$ such that $d = 1$. Additionally, let $D^h_{i,X,Y}$ denote the subgraph of $H^h_i$ returned by the SD-$d$-RSND algorithm run on $G^h_i$ for contracted $X \subseteq V^h_{(i, \ell)}$ and contracted $Y \subseteq V^h_{(i, r)}$ such that $1<d<k-1$. Let $D^h_{i,\ell,r}$ denote the subgraph of $H^h_i$ returned by the SD-$(k-1)$-RSND algorithm run on $G^h_i$ for contracted $V^h_{(i, \ell)}$ and contracted $V^h_{(i, r)}$ with demand $k-1$. We also let $M^h_i$ denote the subgraph of $H^h_i$ returned by the Min-Cost Flow algorithm run on $G^h_i$ for contracted $V^h_{(i, \ell)}$ and contracted $V^h_{(i, r)}$.

Similarly, let $W^{h*}_{i,X,Y}$ be the optimal solution to the Weighted $st$ Shortest-Path instance on $G^h_i$ for contracted $X \subseteq V^h_{(i, \ell)}$ and contracted $Y \subseteq V^h_{(i, r)}$ such that $d=1$; let $D^{h*}_{i,X,Y}$ be the optimal solution to the SD-$d$-RSND instance on $G^h_i$ for contracted $X \subseteq V^h_{(i, \ell)}$ and contracted $Y \subseteq V^h_{(i, r)}$ such that $1<d<k$; let $D^{h*}_{i,\ell,r}$ be the optimal solution to the SD-$(k-1)$-RSND instance on $G^h_i$ for contracted $V^h_{(i, \ell)}$ and contracted $V^h_{(i, r)}$ with demand $k-1$; let $M^{h*}_i$ be the optimal solution to the Min-Cost Flow instance on $G^h_i$. 

For each $X,Y$ pair such that $d=1$, subgraph $W^h_{i,X,Y}$ is given by an exact algorithm. Subgraph $M^h_i$ is also given by an exact algorithm. Let $T(j)$ be the approximation ratio of our Single Demand RSND algorithm with demand $(s,t,j)$. For each $X,Y$ pair such that $1<d<k$, subgraph $D^h_{i,X,Y}$ is given by our Single Demand RSND algorithm, a $T(d)$-approximation algorithm. Subgraph $D^h_{i,\ell,r}$ is also given by our algorithm, which in this case is a $T(k-1)$-approximation. Hence we have the following for each component $H^h_i$ in the hierarchical $k$-chain decomposition:
\begin{align}
    \label{eq:og1}
    w(W^h_{i,X,Y}) &= w(W^{h*}_{i,X,Y}) && \forall X,Y \text{ s.t. } d=1 \\
    \label{eq:og2}
    w(D^h_{i,X,Y}) &\leq T(d) \cdot w(D^{h*}_{i,X,Y}) && \forall X,Y \text{ s.t. } 1<d<k \\
    w(D^h_{i,\ell,r}) &\leq T(k-1) \cdot w(D^{h*}_{i,\ell,r}) \\
    w(M^h_i) &= w(M^{h*}_i).
\end{align}
Summing over all subsets of $V^h_{(i, \ell)}$ and $V^h_{(i, r)}$, we get the following from expressions (\ref{eq:og1}) and (\ref{eq:og2}):
\begin{align*}
    \sum_{(X,Y): d=1} w(W^h_{i,X,Y}) &= \sum_{(X,Y): d=1} w(W^{h*}_{i,X,Y}) \\
    \sum_{(X,Y): 1<d<k} w(D^h_{i,X,Y}) &\leq \sum_{(X,Y): 1<d<k} T(d) \cdot w(D^{h*}_{i,X,Y}).
\end{align*}
For each level $h$ of the hierarchical $k$-chain decomposition, we denote the set of $h$-components as $R^h_0, R^h_1, \dots R^h_{p_h}$. By summing over all components in the hierarchical $k$-chain decomposition, we get the following:
\begin{align}
\label{eq:sum_comp1}
    \sum_{h=2}^{k-1} \sum_{i=0}^{p_h} \sum_{(X,Y): d=1} w(W^h_{i,X,Y}) &= \sum_{h=2}^{k-1} \sum_{i=0}^{p_h} \sum_{(X,Y): d=1} w(W^{h*}_{i,X,Y}) \\
    \label{eq:sum_comp2}
    \sum_{h=2}^{k-1} \sum_{i=0}^{p_h} \sum_{(X,Y): 1<d<k} w(D^h_{i,X,Y}) &\leq \sum_{h=2}^{k-1} \sum_{i=0}^{p_h} \sum_{(X,Y): 1<d<k}  T(d) \cdot w(D^{h*}_{i,X,Y}) \\
    \label{eq:sum_comp3}
    \sum_{h=2}^{k-1} \sum_{i=0}^{p_h} w(D^h_{i,\ell,r}) &\leq T(k-1) \cdot \sum_{h=2}^{k-1} \sum_{i=0}^{p_h} w(D^{h*}_{i,\ell,r}) \\
    \label{eq:sum_comp4}
    \sum_{h=2}^{k-1} \sum_{i=0}^{p_h} w(M^h_i) &= \sum_{h=2}^{k-1} \sum_{i=0}^{p_h} w(M^{h*}_i).
\end{align}
We also have that
\begin{align}
\label{eq:cost_H}
    w(H^h_i) \leq  \sum_{(X,Y): d=1} w(W^h_{i,X,Y}) + \sum_{(X,Y): 1<d<k} w(D^h_{i,X,Y}) + w(D^h_{i,\ell,r}) + w(M^h_i).
\end{align}
Summing inequality (\ref{eq:cost_H}) over all components in the hierarchical $k$-chain decomposition, then substituting in expressions (\ref{eq:sum_comp1}) through (\ref{eq:sum_comp4}) gives the following:
\begin{align}
    \sum_{h=2}^{k-1} \sum_{i=0}^{p_h} w(H^h_i) &\leq  \sum_{h=2}^{k-1} \sum_{i=0}^{p_h} \sum_{(X,Y): d=1} w(W^h_{i,X,Y}) + \sum_{h=2}^{k-1} \sum_{i=0}^{p_h} \sum_{(X,Y): 1<d<k} w(D^h_{i,X,Y}) \notag\\ &\quad+ \sum_{h=2}^{k-1} \sum_{i=0}^{p_h} w(D^h_{i,\ell,r}) + \sum_{h=2}^{k-1} \sum_{i=0}^{p_h} w(M^h_i) \\ 
    \label{eq:H_ineq}
    &\leq \sum_{h=2}^{k-1} \sum_{i=0}^{p_h} \sum_{(X,Y): d=1} w(W^{h*}_{i,X,Y})  + \sum_{h=2}^{k-1} \sum_{i=0}^{p_h} \sum_{(X,Y): 1<d<k}  T(d) \cdot w(D^{h*}_{i,X,Y}) \notag\\ &\quad+ T(k-1) \cdot \sum_{h=2}^{k-1} \sum_{i=0}^{p_h} w(D^{h*}_{i,\ell,r}) +  \sum_{h=2}^{k-1} \sum_{i=0}^{p_h} w(M^{h*}_i).
\end{align}
The optimal subgraph $H^*$ is a feasible solution, so by Theorem~\ref{thm:genk_characterize}, each RSND demand in the theorem statement must be satisfied in subgraph $H^{h*}_i$, for all $h,i$. For all RSND demands in the theorem statement to be satisfied in a subgraph $H^{h*}_i$, the set of edges $E(H^{h*}_i)$ must be a feasible solution to each of the Weighted $st$ Shortest Path, SD-$k$-RSND, and Min-Cost Flow instances we created on $G^h_i$. Therefore, the cost of $H^{h*}_i$ must be at least the cost of the optimal solution to each of the instances. We have that
\begin{align*}
    w(W^{h*}_{i,X,Y}) &\leq w(H^{h*}_i) && \forall X,Y \text{ s.t. } d=1 \\
    w(D^{h*}_{i,X,Y}) &\leq w(H^{h*}_i) && \forall X,Y \text{ s.t. } 1<d<k \\
    w(D^{h*}_{i,\ell,r}) &\leq w(H^{h*}_i) \\
    w(M^{h*}_i) &\leq w(H^{h*}_i) .
\end{align*}
Let $C(j)$ be the number $X,Y$ pairs such that $X \subseteq V_{(i, \ell)}$, $Y \subseteq V_{(i, r)}$, $(X,Y) \neq (V_{(i, \ell)}, V_{(i, r)})$, and $j = \max(0,  k  + |S_X| + |S_Y| - |S_{i-1}| - |S_i|)$. Summing over subsets of $ V_{(i, \ell)}$ and $V_{(i, r)}$ and over all components in the hierarchical $k$-chain decomposition, we have the following:
\begin{align}
\label{eq:opt1}
    \sum_{h=2}^{k-1} \sum_{i=0}^{p_h} \sum_{(X,Y): d=1} w(W^{h*}_{i,X,Y}) &\leq \sum_{h=2}^{k-1} \sum_{i=0}^{p_h} C(1) \cdot w(H^{h*}_i) \\
    \label{eq:opt2}
    \sum_{h=2}^{k-1} \sum_{i=0}^{p_h} \sum_{(X,Y): 1<d<k} T(d) \cdot w(D^{h*}_{i,X,Y}) &\leq \sum_{h=2}^{k-1} \sum_{i=0}^{p_h} \sum_{(X,Y): 1<d<k} T(d) \cdot w(H^{h*}_i) \notag\\ &\leq \sum_{h=2}^{k-1} \sum_{i=0}^{p_h} \sum_{j=2}^{k-1} C(j) \cdot T(j) \cdot w(H^{h*}_i) \\
    \label{eq:opt3}
    \sum_{h=2}^{k-1} \sum_{i=0}^{p_h}  w(D^{h*}_{i,\ell,r}) &\leq \sum_{h=2}^{k-1} \sum_{i=0}^{p_h} w(H^{h*}_i)\\
    \label{eq:opt4}
    \sum_{h=2}^{k-1} \sum_{i=0}^{p_h} w(M^{h*}_i) &\leq \sum_{h=2}^{k-1} \sum_{i=0}^{p_h} w(H^{h*}_i).
\end{align}
Plugging inequalities (\ref{eq:opt1}) through (\ref{eq:opt4}) into inequality (\ref{eq:H_ineq}) gives the following:
\begin{align*}
    \sum_{h=2}^{k-1} \sum_{i=0}^{p_h} w(H^h_i) &\leq C(1) \sum_{h=2}^{k-1} \sum_{i=0}^{p_h} w(H^{h*}_i) + \sum_{j=2}^{k-1} C(j)T(j) \cdot \sum_{h=2}^{k-1} \sum_{i=0}^{p_h} w(H^{h*}_i) + T(k-1) \sum_{h=2}^{k-1} \sum_{i=0}^{p_h} w(H^{h*}_i) \notag\\ &\quad+ \sum_{h=2}^{k-1} \sum_{i=0}^{p_h} w(H^{h*}_i) \\
    &= \left[ C(1) + \sum_{j=2}^{k-1} C(j)T(j) + T(k-1) + 1 \right] \sum_{h=2}^{k-1} \sum_{i=0}^{p_h} w(H^{h*}_i) 
\end{align*}
Now we account for the important separators, or edges between components, in the hierarchical $k$-chain decomposition. Let $S$ be the separators of the decomposition and let $S^*$ be the set of edges in these separators included in the optimal solution. Edge set $S$ is included in $H$. By Theorem~\ref{thm:genk_characterize}, any feasible solution must include all edges between components in the decomposition. We therefore have that $S = S^*$ and get the following:
\begin{align*}
    w(H) &= \sum_{h=2}^{k-1} \sum_{i=0}^{p_h} w(H^h_i) + w(S) \\ 
    &\leq \left[ C(1) + \sum_{j=2}^{k-1} C(j)T(j) + T(k-1) + 1 \right] \sum_{h=2}^{k-1} \sum_{i=0}^{p_h} w(H^{h*}_i) + w(S^*) \\
    &= \left[ C(1) + \sum_{j=2}^{k-1} C(j)T(j) + T(k-1) + 1 \right]w(H^*).
\end{align*}
Therefore, a bound on the approximation ratio of the SD-$k$-RSND algorithm for demand $k$ is given by the following recurrence relation: 
\begin{align*}
    T(k) &\leq C(1) + \sum_{j=2}^{k-1} C(j)T(j) + T(k-1) + 1 \\
    &= \sum_{j=1}^{k-1} C(j)T(j) + T(k-1) + 1 \\
    &\leq \sum_{j=1}^{k-1} C(j) T(k-1) + T(k-1) + 1 
\end{align*}
with base case $T(1) = 1$ (recall that the SD-$k$-RSND problem with $k=1$ is solved exactly using a Weighted $st$ Shortest Paths algorithm). 

First, note that because there are at most $2^{2(k-1)}$ pairs of vertex sets $X,Y$ such that $X \subseteq V^h_{(i, \ell)}$ and $Y \subseteq V^h_{(i, r)}$, we have that $ \sum_{j=1}^{k-1} C(j) \leq 2^{2(k-1)}$. This gives the following:
\begin{align*}
    T(k) &\leq T(k-1)\sum_{j=1}^{k-1} C(j) + T(k-1) + 1 \\
    &\leq T(k-1)\cdot 2^{2(k-1)} + T(k-1) + 1 \\
    &= (2^{2(k-1)}+1)T(k-1) + 1.
\end{align*}
By solving the recurrence relation, we get the following
\begin{align*}
    T(k) &= O\left(\prod^{k-1}_{x=1} 2^{2(k-x)}\right) \\
    &= O\left(2^{\sum_{j=1}^k 2j}\right) \\
    &= O\left(2^{k(k+1)}\right) \\
    &= 2^{O(k^2)}.
\end{align*}

Therefore, we have that $w(H) \leq 2^{O(k^2)} \cdot w(H^*)$.
\end{proof}

Theorem~\ref{t:SD_RSND} is directly implied by Lemmas~\ref{lem:SD_feasible} and \ref{lem:SD_cost} together with the observation that the algorithm runs in polynomial time.

\newpage
\bibliography{refs}

\appendix

\iflong \else 
\section{Proofs from Section \ref{sec:3RSND} (3-RSND)} \label{app:3RSND}
First we must show that any RSND demand can be expressed as a set of cut-covering constraints.

\begin{lemma}
\label{l:cover}
An RSND demand $r(s,t)=k$ is satisfied by a subgraph $H$ if and only if for any $G$-minimal $st$-set $A$, $d_{H}(A) \geq \min\{k,d_G(A)\}$
\end{lemma}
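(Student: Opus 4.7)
The plan is to prove the two directions separately, with both arguments built around the same fault-set/cut correspondence.

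For the forward direction, suppose $H$ satisfies the demand $r(s,t)=k$ and let $A$ be any $G$-minimal $st$-set. I would argue by contrapositive: if $d_H(A) < \min\{k, d_G(A)\}$, then $F := \de_H(A)$ is a fault set of size less than $k$ that separates $s$ from $t$ in $H$, so to witness infeasibility I only need to exhibit an $st$-path in $G \sem F$. This is exactly where $G$-minimality of $A$ will be used: as noted in the Preliminaries, $G[A]$ and $G[V \sem A]$ are both connected, and since $|F| < d_G(A)$ at least one edge of $\de_G(A)$ survives in $G \sem F$, so concatenating an $s$-to-endpoint path in $G[A]$, that surviving crossing edge, and an endpoint-to-$t$ path in $G[V \sem A]$ yields the required $st$-path in $G \sem F$, contradicting feasibility of $H$.

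For the reverse direction, assume the cut condition and suppose toward contradiction that there is $F \subs E$ with $|F| < k$ such that $G \sem F$ has an $st$-path but $H \sem F$ does not (we may take $F \subs E(H)$ since edges outside $H$ do not affect $H \sem F$). Let $A'$ be the vertex set reachable from $s$ in $H \sem F$; then $A'$ is an $st$-set with $\de_H(A') \subs F$. The key step is to pass to a $G$-minimal $st$-set: I would take a $G$-minimal $st$-set $A$ with $\de_G(A) \subs \de_G(A')$, which exists because any $st$-cut contains a $G$-minimal one (obtained by repeatedly splitting off a connected component on the appropriate side until both $G[A]$ and $G[V \sem A]$ are connected). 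Then $\de_H(A) \subs \de_H(A') \subs F$, so $d_H(A) \leq |F| < k$; the hypothesis therefore forces $d_H(A) \geq d_G(A)$, and combined with the trivial inequality $\de_H(A) \subs \de_G(A)$ this yields $\de_G(A) = \de_H(A) \subs F$. But then $F$ contains the $st$-cut $\de_G(A)$ of $G$, so $G \sem F$ has no $st$-path, contradicting the choice of $F$.

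The only non-mechanical step is the reduction from the arbitrary $st$-set $A'$ produced by $F$ to a $G$-minimal $st$-set $A$ with $\de_G(A) \subs \de_G(A')$; I expect this to be the main obstacle, though it is standard and reduces to iterating the connectivity characterization of $G$-minimality given in the Preliminaries. Once that reduction is in hand, both directions are essentially a clean unpacking of definitions.
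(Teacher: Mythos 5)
Your proof is correct and follows the same two-direction contradiction strategy as the paper. The forward direction is essentially identical to the paper's (you merely spell out the parenthetical justification for why an $st$-path survives in $G \setminus F$). The reverse direction, though, is executed somewhat differently and more carefully. The paper passes to an $H$-minimal $st$-cut $X = \de_H(A)$ and then simply asserts that $\de_G(A)$ is a $G$-minimal $st$-set; as written, this assertion is not self-evident, since $H$-minimality controls connectivity of the sides in $H$, not in $G$ (and indeed one can build small examples where $G[V \setminus A]$ is disconnected for the natural choice of $A$). You sidestep this entirely: you take $A'$ to be the $s$-reachable component in $H \setminus F$, extract a $G$-minimal $st$-set $A$ with $\de_G(A) \subseteq \de_G(A')$ using the standard fact that every $st$-cut contains a $G$-minimal one, and then the inequality chain $\de_H(A) \subseteq \de_H(A') \subseteq F$ plus the hypothesis forces $\de_G(A) = \de_H(A) \subseteq F$, contradicting the existence of an $st$-path in $G \setminus F$. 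Both arguments lead to the same contradiction, but your route makes the passage to a $G$-minimal set explicit and unambiguous, which is the one step the paper glosses over.
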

\begin{proof}
First we prove the ``only if'' direction. Suppose that demand $r(s,t)=k$ is satisfied by subgraph $H$, but suppose for the sake of contradiction that there is some $G$-minimal $st$-set, $A$, such that $d_{H}(A) < \min\{k,d_G(A)\}$. Then, $F=\de_{H}(A)$ is a fault set of size at most $k-1$ such that there is an $st$-path in $G \setminus F$ (since $G[A]$ and $G[V\setminus A]$ are connected and $|\de_G(A) \setminus F| \geq 1$), but no $st$-path in $H \setminus F$. This contradicts the feasibility of $H$.

Now we show the ``if'' direction. Suppose that for any $G$-minimal $st$-set $A$, 
$d_{H}(A) \geq \min\{k,d_G(A)\}$. 
Suppose however for the sake of contradiction that $H$ does not satisfy requirement $r(s,t) = k$. 
Then there is an $H$-minimal $st$-cut $X \subseteq E$ with $|X| < k$ such that $s,t$ are not connected in $H \setminus X$ but are connected in $G \setminus X$. Edge set $X$ is $H$-minimal, so we have that $X = \de_H(A)$ for some vertex set $A$. Note that $\de_G(A)$ is a $G$-minimal $st$-set with $d_G(A) > d_H(A)$. By our assumption, we have that $d_{H}(A) \geq \min\{k,d_G(A)\}$. This is a contradiction, since we have $d_H(A) < k$ and $d_H(A) < d_G(A)$.
\end{proof}

\begin{proof}[Proof of Lemma~\ref{l:iff}]
First suppose that $H$ is a feasible {\RTF} solution. We will show that both (i) and (ii) are satisfied.
Suppose to the contrary that (i) does not hold for some non-central component $J$. Let $H' = H[J]$. Let $u$ and $v$ be the attachment nodes of $J$, and assume without loss of generality that in $G$, any (simple) $st$-path has $u$ before $v$, as in Fig,~\ref{f:sp}(a).
Then, there exists a $J$-minimal $uv$-cut, $\de_{H'}(A_J)$, such that $d_{H'}(A_J) \leq 1$ (recall that $H'$ does not include any of the attachment edges of its associated $3$-class). 
Let $A$ be the union of $A_J$ and all components that lie on some (simple) $us$-path of the cycle chain that do not contain $v$. It can be seen that
$d_H(A)=d_{H'}(A_J)+1 \leq 2$. Since $u$ and $v$ are 3-connected in $G$ (they are in the same $3$-class) and $A$ is a $uv$-set, $d_G(A) \geq 3$. It is also not hard to see that $G[A]$ and $G[V \sem A]$ are both connected, and thus $\de_G(A)$ is a $G$-minimal $st$-cut. This all means that $F = \de_{H}(A)$ is a fault set of size at most 2 such that there is an $st$ path in $G \setminus F$, but no $st$ path in $H \setminus F$.
This contradicts the feasibility of $H$.

Now suppose that (ii) does not hold for some central component $J$. Then, there must be a $J$-minimal $st$-cut, $\de_{H'}(A_J)$, such that $d_{H'}(A_J) < \min\{3,d_J(A_J)\}$ (by Lemma~\ref{l:cover}).
Let $A$ be the union of $A_J$ and all components of cycles that precede $J$ in the cactus chain.
Note that $d_H(A)=d_{H'}(A_J)$ and $d_G(A)=d_J(A_J)$. This gives $d_H(A) < \min\{3,d_G(A)\}$, contradicting the feasibility of $H$.

Now suppose that $H$ satisfies (i,ii). We will show that $H$ is a feasible {\RTF} solution.
Let $F \subs E$ such that $G \sem F$ has an $st$-path and $H \sem F$ has no $st$-path.
One can see that since $F$ cannot contain two edges of the same cactus chain cycle, we have that $F$ must have size at least 3; thus, if there is an $st$ path in $G \setminus F$ for $|F| \leq 2$, then there is an $st$ path in $H \setminus F$ as well.
\end{proof}

\begin{lemma} \label{l:char}
Subgraph $H$ is a feasible solution for instance ${\cal I}$ if and only if for any $st$-set $A$ such that $d_H(A)=2$, $\de_H(A)=\de_J(s)$ or $\de_H(A)=\de_J(t)$.
\end{lemma}
\begin{proof}
Let $H$ be a feasible solution for ${\cal I}$ and let $A$ be an $st$-set with $d_H(A)=2$. We will show that either $\de_H(A)=\de_J(s)$ or $\de_H(A)=\de_J(t)$.
Let $F=\de_H(A)$. Then, $H \sem F$ has no $st$-path. Since $H$ is a feasible solution for ${\cal I}$, $J \sem F$ must also have no $st$-path. Component $J$ is $2$-connected, so $\de_H(A)=\de_J(A)$. By property (B) of Theorem~\ref{t:reduction}, we then have that $\de_J(A)=\de_J(s)$ or $\de_J(A)=\de_J(t)$.

Now suppose that for any $st$-set $A$ with $d_H(A)=2$, we have that $\de_H(A)=\de_J(s)$ or  $\de_H(A)=\de_J(t)$. We will show that $H$ is a feasible solution for ${\cal I}$. For any $st$-set $A$ such that $\de_H(A) \neq \de_J(s)$ and $\de_H(A) \neq \de_J(t)$, we have that $d_H(A) \geq 3$. Since $H$ includes the forced edges $sx,sy,zt,$ and $wt$, we also know that $\de_H(s)=\de_J(s)$ and $\de_H(t)=\de_J(t)$. Thus, we have that $d_H(A) \geq \min\{3,d_J(A)\} $ for all $A$ such that $A$ is an $st$-set. By Lemma~\ref{l:cover}, $H$ is a feasible solution for ${\cal I}$.
\end{proof}

\begin{figure}
\centering 
\includegraphics{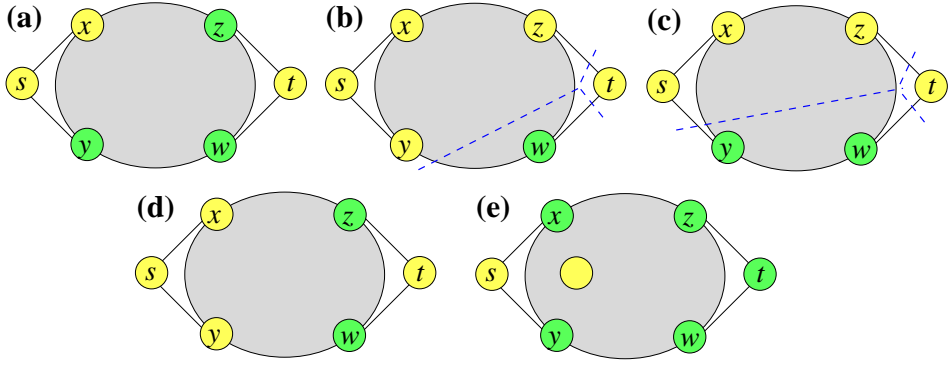}
\caption{Illustration to the proof of Lemmas \ref{l:3f}, \ref{l:3f'}. Nodes in $A$ are yellow and in $V \sem A$ are green.}
\label{f:3f}
\end{figure}

\begin{proof}[Proof of Lemma~\ref{l:3f}]
Suppose that $H$ is a feasible solution for ${\cal I}$, but suppose to the contrary that $R$ is not a $3$-connected subset in $H$. 
Then there is a vertex set $A$ such that $d_H(A) = 2$ and $R \cap A, R \sem A \neq \empt$.
By Lemma~\ref{l:char}, $A$ cannot be an $st$-set.
Hence, without loss of generality, assume that $s,t \in A$. Now we consider several cases, see Fig.~\ref{f:3f}(a,b,c,d).
\begin{enumerate}[(a)]
\item
$|R \cap A| =1$, say $R \cap A=\{x\}$;  see Fig.~\ref{f:3f}(a).
Then $\{sy,tz, tw\} \subs \de_H(A)$, contradicting that $d_H(A) = 2$.
\item
$|R \cap A|=3$, say $R \sem A = \{w\}$; see Fig.~\ref{f:3f}(b). Vertex set $A \sem \{t\}$ is an $st$-set with $d_H(A \sem \{t\})=d_H(A) = 2$, contradicting (by Lemma~\ref{l:char}) that $H$ is a feasible solution.
\item
$|R \cap A|=2$ and $|R \cap \{x,y\}|=|R \cap \{z,w\}|=1$, say $R \cap A=\{x,z\}$; see Fig.~\ref{f:3f}(c).
Then $d_H(A \sem \{t\})=d_H(A) = 2$, contradicting (by Lemma~\ref{l:char}) that $J$ is a feasible solution.
\item
$R \cap A=\{x,y\}$ or $R \cap A=\{z,w\}$, say $R \cap A=\{x,y\}$; see Fig.~\ref{f:3f}(d).
Then $A \sem \{t\}$ is an $st$-set with $d_H(A \sem \{t\}) = d_H(A)-2 = 0$, contradicting that $H$ is a feasible solution.
\end{enumerate}
In all cases we have a contradiction, thus the lemma holds.
\end{proof}

\begin{proof}[Proof of Lemma~\ref{l:3f'}]
Suppose that $R$ is a 3-connected subset in $H$, but suppose to the contrary that $H$ is not a feasible solution.
Then by Lemma~\ref{l:char} there is an $st$-set $A$ with $d_H(A)= 2$ such that $\de_H(A) \neq \de_J(s)$ and $\de_H(A) \neq \de_J(t)$. 
Since $R$ is a $3$-edge-connected subset in $H$, we must have $R \cap A=\empt$ or $R \subset A$. 
Without loss of generality, assume that $R \cap A=\empt$ (see Fig.~\ref{f:3f}(e)); otherwise we interchange the roles of $s$ and $t$. 
But then, $sx,sy \in \de_H(A)$, so we get the contradiction that $\de_H(A)=\{sx,sy\}=\de_J(s)$.
\end{proof}

\begin{proof}[Proof of Theorem~\ref{t:3SND}]
We reduce the $3$-RSND instance to an instance of SND. Let $D = \{d_1, d_2, \dots, d_{\ell} \}$ be the set of all $\ell$ vertex demands in the input $3$-RSND instance (where each $d_i \in D$ corresponds to a demand of the form $(x,y,r)$, where $x,y$ are vertices in $G$ and $r \leq 3$). We will create a new set of ordinary demands, $D'$, that characterizes the problem.

First note that $G$ is 2-connected, so all $1$- and $2$-demands in the input are ordinary demands. We add them to $D'$. We also add all ordinary $3$-demands from the input to $D'$. By Theorem~\ref{t:D}, each \emph{relative} $3$-demand $d_i \in D$ is characterized by a set of ordinary $3$-demands, $D_i$. Thus, for each relative $3$-demand $d_i \in D$, we add all demands in $D_i$ to $D'$. Demand set $D'$ characterizes the $3$-RSND instance, meaning that any optimal solution to the $3$-RSND instance must also satisfy $D'$. All demands in $D'$ are ordinary so we can run Jain's $2$-approximation SND algorithm (\cite{Jain01}) on $G$ under demands $D'$. By Theorem~\ref{t:D}, we only add a polynomial number of demands to $D'$, so the reduction (and the algorithm) runs in polynomial time. This gives us a $2$-approximation for $3$-RSND.
\end{proof}

\section{Proof of Theorem~\ref{thm:genk_characterize} (Structure Theorem)} \label{app:structure}

\subsection{Connectivity Lemmas for Input Graph G} 
To prove  Theorem~\ref{thm:genk_characterize}, we will first need lemmas that describe the connectivity in $G$ between $s$ and any $h$-component, and between any $h$-component and $t$. We will say that the \textit{level $h$ chain} is the ordered sequence of all $h$-components, labelled $R^h_0, \dots, R^h_p$ (where $s \in R^h_0$ and $t \in R^h_p$). Note that although all components in the level $h$ chain are $h$-components, each $h$-component may have a left or right separator with size less than $h$. Let $E(R^h_i)$ denote the set of edges with both endpoints in component $R^h_i$. We define the subchain of a level $h$ chain that starts at $G^h_i$ and ends at $G^h_j$ (where $i \leq j$) to be the subgraph of $G$ induced by $\cup^{j}_{k=i} R^h_k$. An edge $e$ is in a subchain that starts at $G^h_i$ and ends at $G^h_j$ if $e \in \cup^{j}_{k = i} E(R^h_k)$ or $e \in \cup^{j-1}_{k=i} S^h_k$. We define a subchain that starts at $H^h_i$ and ends at $H^h_j$ similarly. We may also use vertex sets $V^h_{(j, r)}$ to define the start and end of a subchain.

\begin{lemma}
\label{lem:s_connectivity_G}
Fix level $h$ of the hierarchical $k$-chain decomposition, and consider the subchain that starts at $s$ and ends at an $h$-component. Let $R^h_i$ be this $h$-component. Then, in $G$, there is a path from $s$ to each vertex in $V^h_{(i, r)}$; these paths only use edges within the level $h$ subchain that begins at $G^h_0$ and ends at $G^h_i$.
\end{lemma}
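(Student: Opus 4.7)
The plan is to prove Lemma~\ref{lem:s_connectivity_G} by induction on the level $h$, with a secondary induction on the position $i$ within the level-$h$ chain. For the base case $h = 2$, the chain is built with input vertex set $X = \{s\}$, so $R^2_0$ is by definition the set of vertices reachable from $s$ in $G \setminus S^2_0$; since $S^2_0$ is a minimal important separator, no edge of $S^2_0$ has both endpoints inside $R^2_0$, and the reachability witnesses lie entirely in $G^2_0$. For $i > 0$ at level $2$, $R^2_i$ is the set reachable from $V^2_{(i,\ell)}$ in the remaining graph minus $S^2_i$, and I would concatenate an $s$-to-$V^2_{(i-1,r)}$ path (given by the inner induction hypothesis) with a crossing edge of $S^2_{i-1}$ and a $V^2_{(i,\ell)}$-to-$V^2_{(i,r)}$ path inside $G^2_i$ to obtain the desired path to each $v \in V^2_{(i,r)}$.

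For the inductive step $h > 2$, let $R^h_i$ have parent $R^{h-1}_j$. If $j > 0$, the outer induction hypothesis gives a path from $s$ to every vertex of $V^{h-1}_{(j-1,r)}$ using the level-$(h-1)$ subchain ending at $G^{h-1}_{j-1}$; adding a crossing edge of the separator between $R^{h-1}_{j-1}$ and $R^{h-1}_j$ extends this to every vertex of $V^{h-1}_{(j,\ell)}$. If $j = 0$ then $V^{h-1}_{(j,\ell)} = \{s\}$ and this is automatic. The level-$h$ chain inside $R^{h-1}_j$ begins with input $X = V^{h-1}_{(j,\ell)}$, so the same secondary induction on position within this chain, exactly as in the base case, extends reachability to every $v \in V^h_{(i,r)}$. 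All edges used lie in the induced subgraph on $\cup_{k=0}^{i} R^h_k$, which is the claimed subchain.

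The main obstacle will be bookkeeping the case where $R^h_i$ is the first $h$-component inside a new parent $R^{h-1}_j$ with $j > 0$: then the edge I use to move from $V^{h-1}_{(j-1,r)}$ to $V^{h-1}_{(j,\ell)}$ is across a lower-level separator, and I must confirm that this edge lies in the level-$h$ subchain. This holds because such a separator edge has both endpoints in $h$-components, so both endpoints are in $\cup_{k=0}^{i} R^h_k$ and hence the edge is in the induced subgraph defining the subchain. A secondary technical point is invoking the minimality of an important separator to conclude that every edge of the separator has exactly one endpoint on each side of the reachable set, which ensures that the paths I construct never exit the intended $h$-component.
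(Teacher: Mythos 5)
Your double induction (outer on the level $h$, inner on the position $i$ within a chain) is organizationally a bit different from the paper's proof, but it captures the same basic strategy: establish that each $h$-component internally connects its left boundary $V^h_{(i,\ell)}$ to every vertex of its right boundary $V^h_{(i,r)}$, and then stitch these paths together across the chain. The paper proves the internal-connectivity fact once, uniformly, via a minimality argument on the separator $S^h_j$: if some $v \in V^h_{(j,r)}$ were unreachable, removing the edges of $S^h_j$ incident to $v$ would still leave a $(V^h_{(j,\ell)}, V^{h-1}_{(m,r)})$-separator, contradicting minimality. You instead appeal to the fact that $R^h_i$ is \emph{by definition} the set of vertices reachable from $V^h_{(i,\ell)}$ in $(G \setminus \cup_{j<i} R^h_j) \setminus S^h_i$.

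There is a genuine gap in that substitution. The reachability-set definition you use holds only for $h$-components that are \emph{not} the last in their parent's $h$-chain: the construction stops by declaring the chain finished, and the last component $R^h_p$ of an inner chain is simply the leftover vertex set $R^{h-1}_j \setminus \cup_{j'<p} R^h_{j'}$, not a reachability set. For such a component, $V^h_{(p,r)}$ equals the parent's right boundary $V^{h-1}_{(j,r)}$, and there is no $S^h_p$ built inside the chain, so ``the same secondary induction, exactly as in the base case'' does not give you $V^h_{(p,\ell)}$-to-$V^h_{(p,r)}$ connectivity within $G^h_p$. At level $h=2$ you happen to dodge this because the last $2$-component contains $t$ and the subchain is all of $G$, but for $h > 2$ the last $h$-component of an intermediate parent does not contain $t$, and this is exactly where your argument currently stops short. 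The fix is already latent in your setup: rather than applying the outer induction hypothesis only to $R^{h-1}_{j-1}$ to reach $V^{h-1}_{(j,\ell)}$, apply it to $R^{h-1}_j$ itself, which gives $s$-to-$V^{h-1}_{(j,r)}$ paths inside the level-$(h-1)$ subchain ending at $G^{h-1}_j$; that subchain induces the same subgraph of $G$ as the level-$h$ subchain ending at $R^h_p$, so those paths are exactly the ones the lemma demands for $R^h_p$. Alternatively, one can argue as the paper does that the truncation of an $R^{h-1}_j$-internal path from $V^{h-1}_{(j,\ell)}$ to $v$ at its last entry into $R^h_p$ yields the needed $V^h_{(p,\ell)}$-to-$v$ path inside $G^h_p$. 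Either patch should be stated explicitly; as written, the inner induction is not closed at position $p$.
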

\begin{proof}
Graph $G$ is 2-edge connected, so the lemma statement is true when $R^h_i$ contains $t$ (that is, when $V^h_{(i, r)} = \{t\}$). Thus we will assume $t \notin R^h_i$.
First we will show that for each $h$-component $R^h_j$ such that $t \notin R^h_j$, there is a path in $G$ from $V^h_{(j, \ell)}$ to each vertex in $V^h_{(j, r)}$. Fix such an $h$-component $R^h_j$, and suppose that component $R^{h-1}_m$ is the parent of $R^h_j$ in the hierarchical $k$-chain decomposition. Suppose for the sake of contradiction that there is no path in $G$ from $V^h_{(j, \ell)}$ to a vertex $v \in V^h_{(j, r)}$.  Let $S_v$ be the set of edges in $S^h_j$ that are incident on $v$. Then, $S^h_j \setminus S_v$ is a $(V^h_{(j, \ell)}, V^{h-1}_{(m, r)})$-separator with size strictly less than $|S^h_j|$. This means that $S^h_j$ is not minimal and therefore not an important separator, giving a contradiction. 

We have shown that in each $h$-component $R^h_j$ such that $t \notin R^h_j$, there is a path in $G$ from $V^h_{(j, \ell)}$ to each vertex in $V^h_{(j, r)}$. Therefore, for every component $R^h_i$, there is a path in $G$ from $s$ to each vertex in $V^h_{(i, r)}$, using only edges within the subchain. This can be seen via a proof by induction on the number of components into the level $h$ subchain that begins at $R^h_0$ and ends at $R^h_i$.
\end{proof}

We now give a similar lemma describing the connectivity between any $h$-component and $t$.

\begin{lemma}
\label{lem:t_connectivity_G}
Fix level $h$ of the hierarchical $k$-chain decomposition, and consider the subchain that starts at an $h$-component and ends at $t$. Let $R^h_i$ be this $h$-component. Then, in $G$, there is a path from each vertex in $V^h_{(i, \ell)}$ to $t$; these paths only use edges within the level $h$ subchain that begins at $G^h_i$ and ends at $G^h_p$.
\end{lemma}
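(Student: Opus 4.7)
The plan is to follow the structure of the proof of Lemma~\ref{lem:s_connectivity_G}, but since the important separators were constructed only from the $s$-side, the direct analog of that proof's minimality argument does not apply. Instead I will appeal to the Observation that each $(h-1)$-component $R^{h-1}_m$ has no $(V^{h-1}_{(m,\ell)}, V^{h-1}_{(m,r)})$-separator of size at most $h-1$; for the base case $h = 2$, this Observation reduces to the $2$-edge-connectivity of $G$.

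The main step is a component-level claim: for every $h$-component $R^h_j$ with $s \notin R^h_j$ and every $u \in V^h_{(j,\ell)}$, there is a path in $G^h_j$ from $u$ to at least one vertex of $V^h_{(j,r)}$. Suppose for contradiction no such path exists, and let $C$ be the connected component of $u$ in $G^h_j$, so $C \cap V^h_{(j,r)} = \empt$. Let $R^{h-1}_m$ be the $(h-1)$-parent of $R^h_j$, and let $A$ be the union of $C$ together with all $h$-components of the local $h$-chain in $R^{h-1}_m$ that precede $R^h_j$. Using the chain construction one checks that (i) each $S^h_i$-edge among the local separators before $R^h_j$ has both endpoints in $A$, because it lies between $R^h_i$ and $V^h_{(i+1,\ell)} \subs R^h_{i+1}$; (ii) $C$ has no edges to $R^h_j \sem C$ (since $C$ is a connected component of $G^h_j$) and no $S^h_j$-edges (since $C \cap V^h_{(j,r)} = \empt$); and (iii) the $S^h_{j-1}$-edges whose $V^h_{(j,\ell)}$-endpoint lies in $C$---which include every edge at $u$---are interior to $A$. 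Consequently $\delta_{G[R^{h-1}_m]}(A) \subs \{e \in S^h_{j-1} : e \text{ has its }V^h_{(j,\ell)}\text{-endpoint in } V^h_{(j,\ell)} \sem C\}$, so $|\delta_{G[R^{h-1}_m]}(A)| \leq |S^h_{j-1}| - 1 \leq h - 1$. Now $V^{h-1}_{(m,\ell)} \subs A$ (it lies in the first local $h$-component), and $V^{h-1}_{(m,r)}$ is disjoint from $A$ (if $R^h_j$ is the last local $h$-component then $V^{h-1}_{(m,r)} = V^h_{(j,r)}$ is disjoint from $C$ by assumption, otherwise $V^{h-1}_{(m,r)}$ lies entirely in later local $h$-components), so this cut is a $(V^{h-1}_{(m,\ell)}, V^{h-1}_{(m,r)})$-separator of size at most $h-1$ in $R^{h-1}_m$, contradicting the Observation.

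With the component-level claim in hand, the lemma follows by reverse induction on $j$ from $p$ down to $i$. For $j = p$ the component claim immediately gives a path in $G^h_p$ from $u$ to a vertex of $V^h_{(p,r)} \ni t$. For $j < p$, apply the component claim to obtain a path in $G^h_j$ from $u$ to some $v \in V^h_{(j,r)}$, cross an important-separator edge at $v$ (necessarily one of the chain edges connecting $R^h_j$ to $R^h_{j+1}$) into some $u' \in V^h_{(j+1,\ell)}$, and invoke the hypothesis on $u'$. All edges used lie in the level-$h$ subchain from $G^h_i$ to $G^h_p$. The main obstacle is the cut-size computation inside the component claim, which hinges on the structural fact that each $S^h_i$-edge crosses only into the immediately following local component and that the last local $h$-component's right boundary coincides with the parent's right boundary; once these are verified, the contradiction with the Observation drops out directly.
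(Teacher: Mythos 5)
Your argument is a genuinely different route from the paper's, but your opening premise is not correct. The paper's proof of this lemma \emph{is} a direct minimality argument, dual to the one in the proof of Lemma~\ref{lem:s_connectivity_G}: if $v \in V^h_{(j,\ell)}$ had no path to $V^h_{(j,r)}$ within the component, then $S^h_{j-1} \setminus S_v$ (where $S_v$ denotes the $S^h_{j-1}$-edges at $v$) would still be a $(V^h_{(j-1,\ell)}, V^{h-1}_{(m,r)})$-separator of strictly smaller size, contradicting minimality of the important separator $S^h_{j-1}$. Minimality is a symmetric property of a cut, so it applies from either side; the directionality in the definition of \emph{important} separator (which prefers the $t$-side reachability set) plays no role here because only minimality is invoked. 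The lemma you prove uses the left separator $S^h_{j-1}$ instead of the right separator $S^h_j$, which is the only change needed.

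Your alternative argument---contradicting the Observation that $R^{h-1}_m$ has no $(V^{h-1}_{(m,\ell)},V^{h-1}_{(m,r)})$-separator of size at most $h-1$---is reasonable and arguably a cleaner structural invariant to lean on, and the edge accounting for $\delta_{G[R^{h-1}_m]}(A)$ in (i)--(iii) is correct. However, there is a gap in one case that you do not address. If $R^h_j$ is the \emph{first} local $h$-component of $R^{h-1}_m$ (this is possible with $s \notin R^h_j$ whenever $s \notin R^{h-1}_m$), then there are no preceding local $h$-components and $A=C$. Your assertion that $V^{h-1}_{(m,\ell)} \subs A$ because it lies in the first local $h$-component then fails to give what you need: $V^{h-1}_{(m,\ell)}$ lies in $R^h_j$, but $A$ is only the connected component of $u$ inside $G[R^h_j]$, and $R^h_j$---being the reachability set of the whole vertex set $V^{h-1}_{(m,\ell)}$---need not be connected, so other vertices of $V^{h-1}_{(m,\ell)}$ may sit outside $C$. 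In that event $\delta_{G[R^{h-1}_m]}(A)$ (which your own accounting shows is empty in this case) need not be a $(V^{h-1}_{(m,\ell)},V^{h-1}_{(m,r)})$-separator in the paper's sense, and the contradiction with the Observation does not follow. The minimality argument sidesteps this: the left separator of $R^h_j$ in the global chain is then the level-$(h-1)$ important separator between $R^{h-1}_{m-1}$ and $R^{h-1}_m$, and removing all of its edges except those incident to $u$ gives the same non-minimality contradiction without requiring any connectedness of $R^h_j$.
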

\begin{proof}
Graph $G$ is 2-edge connected, so the lemma statement is true when $R^h_i$ contains $s$ (that is, when $V^h_{(i, \ell)} = \{s\}$). Thus we will assume $s \notin R^h_i$. First we will show that for each $h$-component $R^h_j$ such that $s \notin R^h_j$, there is a path in $G$ from each vertex in $V^h_{(j, \ell)}$ to $V^h_{(j, r)}$. Fix such an $h$-component $R^h_j$, and suppose that component $R^{h-1}_m$ is the parent of $R^h_j$ in the hierarchical $k$-chain decomposition. Suppose for the sake of contradiction that there is no path in $G$ from a vertex $v \in V^h_{(j, \ell)}$ to $V^h_{(j, r)}$. Let $S_v$ be the set of edges in $S^h_{j-1}$ that are incident on $v$. Then, $S^h_{j-1} \setminus S_v $ is a $(V^h_{(j-1, \ell)}, V^{h-1}_{(m, r)})$-separator with size strictly less than $|S^h_{j-1}|$. This means that $S^h_{j-1}$ is not minimal and therefore not an important separator, giving a contradiction. 

We have shown that in each $h$-component $R^h_j$ such that $s \notin R^h_j$, there is a path in $G$ from each vertex in $V^h_{(j, \ell)}$ to $V^h_{(j, r)}$. Therefore, for every component $R^h_i$, there is a path in $G$ from $V^h_{(i, \ell)}$ to $t$, using only edges within the subchain. This can be seen via a proof by induction on the number of components into the level $h$ chain that beings at $R^h_i$ and ends at $R^h_p$.
\end{proof}

\subsection{Only If}
\label{sec:genk_only_if}
We are now ready to prove that the properties stated in Theorem~\ref{thm:genk_characterize} are necessary in any feasible solution. Let Property 1 denote the first demand set given in Theorem~\ref{thm:genk_characterize}---that is, the set $\big\{(X,Y,d) : X \subseteq V^h_{(i, \ell)}, Y \subseteq V^h_{(i, r)}, (X,Y) \neq (V^h_{(i, \ell)}, V^h_{(i, r)}),  d = \max(0,  k  + |S_X| + |S_Y| - |S^h_{i-1}| - |S^h_i|)  \big\}$. Let Property 2 denote the demand $(V^h_{(i, \ell)},V^h_{(i, r)},h+1)$ and Property 3 denote the demand $(V^h_{(i, \ell)},V^h_{(i, r)},k-1)$ in the union.

\subparagraph*{Important Separators.} Suppose that subgraph $H$ is a feasible solution, but suppose for the sake of contradiction that at level $h$ in the hierarchical $k$-chain decomposition, there is an important separator $S^h_i$ such that an edge $e \in S^h_i$ is not in $H$. Then, the edge set $E' = S^h_i \setminus \{e\}$, which has size at most $k-2$, separates $s$ and $t$ in $H$ but not in $G$. By Lemmas~\ref{lem:s_connectivity_G} and \ref{lem:t_connectivity_G}, in $G \setminus E'$, there is a path from $s$ to each vertex in $V^h_{(i,r)}$ and a path from each vertex in $V^h_{(i+1,\ell)}$ to $t$. The edge $e$ is incident on a vertex in $V^h_{(i,r)}$ and a vertex in $V^h_{(i+1,\ell)}$. Therefore, putting everything together, there is a path from $s$ to $t$ in $G \setminus E'$. In $H$ however, there is no path from $V^h_{(i,r)}$ to $V^h_{(i+1,\ell)}$, and therefore there is no path from $s$ to $t$ in $H \setminus E'$. This contradicts the assumption that $H$ is a feasible solution.

\subparagraph*{Property 1.} Now suppose $H$ is feasible, but suppose for the sake of contradiction that at least one of the RSND demands in Property 1 is not satisfied in subgraph $H^h_i$. Specifically, fix subgraph $H^h_i$, let $X$ and $Y$ be vertex sets such that $X \subseteq V^h_{(i,\ell)}$, $Y \subseteq V^h_{(i,r)}$, and $(X,Y) \neq (V^h_{(i, \ell)}, V^h_{(i, r)})$, and let $d = \max(0,  k  + |S_X| + |S_Y| - |S^h_{i-1}| - |S^h_i|)$. Suppose that the RSND demand $(X,Y,d)$ is not satisfied in $H^h_i$. That is, there exists a fault set $F_i$ with $|F_i| < d$ such that there is a path from $X$ to $Y$ in $G^h_i \setminus F_i$ but there is no path from $X$ to $Y$ in $H^h_i \setminus F_i$. Since we are assuming that the RSND demand $(X,Y,d)$ is not satisfied in $H^h_i$, we can assume that $d > 0$ (since a demand of 0 is always satisfied). Therefore we assume $d = k  + |S_X| + |S_Y| - |S^h_{i-1}| - |S^h_i| > 0$. 

We now show that there exists a fault set $F$ with $|F| < k$ such that $s$ and $t$ are connected in $G \setminus F$ but not in $H \setminus F$. Let $F = (S^h_{i-1} \setminus S_X) \cup (S^h_i \setminus S_Y) \cup F_i$. By Lemma \ref{lem:s_connectivity_G}, and by the fact that all edges in $S_X$ are still in $G \setminus F$, there is a path in $G \setminus F$ from $s$ to each vertex in $X$, using only edges before $G^h_i$ in the level $h$ chain. There is a path from $X$ to $Y$ in $G^h_i \setminus F_i$ (and in $G^h_i \setminus F$). By Lemma \ref{lem:t_connectivity_G}, and by the fact that all edges in $S_Y$ are still in $G \setminus F$, we also have that there is a path in $G \setminus F$ from each vertex in $Y$ to $t$ that only uses edges after $G^h_i$ in the level $h$ chain. Putting it all together, there is a path from $s$ to $t$ in $G \setminus F$. Note that because $X$ and $Y$ are not connected in $H \setminus F$, there is no $st$ path in $H \setminus F$. Now we give an upper bound on the size of $F$:
\begin{align*}
    |F| &\leq (|S^h_{i-1}| - |S_X|) + (|S^h_i| - |S_Y|) + (d-1) \\
    &= |S^h_{i-1}| - |S_X| + |S^h_i| - |S_Y| + k  + |S_X| + |S_Y| - |S^h_{i-1}| - |S^h_i| - 1 \\
    &= k-1.
\end{align*}
We have shown that $|F| \leq k-1$. Thus, our construction of $F$ contradicts the assumption that $H$ is feasible, so all RSND demands in the set $\big\{(X,Y,d) : X \subseteq V^h_{(i, \ell)}, Y \subseteq V^h_{(i, r)}, (X,Y) \neq (V^h_{(i, \ell)}, V^h_{(i, r)}),  d = \max(0,  k  + |S_X| + |S_Y| - |S^h_{i-1}| - |S^h_i|)  \big\}$ on $G^h_i$ are satisfied in $H^h_i$ in all feasible solutions.

\subparagraph*{Property 2.} Now, suppose $H$ is feasible, but suppose for the sake of contradiction that the RSND demand in Property 2, $(V^h_{(i, \ell)},V^h_{(i, r)},h+1)$, is not satisfied in a fixed subgraph $H^h_i$. That is, there exists a fault set $F_i$ with $|F_i| < h+1 \leq k$ such that there is a path from $V^h_{(i, \ell)}$ to $V^h_{(i, r)}$ in $G^h_i \setminus F_i$ but there is no path from $V^h_{(i, \ell)}$ to $V^h_{(i, r)}$ in $H^h_i \setminus F_i$. By Lemma~\ref{lem:s_connectivity_G}, there is a path from $s$ to each vertex in $V^h_{(i, \ell)}$ in $G \setminus F_i$, using only edges before $G^h_i$ in the level $h$ chain. There is a path from at least one vertex in $V^h_{(i, \ell)}$ to a vertex in $V^h_{(i, r)}$ in $G^h_i \setminus F_i$. By Lemma~\ref{lem:t_connectivity_G}, there is a path from each vertex in $V^h_{(i, r)}$ to $t$ in $G \setminus F_i$, using only edges after $G^h_i$ in the level $h$ chain. Putting it all together, there is a path from $s$ to $t$ in $G \setminus F$. However, there is no path from $V^h_{(i, \ell)}$ to $V^h_{(i, r)}$ in $H^h_i \setminus F_i$, so there is no $st$ path in $H \setminus F_i$. This contradicts the assumption that $H$ is feasible, so the RSND demand $(V^h_{(i, \ell)},V^h_{(i, r)},h+1)$ on $G^h_i$ must be satisfied in $H^h_i$ in all feasible solutions.

\subparagraph*{Property 3.} Finally, suppose $H$ is feasible, but suppose for the sake of contradiction that the RSND demand in Property 3, $(V^h_{(i, \ell)},V^h_{(i, r)},k-1)$, is not satisfied in a fixed subgraph $H^h_i$. Then there exists a fault set $F_i$ with $|F_i| < k-1$ such that there is a path from $V^h_{(i, \ell)}$ to $V^h_{(i, r)}$ in $G^h_i \setminus F_i$ but there is no path from $V^h_{(i, \ell)}$ to $V^h_{(i, r)}$ in $H^h_i \setminus F_i$. By Lemmas~\ref{lem:s_connectivity_G} and \ref{lem:t_connectivity_G}, there is a path from $s$ to $t$ in $G \setminus F_i$. However, there is no path from $V^h_{(i, \ell)}$ to $V^h_{(i, r)}$ in $H^h_i \setminus F_i$, so there is no $st$ path in $H \setminus F_i$. This contradicts the assumption that $H$ is feasible, so the RSND demand $(V^h_{(i, \ell)},V^h_{(i, r)},k-1)$ on $G^h_i$ must be satisfied in $H^h_i$ in all feasible solutions.

\subsection{If}
We now prove that the properties stated in Theorem~\ref{thm:genk_characterize} are sufficient. In this section, we will let $H$ be a subgraph of $G$ such that all edges in $S$ are in $H$ and all RSND demands in the statement of Theorem~\ref{thm:genk_characterize} are satisfied forall subgraphs $H^h_i$. For all possible fault sets $F \subseteq E$ such that $|F| < k$, we will show that if $s$ and $t$ are connected in $G \setminus F$, they must also be connected in $H \setminus F$. Going forward, fix $F \subseteq E$ to be a fault set such that $|F| < k$.

\subsubsection{Connectivity Lemmas for Subgraph H} We first give the equivalent of Lemmas~\ref{lem:s_connectivity_G} and \ref{lem:t_connectivity_G} for connectivity in $H$. The following lemmas describe the connectivity in $H$ between $s$ and any $h$-component, and between any $h$-component and $t$.

\begin{lemma}
\label{lem:s_connectivity_H}
Let $H$ be a subgraph of $G$, and suppose that all properties in the statement of Theorem~\ref{thm:genk_characterize} are satisfied by each subgraph $H^h_i$. Fix level $h$ of the hierarchical $k$-chain decomposition, and fix a subgraph $H^h_i$ in the level $h$ chain. Then, in $H$, there is a path from $s$ to each vertex in $V^h_{(i, r)}$; these paths only use edges within the level $h$ subchain that begins at $H^h_0$ and ends at $H^h_i$.
\end{lemma}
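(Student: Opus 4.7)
I would prove this by induction on $i$, closely mirroring the proof of Lemma~\ref{lem:s_connectivity_G} but replacing the minimality argument for important separators with appeals to Properties~1 and~2 of the Structure Theorem (Theorem~\ref{thm:genk_characterize}). The inductive claim is that every vertex of $V^h_{(i, r)}$ is connected to $s$ in $H$ using only edges of the subchain from $H^h_0$ to $H^h_i$. The base case $i = 0$ falls out of the convention $V^h_{(0, \ell)} = \{s\}$ (inherited from the hierarchical chain construction, where $S^h_{-1}$ is taken to be empty), so the inductive step below specializes correctly.

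\textbf{Inductive step.} Assume the claim for $i - 1$. Since every edge of $S^h_{i-1}$ lies in $S \subseteq H$ by hypothesis, and by construction every vertex of $V^h_{(i, \ell)}$ is incident on at least one edge of $S^h_{i-1}$ whose other endpoint lies in $V^h_{(i-1, r)}$, the inductive hypothesis immediately gives that every vertex of $V^h_{(i, \ell)}$ is connected to $s$ in $H$ through the subchain up to $H^h_{i-1}$ together with the separator $S^h_{i-1}$. It remains to show, for each $v \in V^h_{(i, r)}$, that $v$ is connected to some vertex of $V^h_{(i, \ell)}$ inside $H^h_i$.

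\textbf{Invoking the Structure Theorem.} Fix such a $v$. By Lemma~\ref{lem:s_connectivity_G} there is an $s$-to-$v$ path in $G$ confined to the subchain up to $G^h_i$; any such path must enter $R^h_i$ through an edge of $S^h_{i-1}$ and hence through some vertex of $V^h_{(i, \ell)}$, so its restriction to $G^h_i$ is a $V^h_{(i, \ell)}$-to-$v$ path in $G^h_i$. Now we apply the Structure Theorem to $H^h_i$. If $V^h_{(i, r)} = \{v\}$, Property~2 with demand $(V^h_{(i, \ell)}, V^h_{(i, r)}, h+1)$ and empty fault set transfers this path to $H^h_i$. Otherwise, $(V^h_{(i, \ell)}, \{v\}) \neq (V^h_{(i, \ell)}, V^h_{(i, r)})$ and we apply Property~1 with $X = V^h_{(i, \ell)}$, $Y = \{v\}$; then $|S_X| = |S^h_{i-1}|$ and $|S_Y| = |S_v| \geq 1$, so
\[
d \;=\; \max\bigl(0,\; k + |S^h_{i-1}| + |S_v| - |S^h_{i-1}| - |S^h_i|\bigr) \;=\; \max\bigl(0,\; k + |S_v| - |S^h_i|\bigr) \;\geq\; 2,
\]
since $|S^h_i| \leq h \leq k-1$. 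With empty fault set this again produces a $V^h_{(i, \ell)}$-to-$v$ path in $H^h_i$. Combining with the earlier connection of $V^h_{(i, \ell)}$ to $s$ completes the inductive step.

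\textbf{Main obstacle.} The main obstacle is bookkeeping the degenerate cases at the chain boundaries (where $V^h_{(0, \ell)} = \{s\}$ forces $S^h_{-1} = \emptyset$, and symmetrically for the final component containing $t$) and verifying that the arithmetic defining $d$ in Property~1 stays at least $1$ across all applications. Both are handled cleanly by the choice $X = V^h_{(i, \ell)}$, which makes $|S_X| = |S^h_{i-1}|$ cancel the $|S^h_{i-1}|$ term. A secondary subtlety is that in the flattened level-$h$ chain the separator between two consecutive $h$-components need not be a size-$h$ important separator—it can be a higher-level separator of smaller size—but since all such separators are part of $S \subseteq H$, the argument above applies uniformly.
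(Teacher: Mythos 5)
Your proof is correct and follows essentially the same path as the paper: induct on the chain index, use Lemma~\ref{lem:s_connectivity_G} to certify the needed path exists in $G^h_i$, then split on whether $\{v\} = V^h_{(i,r)}$ to invoke Property~2 or Property~1 with $X = V^h_{(i,\ell)}$, $Y = \{v\}$, and observe that $|S_X| = |S^h_{i-1}|$ cancels and $|S^h_i| \leq k-1$ forces $d \geq 2 > 0$. The paper phrases the final step as "prove the claim for every component $H^h_j$, then induct on the number of components," whereas you fold the induction in from the start, but the content is identical.
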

\begin{proof}
First we show that for any subgraph $H^h_j$, there is a nonzero RSND demand from the vertex set $V^h_{(j,\ell)}$ to each vertex $v \in V^h_{(j,r)}$ that is satisfied in $H^h_j$. That is, for all $v \in V^h_{(j,r)}$, the RSND demand $\big(V^h_{(j,\ell)}, v, d \big)$ with $d > 0$ on $G^h_j$ is satisfied in $H^h_j$. This would mean that there is a path in $H^h_j$ from $V^h_{(j,\ell)}$ to $v$ since there is a path in $G^h_j$ from $V^h_{(j,\ell)}$ to $v$ (from Lemma~\ref{lem:s_connectivity_G}).

We have two cases: If $\{v\} = V^h_{(j,r)}$, then the RSND demand $\big(V^h_{(j,\ell)}, v, h+1)$ on $G^h_j$ is a nonzero demand satisfied in $H^h_j$. Now suppose that $\{v\} \neq V^h_{(j,r)}$. To show that a nonzero RSND demand from $X = V^h_{(j,\ell)}$ to $Y = \{v\}$ is satisfied in $H^h_j$, we plug $|S_X| = |S^h_{j-1}|$ and $|S_Y| \geq 1$ into the formula $d = \max(0,  k  + |S_X| + |S_Y| - |S^h_{j-1}| - |S^h_{j}|)$. We plug in $|S_Y| \geq 1$ because $v$ is adjacent to at least one edge in $S^h_j$:
\begin{align*}
     d &= \max(0,  k  + |S_X| + |S_Y| - |S^h_{j-1}| - |S^h_{j}|) \\
     &\geq k  + |S_X| + |S_Y| - |S^h_{j-1}| - |S^h_{j}| \\
     &\geq k  + |S^h_{j-1}| + 1 - |S^h_{j-1}| - |S^h_{j}| \\
     &= k + 1  - |S^{h}_{j}| \\
     &> 0 \tag{$|S^h_j| \leq k-1$}.
\end{align*}
We have shown that there is a positive RSND demand from $V^h_{(j,\ell)}$ to each vertex in $V^h_{(j,r)}$. By Lemma \ref{lem:s_connectivity_G}, we have that in subgraph $G^h_j$, there is a path from $V^h_{(j,\ell)}$ to each vertex in $V^h_{(j,r)}$. Therefore, in $H^{h}_j$, there must be a path from $V^h_{(j,\ell)}$ to each vertex in $V^h_{(j,r)}$. This also implies that in $H$, each vertex in $V^h_{(i,r)}$ is reachable from $s$, using only edges within the subchain from $H^h_0$ to $H^h_i$. This can be seen via a proof by induction on the number of components into the level $h$ subchain that starts at $H^h_0$ and ends at $H^h_i$.
\end{proof} 

We now give a similar lemma describing the connectivity between any $h$-component and $t$. 

\begin{lemma}
\label{lem:t_connectivity_H}
Let $H$ be a subgraph of $G$, and suppose that all properties in the statement of Theorem~\ref{thm:genk_characterize} are satisfied by each subgraph $H^h_i$. Fix level $h$ of the hierarchical $k$-chain decomposition, and fix a subgraph $H^h_i$ in the level $h$ chain. Then, in $H$, there is a path from each vertex in $V^h_{(i, \ell)}$ to $t$; these paths only use edges within the level $h$ subchain that begins at $H^h_i$ and ends at $H^h_p$.
\end{lemma}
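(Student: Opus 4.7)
\subparagraph*{Proof proposal for Lemma~\ref{lem:t_connectivity_H}.}
The plan is to mirror the argument given for Lemma~\ref{lem:s_connectivity_H}, exploiting the left--right symmetry of the hierarchical chain and of the demand sets in Properties~1 and~2 of Theorem~\ref{thm:genk_characterize}. Concretely, I will first prove the local claim that for every $h$-component $R^h_j$ and every vertex $u \in V^h_{(j,\ell)}$, there is a feasible path from $u$ to $V^h_{(j,r)}$ in $H^h_j$, and then bootstrap via induction along the level-$h$ chain from $R^h_i$ to $R^h_p$.

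For the local claim, fix $u \in V^h_{(j,\ell)}$ and split into two cases. If $V^h_{(j,\ell)} = \{u\}$, then Property~2 gives the demand $(V^h_{(j,\ell)}, V^h_{(j,r)}, h+1)$ on $G^h_j$, which has positive connectivity requirement; together with Lemma~\ref{lem:t_connectivity_G} (which guarantees an $uv$-path in $G^h_j$ for some $v \in V^h_{(j,r)}$), this yields a path from $u$ to $V^h_{(j,r)}$ in $H^h_j$. Otherwise, $\{u\} \subsetneq V^h_{(j,\ell)}$, so the pair $(X,Y) = (\{u\}, V^h_{(j,r)})$ is a valid choice in Property~1 and satisfies $|S_X| \geq 1$ (since $u$ is incident to at least one edge of $S^h_{j-1}$ by being an attachment node) and $|S_Y| = |S^h_j|$. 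Plugging into the Property~1 formula,
\begin{align*}
    d &= \max\bigl(0,\; k + |S_X| + |S_Y| - |S^h_{j-1}| - |S^h_j|\bigr) \\
      &\geq k + 1 - |S^h_{j-1}| \;>\; 0,
\end{align*}
because $|S^h_{j-1}| \leq k-1$. Combined with the $G^h_j$-side connectivity from Lemma~\ref{lem:t_connectivity_G}, the satisfied RSND demand yields the desired $u$-to-$V^h_{(j,r)}$ path in $H^h_j$.

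With the local claim in hand, I would complete the proof by induction on $p-i$, the number of $h$-components in the subchain from $R^h_i$ to $R^h_p$. The base case $i=p$ is immediate since $t \in V^h_{(p,r)}$ (or directly because $V^h_{(p,r)} = \{t\}$, using 2-edge-connectivity of $G$ exactly as in Lemma~\ref{lem:t_connectivity_H}'s hypothesis setup). For the inductive step, given any $u \in V^h_{(i,\ell)}$, the local claim produces a path in $H^h_i$ from $u$ to some $v \in V^h_{(i,r)}$ using only edges in $E(R^h_i)$; since all separator edges of $S^h_i$ lie in $H$, we can cross $S^h_i$ from $v$ to its neighbor $v' \in V^h_{(i+1,\ell)}$, and then apply the induction hypothesis to reach $t$ from $v'$ using only edges in the subchain from $R^h_{i+1}$ to $R^h_p$.

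The main conceptual obstacle, as in Lemma~\ref{lem:s_connectivity_H}, is verifying that the parameter $d$ produced by the Property~1 formula is strictly positive in the case $\{u\} \subsetneq V^h_{(j,\ell)}$; this is where the asymmetry of the choice $X = \{u\}$, $Y = V^h_{(j,r)}$ (as opposed to the symmetric $X = V^h_{(j,\ell)}$, $Y = \{v\}$ used for Lemma~\ref{lem:s_connectivity_H}) matters, and where the bound $|S^h_{j-1}| \leq k-1$ is invoked. Everything else is routine symmetry of the chain construction.
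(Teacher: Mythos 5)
Your proof is correct and matches the paper's argument: the local per-component claim is established via Property~2 when $V^h_{(j,\ell)}$ is a singleton and Property~1 with $X = \{u\}$, $Y = V^h_{(j,r)}$ otherwise, the algebra $d \geq k + 1 - |S^h_{j-1}| > 0$ is identical, and the paper also concludes with the same induction along the level-$h$ subchain. One small fix to your wording: the base case $i = p$ follows from the local claim applied to $R^h_p$ (giving paths to $V^h_{(p,r)} = \{t\}$ inside $H^h_p$), not from $2$-edge-connectivity of $G$, which only yields connectivity in $G$ rather than in $H$.
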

\begin{proof}
First we show that for all $v \in V^h_{(j,\ell)}$, the RSND demand $\big(v, V^h_{(j,r)} d \big)$ with $d > 0$ on $G^h_j$ is satisfied in $H^h_j$. This would mean that there is a path in $H^h_j$ from $v$ to $V^h_{(j,r)}$ since there is a path in $G^h_j$ from $v$ to $V^h_{(j,r)}$ (from Lemma~\ref{lem:t_connectivity_G}).

We have two cases: If $\{v\} = V^h_{(j,\ell)}$, then the RSND demand $\big(v, V^h_{(j,r)},  h+1)$ on $G^h_j$ is a nonzero RSND demand satisfied in $H^h_j$. To show that a nonzero RSND demand from $X = \{v\}$ to $Y = V^h_{(j,r)}$ on $G^h_j$ is satisfied in $H^h_j$, we plug $|S_X| \geq 1$ and $|S_Y| = |S^h_{j}|$ into the formula $d = \max(0,  k  + |S_X| + |S_Y| - |S^h_{j-1}| - |S^h_{j}|)$:
\begin{align*}
     d &= \max(0,  k  + |S_X| + |S_Y| - |S^h_{j-1}| - |S^h_{j}|) \\
     &\geq k  + |S_X| + |S_Y| - |S^h_{j-1}| - |S^h_j| \\
     &\geq k + 1 + |S^h_{j}| - |S^h_{j-1}| - |S^h_j| \\
     &= k + 1  - |S^{h}_{j-1}| \\
     &> 0. \tag{$|S^{h}_{j-1}| \leq k-1$} 
\end{align*}
We have shown that there is a positive RSND demand from each vertex in $V^h_{(j,\ell)}$ to $V^h_{(j,r)}$. By Lemma~\ref{lem:t_connectivity_G}, we have that in subgraph $G^h_j$, there is a path from each vertex in $V^h_{(j,\ell)}$ to $V^h_{(j,r)}$. Therefore, in $H^{h}_j$, there must be a path from each vertex in $V^h_{(j,\ell)}$ to $V^h_{(j,r)}$. This also implies that in $H$, there is a path from each vertex in $V^h_{(i,\ell)}$ to $t$, using only edges within the subchain from $H^h_i$ to $H^h_p$. This can be seen via a proof by induction on the number of components into the level $h$ subchain that starts at $H^h_i$ and ends at $H^h_p$.
\end{proof}  

\subsubsection{Analyzing Fault Sets via the Hierarchical Chain Decomposition}
We say that an edge $e$ is inside a subgraph $G^h_i$ in the hierarchical $k$-chain decomposition if $e \in E(R^h_i)$. That is, an edge is in a subgraph $G^h_i$ if the edge has both endpoints in $R^h_i$. Edges in the left or right separator of $G^h_i$ are not considered to be inside $G^h_i$. We now consider two cases for the locations of the edge faults in the hierarchical $k$-chain decomposition. In \textbf{Case 1}, all fault edges in $F$ are inside the same $(k-1)$-component in the decomposition. In \textbf{Case 2}, there is some level $h$ such that the edges in $F$ are not all inside the same $h$-component, but at level $h-1$, the edges in $F$ are inside a single $(h-1)$-component. 

\paragraph{Case 1.} We first consider Case 1, which is much simpler, and prove that there is an $st$ path in $H \setminus F$ under this case. Suppose we are in Case 1, and let $R^{k-1}_i$ be the $(k-1)$-component that contains all of $F$.

\begin{lemma}
\label{lem:case1}
Suppose that Case 1 applies in the hierarchical $k$-chain decomposition; that is, each edge in $F$ is inside the same $(k-1)$-component $R^{k-1}_i$. Suppose also that there is an $st$ path in $G \setminus F$. Then, there is an $st$ path in $H \setminus F$.
\end{lemma}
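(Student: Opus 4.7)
The plan is to exploit Property 2 of Theorem~\ref{thm:genk_characterize} at the bottom level of the hierarchy, namely $h = k-1$, where it guarantees that $H^{k-1}_i$ satisfies the RSND demand $\bigl(V^{k-1}_{(i,\ell)}, V^{k-1}_{(i,r)}, k\bigr)$ on $G^{k-1}_i$. Since $|F| < k$ and, by assumption, $F \subseteq E(R^{k-1}_i)$, this demand is exactly tuned for our situation: any $V^{k-1}_{(i,\ell)}$-to-$V^{k-1}_{(i,r)}$ path in $G^{k-1}_i \setminus F$ must be witnessed by such a path in $H^{k-1}_i \setminus F$.

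First, I would argue that the hypothesized $st$-path in $G \setminus F$ forces a $V^{k-1}_{(i,\ell)}$-to-$V^{k-1}_{(i,r)}$ subpath inside $G^{k-1}_i \setminus F$. Since $F$ is entirely inside $R^{k-1}_i$, no separator edges are faulty. Every $st$-path in $G$ must cross the separators $S^{k-1}_{i-1}$ and $S^{k-1}_i$ (these are genuine edge cuts of $G$), so the path in $G \setminus F$ enters $R^{k-1}_i$ through some vertex of $V^{k-1}_{(i,\ell)}$ and exits through some vertex of $V^{k-1}_{(i,r)}$; the corresponding portion of the path lies inside $G^{k-1}_i \setminus F$. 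The edge cases where $R^{k-1}_i$ contains $s$ or $t$ are handled by the convention that then $V^{k-1}_{(i,\ell)} = \{s\}$ or $V^{k-1}_{(i,r)} = \{t\}$ and the missing separator is empty; the same argument then applies to a one-sided portion of the path.

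Next, I would apply Property 2 of Theorem~\ref{thm:genk_characterize} with $h = k-1$ to the component $R^{k-1}_i$. Because $|F| < k = h+1$ and there is a $V^{k-1}_{(i,\ell)}$-to-$V^{k-1}_{(i,r)}$ path in $G^{k-1}_i \setminus F$, feasibility of the demand gives a $V^{k-1}_{(i,\ell)}$-to-$V^{k-1}_{(i,r)}$ path $P$ inside $H^{k-1}_i \setminus F$. Let $u \in V^{k-1}_{(i,\ell)}$ and $v \in V^{k-1}_{(i,r)}$ be its endpoints.

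Finally, I would extend $P$ into a full $st$-path in $H \setminus F$ using the one-sided connectivity lemmas. By Lemma~\ref{lem:s_connectivity_H} applied at level $k-1$ to component $R^{k-1}_{i-1}$, there is an $s$-to-$u'$ path in $H$ (for any $u' \in V^{k-1}_{(i-1,r)}$) that lies entirely in the level $k-1$ subchain preceding $R^{k-1}_i$; crossing the fully intact separator $S^{k-1}_{i-1}$ then reaches $u$. Symmetrically, Lemma~\ref{lem:t_connectivity_H} yields a $v$-to-$t$ path in the subchain following $R^{k-1}_i$. Since $F$ lies entirely inside $R^{k-1}_i$, none of these extending edges are faulty, and concatenating with $P$ produces the desired $st$-path in $H \setminus F$. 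The main subtle point is the first step, where one must be sure that the $st$-path's use of $R^{k-1}_i$ really does cross the component left-to-right; the chain topology of the hierarchy together with $F$ lying inside $R^{k-1}_i$ makes this a short structural observation rather than an involved case analysis, so I expect this to be a clean and direct proof.
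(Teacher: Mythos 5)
Your proof is correct and follows the same overall plan as the paper (apply Property~2 at level $h=k-1$, then splice in the one-sided connectivity given by Lemmas~\ref{lem:s_connectivity_H} and~\ref{lem:t_connectivity_H}). The one place you diverge is in how you extract a $V^{k-1}_{(i,\ell)}$--$V^{k-1}_{(i,r)}$ path in $H^{k-1}_i \setminus F$. You invoke the RSND demand directly for the particular fault set $F$, which obliges you to first argue that the $st$-path in $G\setminus F$ induces a left-to-right path in $G^{k-1}_i \setminus F$ (your ``subtle point,'' which is correct but needs the ``last entry from the left, first exit to the right'' bookkeeping to be airtight, since the path may weave in and out of $R^{k-1}_i$). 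The paper instead sidesteps this entirely: it notes that $G^{k-1}_i$ has $\geq k$ edge-disjoint $V^{k-1}_{(i,\ell)}$--$V^{k-1}_{(i,r)}$ paths (because a $(k-1)$-component has no left-to-right separator of size $\leq k-1$), concludes via the RSND demand of value $k$ and Menger that $H^{k-1}_i$ also has $\geq k$ edge-disjoint such paths, and then simply deletes $|F|<k$ edges. That version never uses the hypothesis that $s$ and $t$ are connected in $G\setminus F$, and so yields the (slightly stronger) unconditional conclusion without any reasoning about how the $st$-path threads through the component. Both routes are valid; the paper's is a little more robust, yours is a little more economical in its use of Menger.
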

\begin{proof}
In $G^{k-1}_i$, there are at least $k$ edge-disjoint paths from $V^{k-1}_{(i,\ell)}$ to $V^{k-1}_{(i,r)}$ (otherwise, $G^{k-1}_i$ would have a $(V^{k-1}_{(i,\ell)}, V^{k-1}_{(i,r)})$-separator with size at most $k-1$, contradicting the structure of the hierarchical decomposition). In addition, the subgraph $H^{k-1}_i$ satisfies the RSND demand $(V^{k-1}_{(i,\ell)}, V^{k-1}_{(i,r)}, h+1)$ on $G^{k-1}_i$, where $h+1 = k$. Therefore, in $H^{k-1}_i$, there are also at least $k$ edge-disjoint paths from $V^{k-1}_{(i,\ell)}$ to $V^{k-1}_{(i,r)}$. Since $F$ has size at most $k-1$, there must be a path in $H^{k-1}_i \setminus F$ from $V^{k-1}_{(i,\ell)}$ to $V^{k-1}_{(i,r)}$. By Lemma~\ref{lem:s_connectivity_H}, there is a path from $s$ to each vertex in $V^{k-1}_{(i,\ell)}$, and by Lemma~\ref{lem:t_connectivity_H} there is a path from each vertex in $V^{k-1}_{(i,r)}$ to $t$. Combining all aforementioned paths gives a path from $s$ to $t$ in $H \setminus F$.
\end{proof}

\paragraph{Case 2.} We now consider Case 2, which is more complex. In Case 2 there is some level $h$ such that the edges in $F$ are not all inside the same $h$-component, but at level $h-1$, the edges in $F$ are inside a single $h-1$-component. Let $R^{h-1}_j$ be this $(h-1)$-component. Going forward, we will focus on what we will call the fault subchain of $G$.

\begin{definition}
The \textit{fault subchain} of $G$ is the $(V^{h-1}_{(j,\ell)}, V^{h-1}_{(j,r)})$ $h$-chain of $R^{h-1}_j$. Let $R_0, \dots, R_p$ be the $h$-components of the fault subchain, and let $S_{0}, \dots, S_{p-1}$ be the important separators adjacent to components in the fault subchain. Note that the left and right separators of $R^{h-1}_j$ are not included in the fault subchain.
\end{definition}

We will show that there is a path from $V_{(0,\ell)}$ to $V_{(p,r)}$ in the fault subchain after all faults are removed. With Lemmas~\ref{lem:s_connectivity_H} and \ref{lem:t_connectivity_H}, this will be enough to prove that there is an $st$ path in $H \setminus F$ given an $st$ path in $G \setminus F$. 

We first give some useful notation and definitions. We denote the prefix subchains of the fault subchain as follows. Let $L_i$ be all vertices in $\cup^i_{k=0} R_k$. Let $L_i^G$ be the subgraph of $G$ induced by $L_i$, and let $L_i^H$ be the subgraph of $H$ induced by $L_i$. Let $f_i$ be the number of edge faults in $L^G_i$.

Let $i \leq m$. In general, we will say that an edge $e \in S_m$ in the fault subchain is ``reachable'' from a vertex set $X \subseteq V_{(i,\ell)}$ in $L^G_m \setminus F$ (in $L^H_m \setminus F$) if there is a path in $L^G_m \setminus F$ (in $L^H_m \setminus F$) from $X$ to the vertex in $V_{(m,r)}$ that is incident on $e$, and the path only uses edges in the subchain from $G^h_i$ to $G^h_m$. Note that $e$ can be considered reachable even if $e \in F$. 
We will also use $S^G_i$ ($S^H_i$) to denote the set of edges in $S_i$ that are reachable from $V_{(0,\ell)}$ in $L^G_i \setminus F$ (in $L^H_i \setminus F$). That is, an edge $e$ is in $S^G_i$ (in $S^H_i$) if in $L^G_i \setminus F$ (in $L^H_i \setminus F$), there is a path from $V_{(0,\ell)}$ to the vertex in $V_{(i,r)}$ incident on $e$. In the next lemma, we prove a lower bound for the size of $S^G_i$.

\begin{lemma}[Lower bound on $|S^G_i|$]
\label{lem:G_F_reachable}
Suppose there is an $st$ path in $G \setminus F$. Let $L^G_i$ be a prefix subchain of the fault subchain such that $i \neq p$, and let $f_i$ be the number of edge faults in $L^G_i$. The edge set $S^G_i$ has size at least $h-f_i$.
\end{lemma}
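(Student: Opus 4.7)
The plan is to derive the bound from Menger's theorem applied to the $(h-1)$-component $R^{h-1}_j$ that contains the fault subchain. The first step is to verify that $S_i$ is a minimum $(V_{(0,\ell)}, V^{h-1}_{(j,r)})$-cut in $R^{h-1}_j$, where $V_{(0,\ell)} = V^{h-1}_{(j,\ell)}$. For the upper bound I would show that $S_i = \delta_{R^{h-1}_j}(R_0 \cup \cdots \cup R_i)$ by tracing through the $h$-chain construction: since each $R_k$ is defined as the connected component of $V_{(k,\ell)}$ in $(R^{h-1}_j \setminus \bigcup_{m<k} R_m) \setminus S_k$, any edge of $R^{h-1}_j$ with one endpoint in $R_k$ and the other outside $\bigcup_{m \leq k} R_m$ must lie in $S_k$; combined with the fact that edges of $S_k$ go between $V_{(k,r)} \subseteq R_k$ and $V_{(k+1,\ell)} \subseteq R_{k+1}$, this forces the only edges of $R^{h-1}_j$ leaving the prefix $R_0 \cup \cdots \cup R_i$ to be exactly $S_i$. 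For the lower bound on the cut value I would invoke the structural observation from Section~\ref{sec:chain_decomp} that an $(h-1)$-component admits no $(V^{h-1}_{(j,\ell)}, V^{h-1}_{(j,r)})$-separator of size at most $h-1$.

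Once $S_i$ is known to be a minimum cut of size exactly $h$, Menger's theorem yields $h$ edge-disjoint paths $P_1,\ldots,P_h$ from $V_{(0,\ell)}$ to $V^{h-1}_{(j,r)}$ in $R^{h-1}_j$. Each $P_k$ must cross $S_i$, and edge-disjointness forces each to use exactly one edge of $S_i$; relabel so that $P_k$ uses $e_k$. Truncating $P_k$ at the moment it first enters $V_{(i,r)}$ produces $h$ edge-disjoint paths $Q_1,\ldots,Q_h$ inside $L^G_i$, each from $V_{(0,\ell)}$ to the $V_{(i,r)}$-endpoint of the corresponding $e_k$.

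The lemma then follows from a pigeonhole step. The paths $Q_1,\ldots,Q_h$ are edge-disjoint and sit inside $L^G_i$, so at most $f_i$ of them can contain an edge of $F$. For every unbroken $Q_k$, the $V_{(i,r)}$-endpoint of $e_k$ is reachable from $V_{(0,\ell)}$ in $L^G_i \setminus F$, so $e_k \in S^G_i$. Distinct $e_k$'s contribute distinct elements to $S^G_i$, giving $|S^G_i| \geq h - f_i$. The hypothesis that $s$ and $t$ remain connected in $G \setminus F$ is not actually needed for this particular bound; I suspect it is retained for uniform use in the surrounding case analysis.

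The main obstacle I anticipate is the first step: ruling out ``leapfrog'' edges of $R^{h-1}_j$ that might connect, say, a vertex of $R_0$ directly to a vertex of $R_{i+1}$ and thereby escape $S_i$. This is handled by the inductive argument above but requires care about when each endpoint is present during the chain construction. The remainder is essentially bookkeeping on top of max-flow/min-cut.
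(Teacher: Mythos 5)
Your proof is correct, and it takes a genuinely different route from the paper's. The paper argues by contradiction: if $|S^G_i| < h - f_i$, then $F_i \cup S^G_i$ is itself a $(V^{h-1}_{(j,\ell)}, V^{h-1}_{(j,r)})$-separator of size at most $h-1$ (since any $V_{(0,\ell)}$--$V_{(p,r)}$ path must either hit a fault in $L^G_i$ or leave $L_i$ through a reachable edge of $S_i$), contradicting the structural fact that an $(h-1)$-component admits no such separator. That is a one-paragraph counting argument. You instead use the same structural fact to certify that $S_i$ is a minimum cut of size exactly $h$, invoke Menger to get $h$ edge-disjoint witnessing paths, truncate them inside $L_i$, and apply pigeonhole: at most $f_i$ of the $h$ truncated paths can be killed by faults. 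Both proofs hinge on the identical underlying fact about $(h-1)$-components; yours is more constructive (you exhibit the reachable endpoints explicitly, and your intermediate claim $S_i = \delta_{R^{h-1}_j}(R_0 \cup \cdots \cup R_i)$ is a reusable structural observation), while the paper's is shorter and sidesteps the need to prove the no-leapfrog-edge claim. You also correctly observe that the $st$-path-in-$G\setminus F$ hypothesis is not actually used; the same is true of the paper's proof. The one place you should be slightly more careful is in asserting that each $P_k$ crosses $S_i$ exactly once and never returns to $L_i$: this follows from edge-disjointness plus $|S_i| = h$ (if any path used $\geq 3$ edges of $S_i$, the $h$ paths would collectively need $\geq h+2$ edges of $S_i$), but it is worth stating.
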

\begin{proof}
Let $F_i \subseteq F$ be the set of fault edges in the subchain $L^G_i$, where $i \neq p$. Suppose for the sake of contradiction that $|S^G_i| < h-f_i$. Observe that the edge set $F_i \cup S^G_i$ separates $V_{(0,\ell)} = V^{h-1}_{(j,\ell)}$ and $V_{(p,r)} = V^{h-1}_{(j,r)}$. However, $F_i \cup S^G_i$ has size at most $f_i + (h-f_i-1) = h-1$. This means that $F_i \cup S^G_i$ is a $(V^{h-1}_{(j,\ell)}, V^{h-1}_{(j,r)})$-separator with size at most $h-1$. This is a contradiction, since $R^{h-1}_j$, which is an $(h-1)$-component, cannot have a $(V^{h-1}_{(j,\ell)}, V^{h-1}_{(j,r)})$-separator with size $h-1$ or less. We therefore have that $S^G_i$ has size at least $h-f_i$.
\end{proof}

For any subgraph $G^h_i$ in the fault subchain such that $0 < i < p$, we give a lower bound on the number of edge disjoint paths between any two vertex sets $X \subseteq V_{(i,\ell)}$ and $Y \subseteq V_{(i,r)}$.

\begin{lemma} [Edge-disjoint paths lower bound]
\label{lem:AFT}
Let $R_i$ be an $h$-component in the fault subchain such that $i \neq 0$ and $i \neq p$. Let $X \subseteq V_{(i,\ell)}$ and $Y \subseteq V_{(i,r)}$, and let $S_X$ and $S_Y$ be the edges in $S_{i-1}$ and in $S_{i}$ that are incident on vertices in $X$ and in $Y$, respectively. Let $d = \max(0, k + |S_X| + |S_Y| - |S_{i-1}|- |S_{i}|)$. In subgraph $G^h_i$, there are at least $d-k+h$ edge disjoint paths from $X$ to $Y$.
\end{lemma}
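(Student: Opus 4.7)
The plan is to apply Menger's theorem to convert the edge-disjoint paths count into the minimum $(X,Y)$-cut in $G^h_i$, and then derive a contradiction with the defining property of the hierarchical decomposition that $R^{h-1}_j$ is an $(h-1)$-component, meaning that $R^{h-1}_j$ admits no $(V^{h-1}_{(j,\ell)}, V^{h-1}_{(j,r)})$-separator of size at most $h-1$.

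First I would dispose of the trivial cases. Since $h \leq k-1$, when $d = 0$ we have $d - k + h \leq -1$ and the claim is vacuous, and similarly whenever $|S_X| + |S_Y| \leq h$ the target is nonpositive. So the only interesting case is $d = k + |S_X| + |S_Y| - 2h > 0$ with $|S_X| + |S_Y| > h$, where $d - k + h = |S_X| + |S_Y| - h$. I will use crucially that $S_{i-1}$ and $S_i$ arose as important separators of size $h$ in the $(V^{h-1}_{(j,\ell)}, V^{h-1}_{(j,r)})$ $h$-chain of $R^{h-1}_j$, so $|S_{i-1}| = |S_i| = h$.

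Next I would argue by contradiction: assume there are fewer than $|S_X| + |S_Y| - h$ edge-disjoint $X$-to-$Y$ paths in $G^h_i$. By Menger's theorem this yields an $(X,Y)$-separator $S' \subseteq E(R_i)$ with $|S'| \leq |S_X| + |S_Y| - h - 1$. I then assemble the candidate global separator
\[
T = (S_{i-1} \setminus S_X) \cup (S_i \setminus S_Y) \cup S',
\]
whose size is at most $(h - |S_X|) + (h - |S_Y|) + (|S_X| + |S_Y| - h - 1) = h - 1$.

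The main obstacle, and the structural heart of the argument, is verifying that $T$ is a $(V^{h-1}_{(j,\ell)}, V^{h-1}_{(j,r)})$-separator in $R^{h-1}_j$, which yields the desired contradiction. To this end I would analyze the reachable set from $V^{h-1}_{(j,\ell)} \subseteq R_0$ in $R^{h-1}_j \setminus T$: after deleting $T$, the only surviving edges crossing $S_{i-1}$ are those of $S_X$ (landing in $X \subseteq V_{(i,\ell)}$), the only surviving edges crossing $S_i$ are those of $S_Y$ (leaving from $Y \subseteq V_{(i,r)}$), and $S'$ blocks all $X$-to-$Y$ paths inside $R_i$. Propagating these observations along the $h$-chain shows that the reachable set is contained in $R_0 \cup \cdots \cup R_{i-1}$ together with the $X$-component of $R_i \setminus S'$, and so misses $R_p \supseteq V^{h-1}_{(j,r)}$. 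Thus $T$ would be a $(V^{h-1}_{(j,\ell)}, V^{h-1}_{(j,r)})$-separator of size at most $h-1$, contradicting that $R^{h-1}_j$ is an $(h-1)$-component and completing the proof.
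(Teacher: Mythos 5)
Your proposal is correct and follows essentially the same contradiction argument as the paper: translate the edge-disjoint-paths bound via Menger into a small $(X,Y)$-cut $S'$ inside $G^h_i$, glue it together with $S_{i-1}\setminus S_X$ and $S_i\setminus S_Y$, and show the resulting set of at most $h-1$ edges would be a $(V^{h-1}_{(j,\ell)}, V^{h-1}_{(j,r)})$-separator in $R^{h-1}_j$, contradicting the defining property of the hierarchical decomposition. You are a bit more explicit than the paper (you substitute $|S_{i-1}|=|S_i|=h$ up front, dispose of the trivial $d\le 0$ case, and spell out why the glued set is a separator, where the paper simply says ``observe''), but the bound $h-1$ and the overall logic are identical.
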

\begin{proof}
Suppose for the sake of contradiction that there are at most $d-k+h-1$ edge disjoint paths from $X$ to $Y$ in subgraph $G^h_i$, where $i \neq 0$ and $i \neq p$. Then there exists an edge set $E'$ with size at most $d-k+h-1$ such that $X$ and $Y$ are disconnected in $G^h_i \setminus E'$. Let $E_X = S_{i-1} \setminus S_X$ and let $E_Y = S_i \setminus S_Y$. Observe that $E' \cup E_X \cup E_Y$ separates $V_{(0,\ell)} = V^{h-1}_{(j,\ell)}$ and $V_{(p,r)} = V^{h-1}_{(j,r)}$. We give an upper bound on the number of edges in $E' \cup E_X \cup E_Y$:
\begin{align*}
    |E' \cup E_X \cup E_Y| &= |E'| + |E_X| + |E_Y| \\
    &\leq (d-k+h-1) + (|S_{i-1}| - |S_X|) + (|S_{i}| - |S_Y|) \\
    &= k + |S_X| + |S_Y| - |S_{i-1}|- |S_{i}| - k + h - 1 + |S_{i-1}| - |S_X| + |S_{i}| - |S_Y| \\
    &= h-1.
\end{align*}
We have shown that $E' \cup E_X \cup E_Y$ is a $(V^{h-1}_{(j,\ell)}, V^{h-1}_{(j,r)})$-separator with size at most $h-1$. This is a contradiction, since $R^{h-1}_j$ cannot have a $(V^{h-1}_{(j,\ell)}, V^{h-1}_{(j,r)})$-separator with size $h-1$ or less. We therefore have that there are at least $d-k+h$ edge disjoint paths from $X$ to $Y$.
\end{proof}

The following lemma is the crux of the argument for the sufficient condition. The lemma gives guarantees on the size of $S^H_i$ when $i \neq p$. Later, we will use this lemma to argue that at least one vertex in $V_{(p,r)}$ is reachable from $V_{(0,\ell)}$ in the fault subchain, after the removal of faults. Going forward, we will let $r_i = k-1-f_i$ be the maximum number of edge faults that are in $(\cup_{k=i}^{p-1} S_k) \cup (\cup^p_{k=i+1} E(R_k))$ (that is, the maximum number of edge faults that are not in $L^G_i$).

\begin{lemma}[Properties of $S^H_i$ when $i \neq p$]
\label{lem:r_induction}
Suppose there is an $st$ path in $G \setminus F$. Let $L^G_i$ be a prefix subchain such that $i \neq p$, and let $r_i = k-1-f_i$. Then, at least one of the following is true:
\begin{enumerate}
    \item $S_{i}^H = S_{i}^G$
    \item $|S_{i}^H| \geq r_i + 1$.
\end{enumerate}
\end{lemma}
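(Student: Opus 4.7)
The plan is to prove the lemma by strong induction on $i$, using Property~1 of Theorem~\ref{thm:genk_characterize} (the Structure Theorem) to transfer $G$-reachability into $H$-reachability inside each $h$-component $R_i$ of the fault subchain. The containment $S^H_i \subseteq S^G_i$ is automatic since $H$ is a subgraph of $G$, so condition~(1) is equivalent to $S^G_i \subseteq S^H_i$, and failure of~(1) produces at least one vertex $v \in V_{(i,r)}$ reachable from $V_{(0,\ell)}$ in $L^G_i \setminus F$ but not in $L^H_i \setminus F$. Throughout I will lean on Lemma~\ref{lem:G_F_reachable}'s lower bound $|S^G_i| \geq h - f_i$ (which itself uses the hypothesis that $G \setminus F$ has an $st$-path), together with $|S_{i-1}| = |S_i| = h$, since both are important separators in the $h$-chain inside $R^{h-1}_j$.

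For the inductive step I condition on which clause holds at $i - 1$. \textbf{Case A} ($S^H_{i-1} = S^G_{i-1}$): let $X \subseteq V_{(i,\ell)}$ be the set of $V_{(i,\ell)}$-endpoints of edges in $S^H_{i-1} \setminus F$; by the inductive hypothesis this set agrees with its $G$-analogue, so a vertex $v \in V_{(i,r)}$ is a $V_{(i,r)}$-endpoint of some edge in $S^G_i$ iff it is reachable from $X$ in $G^h_i \setminus F$. Let $U \subseteq V_{(i,r)}$ be those vertices reachable from $X$ in $G^h_i \setminus F$ but not in $H^h_i \setminus F$. If $U = \emptyset$, then $S^H_i = S^G_i$ and condition~(1) holds. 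Otherwise, I invoke Property~1 on the pair $(X, U)$ inside $R_i$: since $X$ fails to reach $U$ in $H^h_i$ after deleting the $\phi := |F \cap E(R_i)|$ faults inside $R_i$, the demand value $d(X, U) = \max(0,\, k + |S_X| + |S_U| - |S_{i-1}| - |S_i|)$ must be at most $\phi$. Combining this inequality with the identity $|S^G_i| = |S^H_i| + |S_U|$ (which follows because the $H$-reachable and $U$ vertex sets are disjoint) and the lower bound $|S^G_i| \geq h - f_i$ should yield $|S^H_i| \geq k - f_i = r_i + 1$.

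\textbf{Case B} ($|S^H_{i-1}| \geq k - f_{i-1}$): the entry set $X$ satisfies $|S_X| \geq |S^H_{i-1} \setminus F| \geq k - f_{i-1} - \psi$, where $\psi := |F \cap S_{i-1}|$. This strong lower bound on $|S_X|$ inflates $d(X, Y)$ in Property~1 for every candidate $Y$, so the same dichotomy (either all $G$-reachable vertices in $V_{(i,r)}$ are also $H$-reachable, or Property~1 pins down the shortfall) yields condition~(1) or~(2). The base case $i = 0$ is the same argument with $X = V_{(0,\ell)}$; the ``incoming'' separator is the left boundary of $R^{h-1}_j$ (of size at most $h-1$), and $f_{-1}$ is effectively zero.

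The main obstacle is the arithmetic in the $U \neq \emptyset$ branch: unpacking $d(X,U) \leq \phi$ into the desired bound $|S^H_i| \geq k - f_i$ requires carefully balancing the four quantities $|S_X|$, $|S_U|$, $\phi$, and $\psi$ via the identity $f_i = f_{i-1} + \phi + \psi$, and splitting on whether the $\max$ in $d(X,U)$ is attained at $0$ or on the positive branch. Case~B additionally requires tracking how faults in $S_{i-1}$ degrade $|S_X|$ away from the $|S^H_{i-1}|$ lower bound, and how much extra edges of $S_{i-1} \setminus (S^H_{i-1} \setminus F)$ happen to contribute to $|S_X|$. I expect no new conceptual idea beyond Property~1 of the Structure Theorem and Lemma~\ref{lem:G_F_reachable}, but the bookkeeping is delicate and will likely drive the rest of this appendix.
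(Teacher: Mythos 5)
Your overall architecture (induction on $i$, leveraging Property~1 of Theorem~\ref{thm:genk_characterize} to lift $G$-reachability to $H$-reachability inside $R_i$, and Lemma~\ref{lem:G_F_reachable} for the $|S^G_i| \geq h - f_i$ floor) matches the paper, but there are two concrete gaps that the paper resolves with different moves.

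\textbf{Gap 1: the wrong $Y$.} In Case~A you invoke Property~1 on the pair $(X,U)$, where $U$ is the \emph{shortfall} set (reachable in $G$ but not in $H$). Unpacking $d(X,U) \le \phi$ with $|S_{i-1}|=|S_i|=h$, $|S_X| \ge h-f_{i-1}-\psi$, and $f_i = f_{i-1}+\psi+\phi$ gives $|S_U| \le h-k+f_i$, hence (via your identity) $|S^H_i| \ge (h-f_i)-(h-k+f_i) = k-2f_i$. That is $f_i$ short of the target $r_i+1 = k-f_i$. The problem is that $|S_U|$ is generically too small to push the demand $d$ above $\phi$ in a useful way. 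The paper instead chooses $Y$ to be the endpoints of $S_i \setminus S^H_i$---i.e. everything unreachable in $H$, not just the ``reachable-in-$G$'' part---which forces $|S_Y| \ge h - |S^H_i|$, and it splits on whether $|S^G_i| > r_i$ or $|S^G_i| \le r_i$ rather than on whether $U=\emptyset$. In the first subcase, if one assumed $|S^H_i| \le r_i$, then $|S_Y| \ge h-r_i$ pushes $d \ge f_R+1$, and the unreachable-in-$H$ set $Y$ must intersect $S^G_i$ (since $|S_i \setminus S^G_i| < h-r_i$), giving a contradiction. In the second subcase ($|S^G_i| \le r_i$), the paper does something you have no analogue of: for \emph{each} edge $e \in S^G_i$ it forms $Y_e = \{v_e\} \cup (\text{endpoints of a set } E_U \subseteq S_i \setminus S^G_i \text{ of size } \ge h-r_i-1)$, deliberately \emph{padding} with $G$-unreachable endpoints to inflate $|S_{Y_e}|$ past the threshold; since $v_e$ is the only $G$-reachable vertex in $Y_e$, the demand forces $v_e$ to be $H$-reachable, and doing this for every $e$ gives $S^H_i = S^G_i$. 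Without this padding step your approach cannot recover condition~(1) when $|S^G_i| \le r_i$ and $U \ne \emptyset$.

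\textbf{Gap 2: Case B requires the edge-disjoint-paths lemma.} When $S^H_{i-1} \subsetneq S^G_{i-1}$, your identity $|S^G_i| = |S^H_i| + |S_U|$ breaks, because $S^G_i$ counts edges reachable from $V_{(0,\ell)}$ in $L^G_i \setminus F$ --- which can enter $R_i$ through edges of $S^G_{i-1} \setminus S^H_{i-1}$ --- whereas $U$ only tracks $G$-reachability \emph{from $X$}, a strictly smaller set. The paper avoids this entirely: in its Case~(2) of the inductive step it does not compare $G$- and $H$-reachability via $S^G_i$ at all, but uses Lemma~\ref{lem:AFT} to get a purely-$G$ lower bound of $d-k+h \ge f_R+1$ edge-disjoint paths from $X$ to any $Y$ with $|S_Y| \ge h-r_i$, which guarantees $G$-reachability survives the faults $F_R$ and then invokes Property~1. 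Your sketch never mentions Lemma~\ref{lem:AFT}, and ``the same dichotomy yields condition~(1) or~(2)'' does not go through without it. In short, the induction skeleton and use of Property~1 are right, but the choice of the $Y$ you query, the split on $|S^G_i|$ versus $r_i$ (with the per-edge padding argument when $|S^G_i|$ is small), and the appeal to Lemma~\ref{lem:AFT} in Case~B are all load-bearing and missing.
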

\begin{proof}
We give a proof by induction on $i$, where $R_i$ is a component of the fault subchain.

\subparagraph*{Base Case.} Consider $R_0$ of the fault subchain. We will show that either $S_{0}^H = S_{0}^G$ or $|S_{0}^H| \geq r_0 + 1$. Let $F_R = F \cap E(R_0)$ be the set of fault edges with both endpoints in $R_0$, and let $f_R = |F_R|$. We will use $S_{-1}$ to denote the left separator of $V_{(0,\ell)}$. 

Using the RSND demands given in the Theorem~\ref{thm:genk_characterize} statement, we will show that for certain vertex subsets $Y \subseteq V_{(0,r)}$, there exists an RSND demand $(V_{(0,\ell)}, Y, d)$ on $G^h_0$ such that $d > f_R$. Using the RSND demands from $V_{(0,\ell)}$ to these subsets $Y \subseteq V_{(0,r)}$, we will show that $S_0^H = S_0^G$ or $|S_0^H| = r_0 + 1$. Recall that for a vertex set $Y \subseteq V_{(0,r)}$, $S_Y$ is the set of edges in $S_0$ that are incident on vertices in $Y$. We will first show that for all subsets $Y \subseteq V_{(0,r)}$ such that $|S_Y| \geq h - r_0$, there exists an RSND demand, $(V_{(0,\ell)}, Y, d)$, such that $d > f_R$. If $Y = V_{(0,r)}$, then the RSND demand $(X,Y,k-1)$ is satisfied in $H^h_0$, regardless of the value of $|S_Y|$. Since $f_R \leq k-2$ (recall that the edge faults are not all inside the same $h$-component in the fault subchain), we have that the RSND demand $(X,Y,f_R+1)$ is satisfied in $H^h_0$. Now we consider $Y$ such that $Y \neq V_{(0,r)}$. In this case, let $Y \subseteq V_{(0,r)}$ such that $|S_Y| \geq h - r_0$, and let $X = V_{(0,\ell)}$. Since the fault subchain is made up of at least two $h$-components, $|S_0| = h$. Plugging in $|S_X| = |S_{-1}|$, $|S_Y| \geq  h - r_0$, and $|S_0| = h$  into $ d = \max(0, k + |S_X| + |S_Y| - |S_{-1}| - |S_{0}|)$, we have the following:
\begin{align*}
    d &= \max(0, k + |S_X| + |S_Y| - |S_{-1}| - |S_{0}|) \\
    &\geq k + |S_X| + |S_Y| - |S_{-1}| - |S_{0}| \\
    &\geq k + (h - r_0) - (h) \tag{$|S_X| = |S_{-1}|$, $|S_Y| \geq  h - r_0$, $|S_0| = h$}\\ 
    &= k-r_0 \\
    &= k-k+1+f_0 \tag{$r_0 = k-1-f_0$}\\
    &= f_0+1 \\
    &\geq f_R+1. \tag{$f_0 \geq f_R$}
\end{align*}
We have that if $|S_Y| \geq h - r_0$, the RSND demand $(V_{(0,\ell)},Y,f_R+1)$ on $G^h_0$ is satisfied by $H^h_0$. Now we consider two cases and show that in the first case (1), $|S_{0}^H| \geq r_0 + 1$ and in the second case (2), $S_{0}^H = S_{0}^G$.
\begin{itemize}
    \item \textbf{(1) Suppose first that $|S_{0}^G| > r_0$.} This means that $S_0 \setminus S_{0}^G$, the set of edges in $S_0$ not reachable from $V_{(0,\ell)}$ in $G^h_0 \setminus F_R$, has size at most $|S_{0}| - r_0 - 1 = h - r_0 - 1$. Let $Y \subseteq V_{(0,r)}$ such that $|S_Y| \geq h - r_0$. Then, $S_Y$ must contain at least one edge in $S_{0}^G$, so at least one vertex in $Y$ is incident on an edge in $S_{0}^G$. Thus, there is a path from $V_{(0,\ell)}$ to $Y$ in $G^h_0 \setminus F_R$. We have shown that for all $Y \subseteq V_{(0,r)}$ such that $|S_Y| \geq h - r_0$, there is a path in $G^h_0 \setminus F$ from $V_{(0,\ell)}$ to $Y$. We also have that the RSND demands $\{(V_{(0,\ell)},Y,f_R+1) : Y \subseteq V_{(0,r)}, |S_Y| \geq h - r_0  \}$ on $G^h_0$ are satisfied in $H^h_0$. Therefore, in $H^h_0 \setminus F_R$, there is also a path from $V_{(0,\ell)}$ to a set $Y \subseteq V_{(0,r)}$ if $|S_Y| \geq h - r_0$.
    
    Suppose for the sake of contradiction that $|S^H_0| < r_0+1$. This means that $S_0 \setminus S_{0}^H$ has size at least $h - r_0$. The set $S_0 \setminus S_{0}^H$ is not reachable from $V_{(0,\ell)}$ in $H^h_0 \setminus F_R$. However, since $|S_0 \setminus S_{0}^H| \geq h - r_0$, there must be a path in $H^h_0 \setminus F_R$ from $X$ to $U$, where $U$ is set of vertices incident on at least one edge in $S_0 \setminus S_{0}^H$. This is a contradiction, so we have that $S^H_0$ must have size at least $r_0+1$.
    
    \item \textbf{(2) Now suppose that $|S_{0}^G| \leq r_0$.} This means that $S_0 \setminus S_{0}^G$ has size at least $h - r_0$. Let $e$ be an edge in $S_0^G$. Additionally, let $E_U \subseteq S_0 \setminus S_{0}^G$ such that $E_U$ has size at least $h - r_0 - 1$. Let $S_{e} = \{e\} \cup E_U$. Let $Y_{e}$ be the set of vertices in $V_{(0,r)}$ that are incident on at least one edge in $S_{e}$, and let $v_e$ be the vertex in $Y_e$ incident on $e$. Let $S_{Y_e}$ be the set of edges in $S_0$ that are incident on a vertex in $Y_e$ (note that $S_e \subseteq S_{Y_e}$). Since $S_{e}$ has size at least $h - r_0$ and $|S_{Y_e}| \geq |S_e|$, the RSND demand $(V_{(0,\ell)},Y_{e},f_R+1)$ on $G^h_0$ is satisfied in $H^h_0$. Edge $e$ is the only edge in $S_e$ that is reachable from $V_{(0,\ell)}$ in $G^h_0 \setminus F_R$, which means that $v_e$ is the only vertex in $Y_e$ that is reachable from $V_{(0,\ell)}$ in $G^h_0 \setminus F_R$. The RSND demand $(V_{(0,\ell)},Y_{e},f_R+1)$ is satisfied in $H^h_0$. The only way for this demand to be satisfied is if there is a path from $V_{(0,\ell)}$ to $v_e$ (and thus to $e$) in $H^h_0 \setminus F_R$. Applying this argument to all $e \in S^G_0$, in $H^h_0$ there must be a path from $V_{(0,\ell)}$ to each edge in $S^G_0$, so $S_0^H = S_0^G$. 
\end{itemize}
We have shown that if $|S^G_0| \geq r_0+1$ then $|S^H_0| \geq r_0+1$, and that if $|S^G_0| \leq r_0$ then $S^H_0 = S^G_0$.

\subparagraph*{Inductive Step.} We will now show that when $i \neq p$, if the lemma is true for $R_{i-1}$, then it also holds for $R_i$.  Let $F_S = F \cap S_{i-1}$ be the set of fault edges in the left separator of $R_i$. Let $F_R = F \cap E(R_i)$, $f_S = |F_S|$ and $f_R = |F_R|$. We also let $X$ be the set of vertices in $V_{(i,\ell)}$ that are incident on at least one edge in $S^H_{i-1} \setminus F_S$. 
A vertex $v \in V_{(i,\ell)}$ is in $X$ if there is a path from $V_{(0,\ell)}$ to $v$ in $G \setminus F$ that only uses edges in $L^G_{i-1} \cup S_{i-1}$. We also let $S_X$ be the set of edges in $S_{i-1}$ incident on a vertex in $X$. Note that $S_{i-1} \setminus F_S \subseteq S_X$. Recall that $f_{i-1}$ is the number of faults in $L^G_{i-1}$, and that $r_{i-1} = k-1-f_{i-1}$. We have two cases by the inductive hypothesis: (1) $S^H_{i-1} = S^G_{i-1}$ or (2) $|S^H_{i-1}| \geq r_{i-1} + 1$.

\begin{itemize}
    \item \textbf{(1) Suppose first that $S^H_{i-1} = S^G_{i-1}$.} As in the base case, we first show that for all $Y \subseteq V_{(i,r)}$ such that $|S_Y| \geq h - r_i$, there exists an RSND demand $(X,Y,d)$, with $d > f_R$, that is satisfied in $H^h_i$. Again, if $(X,Y) = (V_{(i,\ell)}, V_{(i,r)})$, then the RSND demand $(X,Y,f_R+1)$ is satisfied in $H^h_i$. Now we consider $X$ and $Y$ such that $(X,Y) \neq (V_{(i,\ell)}, V_{(i,r)})$. In this case, let $Y \subseteq V_{(i,r)}$ such that $|S_Y| \geq h - r_i$. Since $i-1 \neq p$, by Lemma~\ref{lem:G_F_reachable} we have that $|S^G_{i-1}| \geq h - f_{i-1}$. Therefore, $|S_X| \geq |S^H_{i-1} \setminus F_S| = |S^G_{i-1} \setminus F_S| \geq |S^G_{i-1}| - f_S \geq h - f_{i-1} - f_S$. Also note that since $0 < i < p$, we have that $|S_{i-1}|=|S_i| = h$.
    Plugging in  $|S_X| \geq h - f_{i-1} - f_S$, $|S_Y| \geq h - r_i$, and $|S_{i-1}| = |S_{i}| = h$ into $d = \max(0, k + |S_X| + |S_Y| - |S_{i-1}| - |S_{i}|)$, we get the following:
    \begin{align*}
        d &= \max(0, k + |S_X| + |S_Y| - |S_{i-1}| - |S_{i}|) \\
        &\geq k + |S_X| + |S_Y| - |S_{i-1}| - |S_{i}| \\
        &\geq k  + (h - f_{i-1} - f_S) + (h - r_i) - |S_{i-1}| - |S_{i}| \tag{$|S_X| \geq h - f_{i-1} - f_S$, $|S_Y| \geq  h - r_i$} \\
        &= k  + h - f_{i-1} - f_S + h - r_i - (h) - (h) \tag{$|S_{i-1}| = |S_{i}| = h$} \\
        &= k - f_{i-1} - f_S - r_i    \\
        &= k- f_{i-1} - f_S - k + 1 + f_{i-1} + f_S + f_R    \tag{$r_i = k-1-f_{i-1}-f_S-f_R$} \\
        &= f_R+1.
    \end{align*}
    We have shown that if $Y \subseteq V_{(i,r)}$ and $|S_Y| \geq h - r_i$, the RSND demand $(X,Y,f_R+1)$ on $G^h_i$ is satisfied by $H^h_i$. The rest of the argument is identical to that of the Base Case.
    
    \item \textbf{(2) Now suppose that $|S^H_{i-1}| \geq r_{i-1} + 1$.} We will first show that for all $Y \subseteq V_{(i,r)}$ such that $|S_Y| \geq h - r_i$, there are at least $f_R+1$ edge-disjoint paths in $G^h_i$ from $X$ to $Y$. Then, as in the previous case, we will show that for all $Y \subseteq V_{(i,r)}$ such that $|S_Y| \geq h - r_i$, the RSND demand $(X,Y,f_R+1)$ on $G^h_i$ is satisfied in $H^h_i$. Since $0<i<p$, by Lemma~\ref{lem:AFT} there are at least $d-k+h$ edge-disjoint paths from $X \subseteq V_{(i,\ell)}$ to $Y \subseteq V_{(i,r)}$ in $G^h_i$, where $d = \max(0, k + |S_X| + |S_Y| - |S_{i-1}| - |S_{i}|)$. Additionally, $|S_X| \geq |S^H_{i-1}| - f_S \geq r_{i-1} + 1 - f_S$. Plugging in  $|S_X| \geq r_{i-1} + 1 - f_S$, $|S_Y| \geq h - r_i$, and $|S_{i-1}|=|S_i| = h$ into $d = \max(0, k + |S_X| + |S_Y| - |S_{i-1}| - |S_{i}|)$, we get the following:
    \begin{align*}
        d-k+h &= \max(0, k + |S_X| + |S_Y| - |S_{i-1}| - |S_{i}|) - k + h \\
        &\geq k + |S_X| + |S_Y| - |S_{i-1}| - |S_{i}| - k + h \\
        &\geq (r_{i-1} + 1 - f_S) + (h - r_i)  - |S_{i-1}| - |S_{i}| + h \tag{$|S_X| \geq r_{i-1} + 1 - f_S$, $|S_Y| \geq  h - r_i$} \\
        &= r_{i-1} + 1 - f_S + h - r_i  - (h) - (h) + h \tag{$|S_{i-1}| = |S_i| = h$} \\
        &= r_{i-1} + 1 - f_S - r_i  \\
        &= (r_{i}+f_R+f_S) + 1 - f_S - r_i    \tag{$r_{i-1} = r_{i}+f_R+f_S$} \\
        &= f_R+1.
    \end{align*}
    We have shown that if $Y \subseteq V_{(i,r)}$ and $|S_Y| \geq h - r_i$, then $d-k+h \geq f_R+1$. This means that in $G^h_i$ there are at least $f_R+1$ edge disjoint paths from $X$ to $Y$ if $Y \subseteq V_{(i,r)}$ and $|S_Y| \geq h - r_i$. Therefore, in $G_i^h \setminus F_R$, there is a path from $X$ to all $Y \subseteq V_{(i,r)}$ such that $|S_Y| \geq h - r_i$. As before, if $(X,Y) = (V_{(i,\ell)}, V_{(i,r)})$, then the RSND demand $(X,Y,f_R+1)$ is satisfied in $H^h_i$. We now show that this demand is also satisfied when $(X,Y) \neq (V_{(i,\ell)}, V_{(i,r)})$:
    \begin{align*}
        d-k+h \geq f_R+1 \implies d &\geq f_R + 1 + k - h \\
        &\geq f_R + 1.  \tag{$k-h >  0$}
    \end{align*}
    Thus, if $Y \subseteq V_{(i,r)}$ with $|S_Y| \geq h - r_i$, then the RSND demand $(X,Y,f_R+1)$ on $G^h_i$ is satisfied by $H^h_i$. This, combined with the fact that in $G_i^h \setminus F_R$ there is a path from $X$ to all $Y \subseteq V_{(i,r)}$ such that $|S_Y| \geq h - r_i$, implies that there is a path in $H^h_i \setminus F_R$ from $X$ to all $Y \subseteq V_{(i,r)}$ such that $|S_Y| \geq h - r_i$. Now, suppose for the sake of contradiction that $|S^H_i| \leq r_i$. This means that $S_i \setminus S^H_i$ has size at least $h-r_i$. The set $S_i \setminus S^H_i$ is not reachable from $X$ in $H^h_i \setminus F_R$.  However, since $|S_i \setminus S^H_i| \geq h-r_i$, there must be a path in $H^h_i \setminus F$ from $X$ to $U$, where $U$ is the set of vertices incident on at least one edge in $S_i \setminus S^H_i$. This is a contradiction. We therefore have that $|S^H_i| \geq r_i+1$.
\end{itemize}

We have shown that for all $i$ such that $i \neq p$, either $S^H_i = S^G_i$ or $|S^H_i| \geq r_i+1$. 
\end{proof}

Now we use Lemma~\ref{lem:r_induction} to show that there is a path in the fault subchain from $V_{(0,r)}$ to at least one vertex in $V_{(p,r)}$. The proof uses arguments similar to those in Lemma~\ref{lem:r_induction}.

\begin{lemma}
\label{lem:the_end}
Suppose there is an $st$ path in $G \setminus F$. There is a path from $V_{(0,r)}$ to $V_{(p,r)}$ in $L^H_p \setminus F$.
\end{lemma}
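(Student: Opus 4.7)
The plan is to extend the reasoning of Lemma~\ref{lem:r_induction} one step further, to the terminal $h$-component $R_p$, so as to connect $V_{(0,\ell)}$ to $V_{(p,r)}$ in $L^H_p \setminus F$. Since Case 2 forces the faults to be split across multiple $h$-components of $R^{h-1}_j$, we have $p \geq 1$ and $f_R \leq f_p - 1 \leq k-2$. I would apply Lemma~\ref{lem:r_induction} at $i = p-1$, which yields either \textbf{(A)} $S^H_{p-1} = S^G_{p-1}$ or \textbf{(B)} $|S^H_{p-1}| \geq r_{p-1}+1$. Write $F_S = F \cap S_{p-1}$, $F_R = F \cap E(R_p)$, $f_S = |F_S|$, and let $X \subseteq V_{(p,\ell)}$ be the set of vertices incident to at least one edge of $S^H_{p-1} \setminus F_S$. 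By the very definition of $S^H_{p-1}$, every vertex of $X$ is reachable from $V_{(0,\ell)}$ in $L^H_{p-1} \setminus F$, so the lemma reduces to producing an $X$-to-$V_{(p,r)}$ path in $H^h_p \setminus F_R$.

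Step one is to exhibit such a path in $G^h_p \setminus F_R$. In Case A, I would use the hypothesized $st$-path in $G \setminus F$: since the fault-subchain separators jointly separate $V_{(0,\ell)}$ from $V_{(p,r)}$, the path must cross $S_{p-1}$, and its last such crossing enters $R_p$ at some vertex $v$ via an edge $e \in S_{p-1} \setminus F_S$. The prefix of the path up to $e$ lies in $L^G_{p-1} \setminus F$ and starts in $V_{(0,\ell)}$, so $e \in S^G_{p-1} = S^H_{p-1}$ and hence $v \in X$; the suffix is the desired $X$-to-$V_{(p,r)}$ path in $G^h_p \setminus F_R$. In Case B, I would generalize Lemma~\ref{lem:AFT} to $i = p$: the same separator-combining argument, now with $E_Y = \emptyset$, shows that any hypothetical $(X, V_{(p,r)})$-separator $E'$ of size $< |S_X|$ in $G^h_p$, together with $E_X = S_{p-1} \setminus S_X$, would form a $(V^{h-1}_{(j,\ell)}, V^{h-1}_{(j,r)})$-separator of $R^{h-1}_j$ of size $\leq h-1$, contradicting the hierarchical construction. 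Thus $G^h_p$ contains at least $|S_X| \geq k - f_{p-1} - f_S$ edge-disjoint $X$-to-$V_{(p,r)}$ paths, of which at least $k - f_p \geq 1$ survive deletion of $F_R$.

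Next, I would invoke Theorem~\ref{thm:genk_characterize} on $R_p$ to transfer this $G$-side connectivity into $H$. If $X = V_{(p,\ell)}$, Property 3 handles it directly: the demand $(V_{(p,\ell)}, V_{(p,r)}, k-1)$ combined with $f_R \leq k-2$ ensures the path persists in $H^h_p \setminus F_R$. Otherwise, I would apply Property 1 with the demand $(X, V_{(p,r)}, d)$, where $d = \max\bigl(0,\, k + |S_X| + |S_Y| - |S^h_{p-1}| - |S^h_p|\bigr) = \max(0, k + |S_X| - h)$ after using $|S_Y| = |S^h_p|$ and $|S^h_{p-1}| = h$. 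A direct calculation---using $|S_X| \geq h - f_{p-1} - f_S$ in Case A and $|S_X| \geq k - f_{p-1} - f_S$ in Case B, together with $f_p \leq k-1$ and $h \leq k-1$---gives $d \geq f_R + 1$ in both regimes, so the demand upgrades the $G$-path to an $X$-to-$V_{(p,r)}$ path in $H^h_p \setminus F_R$. Concatenating with the $V_{(0,\ell)}$-to-$X$ path already available in $L^H_{p-1} \setminus F$ yields the claimed path in $L^H_p \setminus F$.

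The hardest part is the dichotomy forced by Lemma~\ref{lem:r_induction}. Case A requires identifying the $R_p$-entry vertex of the $G$-side $st$-path and recognizing, through $S^G_{p-1} = S^H_{p-1}$, that it actually lies in $X$; Case B requires a bespoke extension of Lemma~\ref{lem:AFT} to the terminal component (where $|S^h_p|$ may be strictly less than $h$) followed by a delicate verification that $d$ dominates $f_R$. The arithmetic is tight and leans on the identities $r_{p-1} = k-1-f_{p-1}$ and $f_p = f_{p-1}+f_S+f_R$.
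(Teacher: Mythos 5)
Your proposal follows the paper's proof quite closely: you invoke Lemma~\ref{lem:r_induction} at $i=p-1$ to get the same dichotomy (\textbf{A}: $S^H_{p-1}=S^G_{p-1}$ vs.\ \textbf{B}: $|S^H_{p-1}|\ge r_{p-1}+1$), use the same definitions of $F_S, F_R, X, S_X$, run essentially the same separator-combining argument in Case~B to obtain enough edge-disjoint $X$-to-$V_{(p,r)}$ paths in $G^h_p$, and invoke the same RSND demands (Property~3 when $X=V_{(p,\ell)}$, Property~1 otherwise) with the same arithmetic to transfer a $G$-side path to an $H$-side path. So this is not a different route---it is the paper's route.

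There is, however, a gap in your Case~A justification. You claim that if $e$ is the \emph{last} $S_{p-1}$-crossing of the $st$-path, then ``the prefix of the path up to $e$ lies in $L^G_{p-1}\setminus F$,'' and from this you conclude $e\in S^G_{p-1}$. This step is not valid: the path is a simple path, but $S_{p-1}$ has $h$ edges, so the path may cross $S_{p-1}$ up to $h$ times (an odd number, say $2m+1$). If $m\ge 1$, the prefix before the last crossing contains $R_p$-segments, so it does \emph{not} lie inside $L^G_{p-1}$, and therefore it does not witness that $e$'s $L_{p-1}$-endpoint belongs to $Z$ (the $V_{(0,\ell)}$-reachable set of $L^G_{p-1}\setminus F$); that is exactly what membership in $S^G_{p-1}$ requires. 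Using the \emph{first} crossing instead makes the prefix claim true but breaks the suffix (which may then re-enter $L_{p-1}$), so a straightforward ``take one crossing'' argument does not suffice for $h\ge 3$. It is worth noting that at this exact point the paper's own proof simply asserts ``There is an $st$ path in $G\setminus F$; therefore, there is a path in $G^h_p\setminus F_R$ from $X$ to $V_{(p,r)}$'' without a derivation, so this is a step that needs more care than either your write-up or the paper gives it; your attempt to fill it in is an honest addition, but the specific justification you offer does not go through.
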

\begin{proof}
Note that since $F$ is not contained in a single $h$-component, the fault subchain has at least two components, so component $R_{p-1}$ must exist. Let $F_S = F \cap S_{p-1}$, $F_R = F \cap E(R_p)$, $f_S = |F_S|$, and $f_R = |F_R|$. Let $X$ be the set of vertices in $V_{(p,\ell)}$ that are incident on at least one edge in $S^H_{p-1} \setminus F_S$. In addition, let $S_X$ be the set of edges in $S_{p-1}$ incident on a vertex in $X$, and let $S_p$ denote the right separator of $R_p$. By Lemma~\ref{lem:r_induction}, one of the following holds: (1) $|S^H_{p-1}| = |S^G_{p-1}|$ or (2) $|S^H_{p-1}| \geq r_{p-1} + 1$.
\begin{itemize}
    \item \textbf{(1) Suppose first that $|S^H_{p-1}| = |S^G_{p-1}|$.} There is an $st$ path in $G \setminus F$; therefore, there is a path in $G^h_p \setminus F_R$ from $X$ to $V_{(p,r)}$. We now show that the RSND demand $(X,V_{(p,r)},f_R+1)$ on $G^h_p$ is satisfied in $H^h_p$. If $X = V_{(p,\ell)}$, the RSND demand $(X,V_{(p,r)},f_R+1)$ is satisfied. Now suppose $X \neq V_{(p,\ell)}$. By Lemma~\ref{lem:G_F_reachable}, we have that $|S^G_{p-1}| \geq h-f_{p-1}$. We therefore have that $|S_X| \geq |S^H_{p-1} \setminus F_S| = |S^G_{p-1} \setminus F_S| \geq h - f_{p-1} - f_S$. Let $Y = V_{(p,r)}$. We plug in $|S_X| \geq h - f_{p-1} - f_S$, $|S_Y| = |S_{p}|$, and $|S_{p-1}| = h$ into $d = \max(0, k + |S_X| + |S_Y| - |S_{p-1}| - |S_{p}|)$:
    \begin{align*}
        d &= \max(0, k + |S_X| + |S_Y| - |S_{p-1}| - |S_{p}|) \\
        &\geq k + |S_X| + |S_Y| - |S_{p-1}| - |S_{p}| \\
        &\geq k + (h - f_{p-1} - f_S) - (h)  \tag{$|S_X| \geq h - f_{p-1} - f_S$, $|S_Y| = |S_{p}|$, $|S_{p-1}| = h$} \\
        &= k - f_{p-1} - f_S   \\
        &\geq k - k + 1 + f_S + f_R -f_S  \tag{$f_{p-1} \leq k-1-f_S-f_R$} \\
        &= f_R + 1. 
    \end{align*}
    We have shown that the RSND demand $(X,V_{(p,r)},f_R+1)$ on $G^h_p$ is satisfied in $H^h_p$. Since there is a path in $G^h_p \setminus F_R$ from $X$ to $V_{(p,r)}$, this means there must also be a path in $H^h_p \setminus F_R$ from $X$ to $V_{(p,r)}$. This, with Lemma~\ref{lem:r_induction}, implies a path from $V_{(0,\ell)}$ to $V_{(p,r)}$.
    
    \item \textbf{(2) Now suppose that $|S^H_{p-1}| \geq r_{p-1} + 1$.} This means that $|S_X| \geq r_{p-1} + 1-f_S$. We will first show that in $G^h_p$, there are at least $f_R+1$ edge-disjoint paths from $X$ to $V_{(p,r)}$. Suppose for the sake of contradiction that there are at most $f_R$ edge-disjoint paths $X$ to $V_{(p,r)}$. Then there exists a set $E'$ size at most $f_R$ such that $X$ and $V_{(p,r)}$ and not connected in $G^h_p \setminus E'$. Let $E_X = S_{p-1} \setminus S_X$. Observe that $E' \cup E_X$ separates $V_{(0,\ell)} = V^{h-1}_{(j,\ell)}$ and $V_{(p,r)} = V^{h-1}_{(j,r)}$. We give an upper bound on the number of edges in $E' \cup E_X$:
    \begin{align*}
        |E' \cup E_X| &= |E'| + |E_X| \\
        &\leq f_R + (|S_{p-1}| - |S_X|) \\
        &\leq f_R + h - r_{p-1} - 1 + f_S  \tag{$|S_{p-1}| = h$, $|S_X| \geq r_{p-1} + 1 - f_S$}  \\
        &\leq f_R + h - f_R - f_S - 1 + f_S \tag{$r_{p-1} \geq f_R+f_S$} \\
        &= h-1.
    \end{align*}
    We have shown that $E' \cup E_X$ is a $(V^{h-1}_{(j,\ell)}, V^{h-1}_{(j,r)})$-separator with size at most $h-1$. This is a contradiction, since $R^{h-1}_j$ cannot have a $(V^{h-1}_{(j,\ell)}, V^{h-1}_{(j,r)})$-separator with size $h-1$ or less. We therefore have that there are at least $f_R+1$ edge disjoint paths from $X$ to $V_{(p,r)}$ in $G^h_p$. This means that in $G_p^h \setminus F_R$, there is a path from $X$ to $V_{(p,r)}$. We now show that the RSND demand $(X,V_{(p,r)},f_R+1)$ is satisfied in $H^h_p$. As before, if $X = V_{(p,\ell)}$, then the demand is satisfied. Suppose $X \neq V_{(p,\ell)}$. Plugging in  $|S_X| \geq r_{p-1} + 1 - f_S$, $|S_Y| = |S_p|$, and $|S_{p-1}| = h$ into $d = \max(0, k + |S_X| + |S_Y| - |S_{p-1}| - |S_{p}|)$, we get the following:
    \begin{align*}
        d &= \max(0, k + |S_X| + |S_Y| - |S_{p-1}| - |S_{p}|) \\
        &\geq k + |S_X| + |S_Y| - |S_{p-1}| - |S_{p}|  \\
        &\geq k+ (r_{p-1} + 1 - f_S) - h  \tag{$|S_X| \geq r_{p-1} + 1 - f_S$, $|S_Y| = |S_p|$, $|S_{p-1}| = h$} \\
        &\geq k+ (f_R+f_S) + 1 - f_S - h \tag{$r_{p-1} \geq f_R+f_S$}  \\
        &= f_R + 1 + k - h  \\
        &\geq f_R+1. \tag{$k-h > 0$}
    \end{align*}
    We have shown that the RSND demand $(X,V_{(p,r)},f_R+1)$ on $G^h_p$ is satisfied by $H^h_p$. Since there is a path in $G^h_p \setminus F_R$ from $X$ to $V_{(p,r)}$, this means that there must also be a path in $H^h_p \setminus F_R$ from $X$ to $V_{(p,r)}$. This, with Lemma~\ref{lem:r_induction}, implies there is a path from $V_{(0,\ell)}$ to $V_{(p,r)}$.
\end{itemize}
\end{proof}

\begin{corollary}
\label{cor:end}
Suppose that Case 2 applies in the hierarchical $k$-chain decomposition; that is, there is some level $h$ such that the edges in $F$ are not all inside the same $h$-component, but at level $h-1$, the edges in $F$ are inside a single $(h-1)$-component. Suppose there is an $st$ path in $G \setminus F$. Then, there is an $st$ path in $H \setminus F$.
\end{corollary}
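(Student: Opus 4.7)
The plan is to combine Lemma~\ref{lem:the_end} with Lemmas~\ref{lem:s_connectivity_H} and~\ref{lem:t_connectivity_H} (both applied at level $h-1$) to stitch together an $st$-path in $H \setminus F$. The essential structural observation driving everything is that in Case 2 the fault set lies strictly inside $R^{h-1}_j$, i.e., $F \subseteq E(R^{h-1}_j)$; in particular no edge of $F$ lies in the boundary separators $S^{h-1}_{j-1}$ or $S^{h-1}_j$, nor in any other $(h-1)$-component of the level $h-1$ chain. This will be what lets us freely traverse everything outside the fault subchain.

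First I would invoke Lemma~\ref{lem:the_end} on the fault subchain to obtain a path $P_{\mathrm{mid}}$ in $L^H_p \setminus F$ from some vertex $v_1 \in V^{h-1}_{(j,\ell)}$ to some vertex $v_2 \in V^{h-1}_{(j,r)}$. Next, let $u_1 \in V^{h-1}_{(j-1,r)}$ and $u_2 \in V^{h-1}_{(j+1,\ell)}$ be the neighbors of $v_1$ and $v_2$ across the separators $S^{h-1}_{j-1}$ and $S^{h-1}_j$, respectively. Applying Lemma~\ref{lem:s_connectivity_H} at level $h-1$ with index $j-1$ yields a path $P_s$ in $H$ from $s$ to $u_1$ using only edges in the level $h-1$ subchain ending at $H^{h-1}_{j-1}$; symmetrically, Lemma~\ref{lem:t_connectivity_H} at level $h-1$ with index $j+1$ yields a path $P_t$ in $H$ from $u_2$ to $t$ using only edges in the level $h-1$ subchain starting at $H^{h-1}_{j+1}$.

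It then remains to verify that the concatenation $P_s \cdot (u_1 v_1) \cdot P_{\mathrm{mid}} \cdot (v_2 u_2) \cdot P_t$ is an $s$-$t$ walk in $H \setminus F$. The interior portion $P_{\mathrm{mid}}$ avoids $F$ by Lemma~\ref{lem:the_end}; the separator edges $u_1 v_1$ and $v_2 u_2$ lie in $H$ because every edge of every separator of the hierarchical decomposition is included in $H$ (an explicit hypothesis on $H$ in the ``if'' direction), and they avoid $F$ by the structural observation above; and $P_s$, $P_t$ use only edges strictly outside $R^{h-1}_j$, which are again disjoint from $F$ by the same observation. The one thing requiring mild care is the degenerate boundary: if $R^{h-1}_j$ contains $s$ then $V^{h-1}_{(j,\ell)} = \{s\}$, so $v_1 = s$ and the $P_s$ segment collapses, and symmetrically if $R^{h-1}_j$ contains $t$ the $P_t$ segment collapses (possibly both, in the trivial case $h-1 = 1$, where Lemma~\ref{lem:the_end} directly produces the $st$-path). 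Since Lemma~\ref{lem:the_end} has already absorbed the technical work on the fault subchain, the corollary reduces to this stitching argument, and the only ``obstacle'' is the bookkeeping required to confirm that every edge on the assembled walk is genuinely in $H \setminus F$.
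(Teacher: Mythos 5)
Your proposal is correct and follows essentially the same route as the paper: apply Lemma~\ref{lem:the_end} to get a path across the fault subchain, then glue on prefix and suffix paths from Lemmas~\ref{lem:s_connectivity_H} and~\ref{lem:t_connectivity_H}, observing that those pieces live entirely outside $R^{h-1}_j$ and hence miss $F$. The paper states the argument more tersely (and handles the case where $s$ or $t$ is in the fault subchain with ``a similar argument''), but you've made explicit the same disjointness and degenerate-boundary bookkeeping that the paper is implicitly relying on.
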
 
\begin{proof}
Suppose neither $s$ nor $t$ is in the fault subchain. By Lemma~\ref{lem:s_connectivity_H}, there is a path in $H \setminus F$ from $s$ to each vertex in $V_{(0,\ell)}$, using only edges in the level $h$ subchain starting at $s$ and ending at $V_{(0,\ell)}$. Since there must be a path from $V_{(0,\ell)}$ to $V_{(p,r)}$ in $L^G_p \setminus F$, by Lemma~\ref{lem:the_end}, there is also a path from $V_{(0,\ell)}$ to $V_{(p,r)}$ in $L^H_p \setminus F$. Finally, by Lemma~\ref{lem:t_connectivity_H}, there is a path in $H \setminus F$ from each vertex in $V_{(p,r)}$ to $t$, using only edges in the level $h$ subchain starting at $V_{(p,r)}$ and ending at $t$. Combining all aforementioned paths gives a path in $H \setminus F$ from $s$ to $t$. If $s$ or $t$ is in the fault subchain, then a similar argument shows that there is an $st$ path in $H \setminus F$.
\end{proof}

With both Lemma~\ref{lem:case1} (Case 1) and Corollary~\ref{cor:end} (Case 2), we have shown that if $H$ satisfies the set of demands specified in Theorem~\ref{thm:genk_characterize}, then $H$ is a feasible solution to SD-$k$-RSND.

\section{Proof of Lemma~~\ref{lem:SD_feasible} } 
\label{app:cost}
\begin{proof}[Proof of Lemma~\ref{lem:SD_feasible}]
Fix level $h$ of the hierarchical $k$-chain decomposition. For each $i$, let $H^h_i$ denote the subgraph of $H$ induced by $R^h_i$, and let $G^h_i$ denote the subgraph of $G$ induced by $R^h_i$. We will show that $H$ satisfies the conditions stated in Theorem~\ref{thm:genk_characterize}, and hence is feasible. By construction, $H$ contains $S$, or all edges in the important separators. First we show that the set of RSND demands $\big\{(X,Y,d) : X \subseteq V^h_{(i, \ell)}, Y \subseteq V^h_{(i, r)}, (X,Y) \neq (V^h_{(i, \ell)}, V^h_{(i, r)}), d = \max(0,  k  + |S_X| + |S_Y| - |S^h_{i-1}| - |S^h_i|)  \big\}$ on $G^h_i$ are satisfied in $H^h_i$.

Let $X \subseteq V^h_{(i,\ell)}$, $Y \subseteq V^h_{(i,r)}$, and $d = \max(0, k + |S_X| + |S_Y| - |S_{i-1}| - |S_{i}|)$, as in Theorem~\ref{thm:genk_characterize}.  First we show that for all $X,Y$ pairs such that $d = 1$, the RSND demand $(X,Y,d)$ on $G^h_i$ is satisfied in $H^h_i$. If there is a path from $X$ to $Y$ in $G^h_i$, we include in $H^h_i$ a path from $X$ to $Y$. Therefore, the demand $(X,Y,1)$ is satisfied.

Now we show that for all $X,Y$ pairs such that $1 < d < k$, the RSND demand $(X,Y,d)$ on $G^h_i$ is satisfied in $H^h_i$. We include in $H^h_i$ all edges selected via the SD-$d$-RSND algorithm on contracted $X$ and contracted $Y$ with demand $d$. Therefore, the demand $(X,Y,d)$ is satisfied. We have shown that if $d < k$, then the RSND demand $(X,Y,d)$ is satisfied. Since the demand for all $X,Y$ pairs such that $(X,Y) \neq (V^h_{(i,\ell)}, V^h_{(i,r)})$ is at most $k-1$, we have shown that the set of demands $\big\{(X,Y,d) : X \subseteq V^h_{(i, \ell)}, Y \subseteq V^h_{(i, r)}, (X,Y) \neq (V^h_{(i, \ell)}, V^h_{(i, r)}), d = \max(0,  k  + |S_X| + |S_Y| - |S^h_{i-1}| - |S^h_i|)  \big\}$ on $G^h_i$ is satisfied in $H^h_i$.

Now we show that the RSND demand $\big(V^h_{(i, \ell)}, V^h_{(i, r)}, k-1 \big)$ on $G^h_i$ is satisfied in $H^h_i$. We include in $H^h_i$ all edges selected via the SD-$(k-1)$-RSND algorithm on contracted $V^h_{(i, \ell)}$ and contracted $V^h_{(i, r)}$ with demand $k-1$. Therefore, the demand $(V^h_{(i, \ell)}, V^h_{(i, r)},k-1)$ is satisfied.

Finally we show that the RSND demand $\big(V^h_{(i, \ell)}, V^h_{(i, r)}, h+1 \big)$ on $G^h_i$ is satisfied in $H^h_i$. Due to the construction of the hierarchical $k$-chain decomposition, there are at least $h+1$ edge-disjoint paths from $V^h_{(i, \ell)}$ to $V^h_{(i, r)}$ in $G^h_i$. Thus, to show that the RSND demand is satisfied in $H^h_i$, it suffices to show that there are $h+1$ edge-disjoint paths from $V^h_{(i, \ell)}$ to $V^h_{(i, r)}$ in $H^h_i$. In $H^h_i$, we include the edges selected via a Min-Cost Flow algorithm from contracted $V^h_{(i, \ell)}$ to contracted $V^h_{(i, r)}$ with flow $h+1$. Since there are $h+1$ edge-disjoint paths from $V^h_{(i, \ell)}$ to $V^h_{(i, r)}$ in $G^h_i$, the Min-Cost Flow instance must have a feasible solution, and the algorithm will return $h+1$ edge-disjoint paths from $V^h_{(i, \ell)}$ to $V^h_{(i, r)}$ (i.e. the set of edges with non-zero flow).
\end{proof}

\section{Simplification of {\em k}-EFTS 2-approximation} \label{app:efts}
As discussed, a $2$-approximation for $k$-EFTS was given by~\cite{ap22}.  They showed that while the ``cut-requirement'' function defined by $k$-EFTS was not weakly supermodular (so the algorithm of~\cite{Jain01} could not be applied in a black-box way), it did have a property they dubbed ``local weak supermodularity'' which was enough to be able to use~\cite{Jain01}.  However, this was relatively involved, and required a careful case analysis of the cut requirement function.  We show here that more standard techniques can be used to achieve the same results in a far more simple manner.

\subsection{Weakly \texorpdfstring{$\FF$}{F}-Supermodular Definition and Properties}
First we give some preliminary definitions. Two sets $A,B$ {\bf cross} if $A \cap B \neq \empt$ and $A \cup B \neq V$, and 
{\bf co-cross} if $A \sem B,B \sem A \neq \empt$. 
Given $A,B \in \FF$, we say that $f$ satisfies the:
\begin{itemize}
\item
{\bf supermodular inequality} if $f(A)+f(B) \geq f(A \cap B)+f(A \cup B)$.
\item
{\bf co-supermodular inequality} if $f(A)+f(B) \geq f(A \sem B)+f(B \sem A)$.
\end{itemize}
A set function $f$ is {\bf symmetric} if $f(A)=f(V \sem A)$ for all $A \subs V$, and $f$ is: 
\begin{itemize}
\item
{\bf crossing supermodular} if $f$ satisfies the supermodular inequality whenever $A,B$ cross.
\item
{\bf weakly supermodular} if $f$ satisfies the supermodular or 
the co-supermodular inequality for all $A,B \subs V$.
\end{itemize}
Note that if $f$ is symmetric crossing supermodular and $A,B$ co-cross, 
then (by symmetry) $f$ satisfies the co-supermodular inequality. 

Given a set family $\FF$ with $\empt, V \notin \FF$, we may want to consider the restriction $f_{\FF}$ of $f$ to sets in $\FF$. 
We say that $f$ is {\bf weakly $\FF$-supermodular} if $f_{\FF}$ is weakly supermodular, namely, 
if for all $A,B \in \FF$ at least one of the following holds:
\begin{itemize}
\item
$A \cap B,A \cup B \in \FF$ and $f$ satisfies the supermodular inequality.
\item
$A \sem B,B \sem A \in \FF$ and $f$ satisfies the co-supermodular inequality. 
\end{itemize}
From Jain's result it follows that the problem of covering  a weakly $\FF$-supermodular function 
admits approximation ratio $2$ (assuming we can solve certain LPs in polynomial time).

\begin{lemma} \label{l:f}
Let $G=(V,E)$ be a graph, let $I \subs E$, let 
$f$ be a symmetric crossing supermodular set function on $2^V \sem \{\empt,V\}$. 
Then $f$ is weakly $\FF$-supermodular for $\FF=\{A \subs V:d_I (A) \geq 1\}$.
\end{lemma}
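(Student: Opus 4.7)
The plan is to verify, for arbitrary $A, B \in \FF$ (so $d_I(A), d_I(B) \ge 1$), that one of the two options in the definition of weakly $\FF$-supermodular is achieved. I would first dispose of the degenerate configurations and then handle the main case in which $A$ and $B$ both cross and co-cross.

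The degenerate cases are essentially trivial. If $A \subs B$ (or symmetrically $B \subs A$), then $A \cap B = A$ and $A \cup B = B$ are both in $\FF$ and the supermodular inequality is an equality. If $A \cap B = \empt$, then $A \sem B = A$ and $B \sem A = B$ and the co-supermodular inequality is an equality. If $A \cup B = V$ with $A, B$ incomparable, then $A \sem B = V \sem B$ and $B \sem A = V \sem A$; symmetry of $f$ and the identity $d_I(V \sem X) = d_I(X)$ reduce case (ii) to another equality.

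The main case is when $A$ and $B$ both cross and co-cross, so the four atoms $X_1 = A \cap B$, $X_2 = A \sem B$, $X_3 = B \sem A$, $X_4 = V \sem (A \cup B)$ are all nonempty. The crux is an edge-partition observation: if $d_I(A \cap B) = 0$ then $I$ has no edge with exactly one endpoint in $X_1$, so every edge of $I$ contributing to $d_I(A)$ must be of type $X_2X_3$ or $X_2X_4$ (each of which also contributes to $d_I(A \sem B)$) and every edge contributing to $d_I(B)$ must be of type $X_2X_3$ or $X_3X_4$ (each contributing to $d_I(B \sem A)$); hence $d_I(A \sem B), d_I(B \sem A) \ge 1$, placing us in case (ii). The analogous bookkeeping in the three symmetric scenarios (where $d_I(A \cup B)$, $d_I(A \sem B)$, or $d_I(B \sem A)$ is $0$) shows that in every situation either both of $A \cap B, A \cup B$ lie in $\FF$ or both of $A \sem B, B \sem A$ lie in $\FF$.

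It remains to derive the corresponding inequality for $f$. In the first subcase the sets $A, B$ cross, so crossing supermodularity of $f$ yields the supermodular inequality directly. In the second subcase I would uncross against the complement: the sets $A$ and $V \sem B$ cross (since $A \sem B = A \cap (V \sem B) \ne \empt$, and $A \cup (V \sem B) \ne V$ because $B \sem A \ne \empt$), so crossing supermodularity gives $f(A) + f(V \sem B) \ge f(A \sem B) + f(V \sem (B \sem A))$, and symmetry of $f$ rewrites this as the co-supermodular inequality $f(A) + f(B) \ge f(A \sem B) + f(B \sem A)$. The only potential hurdle is the atom-based edge bookkeeping, but it is a mechanical check—one just needs to track, for each of the six possible edge types $X_iX_j$, which of the six cuts $d_I(A), d_I(B), d_I(A \cap B), d_I(A \cup B), d_I(A \sem B), d_I(B \sem A)$ it contributes to.
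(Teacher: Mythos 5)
Your proof is correct and takes essentially the same approach as the paper's: both establish that $\FF$ is uncrossable by tracking which of the four atoms the witness edges in $\delta_I(A)$ and $\delta_I(B)$ touch, then invoke crossing supermodularity (plus symmetry of $f$ for the co-cross case) to obtain the required inequality. The paper compresses the edge-bookkeeping into a figure-based case enumeration while you argue by contradiction from a single zero cut, but the underlying computation is identical.
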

\begin{proof}
Let $A,B \in \FF$. Then there is an edge in $\de_I(A)$ and an edge in $\de_I(B)$. 
Fig.~\ref{f:gk} depicts all possible locations of (at least one of) such edges. 
In cases (a,b,c) we have $A \cap B, A \cup B \in \FF$ and $A,B$ cross -- 
hence $f$ satisfies the supermodular inequality.
In cases (d,e,f) we have $A \sem B, B \sem A \in \FF$ and $A,B$ co-cross -- 
hence $f$ satisfies the co-supermodular inequality.
\end{proof}

\begin{figure}
\centering 
\includegraphics{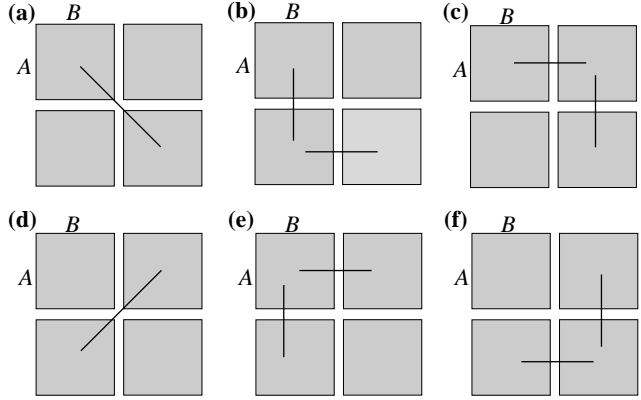}
\caption{Illustration to the proof of Lemma~\ref{l:f}.}
\label{f:gk}
\end{figure}

\noindent
{\bf Remark:} One can show that if $f$ is symmetric crossing supermodular and $\FF$ is {\bf uncrossable} 
($A \cap B,A \cup B \in \FF$ or $A \sem B,B \sem A \in \FF$ whenever $A,B \in \FF$) and $\empt,V \notin \FF$, 
then $f$ is weakly $\FF$-supermodular. 
The proof essentially shows that  $\FF=\{A \subs V:d_I (A) \geq 1\}$ is uncrossable.
Alternatively, we may (roughly) say that if $f$ satisfies {\em both} the supermodular and the co-supermodular inequality 
(whenever appropriate sets are not $\empt,V$), then $f$ is weakly $\FF$-supermodular. 

\medskip

\subsection{\emph{k}-EFTS 2-Approximation}
Now we apply the weakly $\FF$-supermodular concept to the \emph{k}-EFTS problem. Note that \emph{k}-EFTS is a particular case of the following meta-problem.

\begin{center} \fbox{\begin{minipage}{0.965\textwidth} \noindent
\underline{\SFC} \\
{\em Input:}  \  \ A graph $G=(V,E)$ with edge costs, set function $f$ defined on a family $\FF \subs 2^V$. \\
{\em Output:}   A min-cost subgraph $H$ of $G$ such that $d_H(A) \geq f(A)$.  
\end{minipage}}\end{center} 

Let $\FF = 2^V \sem\{\empt,V\}$. It is easy to see via Lemma~\ref{l:cover} that for {\sc $k$-EFTS} in particular, we must cover the set function $f(A) = \min\{k,d_G(A)\}$ for $A \in \FF$. 
If we already picked a set $F$ of edges into our solution $H$, then we need to cover the residual function $\min\{d_G(A),k\}-d_F(A)$.
We pick all the {\bf forced edges} -- edges that belong to cuts of $G$ with at most $k$ edges. 
During the iterative rounding algorithm we might pick additional edges. 
So let $F \subs E$ be a set of edges that includes all forced edges. 
Cuts $\de_E(A)$ with $\de_E(A)  \subs F$ cannot and need not be covered, 
hence the function we need to cover is the restriction 
of $f(A)=k-d_F(A)$ (for $A \in 2^V \sem \{\empt,V\}$) to the family $\FF=\{A \subs V:d_{E \sem F} (A) \geq 1\}$.
In the setting of Lemma~\ref{l:f}, we have here $I=E \sem F$. 
The function $k-d_F(A)$ is symmetric crossing supermodular, since $d_F(A)$ is symmetric and submodular. 
Hence by Lemma~\ref{l:f} the function we need to cover is weakly $\FF$-supermodular. 

\begin{theorem}
The $k$-EFTS problem admits approximation ratio $2$ for arbitrary costs 
and $1+4/k$ for unit costs.
\end{theorem}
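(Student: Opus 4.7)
The plan is to invoke Jain's iterative rounding framework directly on the residual \SFC{} instance constructed in the paragraphs preceding the theorem. After including every forced edge into an initial set $F$ (these edges are in every feasible solution, so they cost us nothing in the approximation ratio), the remaining requirement is to cover $f(A)=k-d_F(A)$ on the family $\FF=\{A \subs V:d_{E\sem F}(A)\geq 1\}$. By Lemma~\ref{l:f}, this restricted function is weakly $\FF$-supermodular, so Jain's theorem~\cite{Jain01} yields a polynomial-time $2$-approximation for the LP relaxation via iterative rounding. For this to be efficient, we only need to separate the LP, which reduces to a polynomial number of min-cut computations since $d_F$ and $d_{E\sem F}$ are submodular; hence the $2$-approximation for arbitrary costs follows immediately.

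For the $1+4/k$ bound in the unit-cost case, the plan is to piggyback on the sharper integrality-gap analysis of iterative rounding developed for $k$-edge-connected spanning subgraph-type problems with unit costs (e.g. the arguments in~\cite{GGTW09}). The essential input to this analysis is that the requirement function is weakly supermodular, is bounded above by $k$, and that we are in a unit-cost setting; these are all satisfied by our residual \SFC{} instance. In particular, the standard argument shows that every basic optimal LP solution must contain an edge of fractional value at least $k/(k+4)$, which then yields the $1+4/k$ integrality gap upper bound and a matching approximation.

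The main obstacle is verifying that both Jain's theorem and the unit-cost refinement apply cleanly when the active family is the restricted $\FF=\{A:d_{E\sem F}(A)\geq 1\}$ rather than all of $2^V\sem\{\empt,V\}$. The relevant observation is that cuts outside $\FF$ are already saturated by $F$ (no additional edge can help them), so the iterative rounding algorithm never needs to address them; its LP naturally tracks precisely the cuts in $\FF$. With this alignment in place, the $2$-approximation and the $1+4/k$-approximation both follow from the respective black-box analyses applied to the weakly $\FF$-supermodular function produced by Lemma~\ref{l:f}, without any further case analysis of the $k$-EFTS cut structure.
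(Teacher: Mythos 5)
Your approach is essentially the same as the paper's: include the forced edges, reduce to covering the residual function $f(A)=k-d_F(A)$ on the family $\FF=\{A:d_{E\sem F}(A)\geq 1\}$, invoke Lemma~\ref{l:f} for weak $\FF$-supermodularity, and then apply iterative rounding as a black box. The $2$-approximation via Jain is exactly what the paper does.

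Two small points on the unit-cost bound. First, the paper gets the $1+4/k$ ratio from Gabow--Gallagher~\cite{GG12}, not~\cite{GGTW09}; the latter is a different (earlier) iterative-rounding paper and is not the source of the $1+4/k$ guarantee, so the citation should be corrected. Second, your sketched justification---that every basic optimal LP solution contains an edge of value at least $k/(k+4)$---is not how the Gabow--Gallagher analysis actually works; their bound comes from a more delicate fractional-token and counting argument exploiting the unit-cost structure, not from a uniform per-iteration rounding threshold. Since you are ultimately invoking the result as a black box (as the paper also does), this does not break the proof, but the sketch as written would mislead a reader into thinking a much simpler argument suffices.
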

\begin{proof}
The set function we need to cover is the restriction of the function $f(A)=k-d_F(A)$
to the set family $\FF=\{A \subs V:d_{E \sem F} (A) \geq 1\}$.
By Lemma~\ref{l:f}, we know that $f$ is weakly $\FF$-supermodular. 
The rest of the proof is as in~\cite{ap22}: one can solve LPs in polynomial time and thus apply iterative rounding. 
In the case of arbitrary costs we can use Jain's technique~\cite{Jain01}, while in the case of unit costs Gabow-Gallagher~\cite{GG12}. 
\end{proof}

\end{document}